	\newtheorem{definition}{Definition}
	\newtheorem{theorem}{Theorem}
	\newtheorem{remark}{Remark}
	\newtheorem{corollary}{Corollary}
	\newtheorem{lemma}{Lemma}
	\newtheorem{proposition}{Proposition}
\begin{document}
\graphicspath{{Plots/}}
\begin{frontmatter}
%\runtitle{Insert a suggested running title}  % Running title for regular 
                                              % papers but only if the title  
                                              % is over 5 words. Running title 
                                              % is not shown in output.

\title{On Attack Detection and Identification for the Cyber-Physical System using Lifted System Model
} % Title, preferably not more 
                                                % than 10 words.

\thanks[footnoteinfo]{This research is funded by the Secure Systems Research Center (SSRC) at Technology Innovation Institute (TII), UAE. The authors are grateful to Dr. Shreekant (Ticky) Thakkar and his team members at the SSRC for their valuable comments and support.}
\author[]{Dawei Sun}\ead{sun289@purdue.edu},    % Add the 
                                                % e-mail address 
\author[]{Minhyun Cho}\ead{cho515@purdue.edu}, and
\author[]{Inseok Hwang}\ead{ihwang@purdue.edu}  % (ead) as shown
%
%\address{Purdue University, West Lafayette, IN, USA} %Pleasesupply                                              
%\address[Rome]{Senate House, Rome}             % full addresses
%\address[Baiae]{The White House, Baiae}        % here.

\begin{keyword}                           % Five to ten keywords,  
Cyber-physical Systems; Attack Detectability; Attack Identifiability; Geometric Control Theory          % chosen from the IFAC 
\end{keyword}                             % keyword list or with the 
                                          % help of the Automatica 
                                          % keyword wizard

\begin{abstract}                          % Abstract of not more than 200 words.
Motivated by the safety and security issues related to cyber-physical systems with potentially multi-rate, delayed, and nonuniformly sampled measurements, we investigate the attack detection and identification using the lifted system model in this paper. Attack detectability and identifiability based on the lifted system model are formally defined and rigorously characterized in a novel approach. The method of checking detectability is discussed, and a residual design problem for attack detection is formulated in a general way. For attack identification, we define and characterize it by generalizing the concept of mode discernibility for switched systems, and a method for identifying the attack is discussed based on the theoretical analysis. An illustrative example of an unmanned aircraft system (UAS) is provided to validate the main results. 
\end{abstract}

\end{frontmatter}

\section{Introduction} \label{sec:intro}
Cyber-physical systems (CPS) can be operated effectively by interweaving computation and communication resources with the physical process \cite{pasqualetti2015control}, but the increasing complexity and the close interaction between the logical and physical components make the CPS vulnerable to cyber layer threats \cite{fillatre2017security}. For example, the GPS receiver is vulnerable to jamming \cite{grant2009gps}, meaconing \cite{zhu2022novel}, and spoofing attacks \cite{siamak2020dynamic}, while the communication channel without proper encryption and authentication can be exploited for man-in-the-middle attacks \cite{conti2016survey} or denial-of-service attacks \cite{pelechrinis2010denial}, which can cause detrimental impact in the physical world. Different from the problem of fault detection and isolation (FDI), the diagnosis and containment of cyber-layer attacks are more challenging, considering that the attack, by using the information from eavesdropping or side-channel attacks, can be elaborated to penetrate cyber-layer protection and deceive the physical-layer monitoring system. It induces the concept of \textit{stealthy attack} \cite{kwon2017reachability} and motivates the study of attack detectability and identifiability issues \cite{pasqualetti2013attack}. 

The stealthy attack has been investigated from the control-theoretic perspective. For example, the vulnerability of unmanned aircraft systems (UAS) to stealthy attack has been discussed in \cite{kwon2014analysis}; the stealthy attack design problem has been studied in \cite{li2020worst,zhang2019optimal}; in \cite{bai2017data}, it is pointed out that the stealthiness of attack can be independent of the use of monitoring system; in \cite{mo2013detecting}, the stealthiness of replay attack is investigated; in \cite{park2019stealthy}, it is shown that the stealthiness of the attack can be designed to be robust to modeling uncertainties; last but not least, the authors in \cite{sui2020vulnerability} provide a decent definition and characterization of the vulnerability of dynamic systems to stealthy attacks. For preventing stealthy attacks, cryptography is typically considered to protect communication channels, but heavy-weight encryption techniques may not be applicable to systems with limited communication bandwidth or computational power \cite{jovanov2019relaxing,khazraei2020attack,miao2016coding,shang2020optimal}. Recently, there are also some control-theoretic approaches proposed as complement for preventing falsified sensing information from deceiving the system, e.g., additive/multiplicative/switched watermarking schemes \cite{mo2015physical,ferrari2017detection,ferrari2020switching}, and bating/moving target approaches \cite{tian2019moving,weerakkody2016moving,griffioen2020moving}, which introduce artificial uncertainties so that the attacker cannot have sufficient information to design stealthy attacks (similar ideas have also been considered in \cite{mao2020novel,ding2022application}). Unfortunately, these approaches still lack extensive theoretical analysis, and their effectiveness relies on the assumption that the attacker does not know these approaches are utilized, which may not be the case when the attacker can perform eavesdropping or side-channel attacks.  

Note that both the aforementioned information-security and control-theoretic approaches have their advantages and limitations. To firmly assure the system safety and security, \textit{analytic redundancies} \cite{hwang2009survey}, a concept raised in FDI, are necessary, since all model-based detection algorithms are essentially comparing the observed behavior of the system with that of the model which describes the nominal behavior. A widely considered way to realize the theoretical redundancies is hardware redundancies, e.g., using redundant sensors, as it is well-known that the data injection attack to sensing information can be detected if it is ``sparse” \cite{pasqualetti2013attack}. However, in practical implementation, blindly adding redundant sensors may not be effective. Similar sensors can be correlated so that they might be affected by the attacks simultaneously \cite{wood2002denial}, which do not contribute to analytic redundancies. Adding too many redundant sensors can also increase the complexity of the system as well as the running cost, which degenerates the system performance \cite{vitus2012efficient}. Furthermore, the additional sensors for monitoring may have different features: they can have different sampling rates, delays, accuracies, and/or vulnerabilities to attacks. It is nontrivial to investigate the effective way to fuse the sensing information with different characteristics for attack detection and identification.  

In this paper, we non-conservatively define and rigorously characterize attack detectability and identifiability for the CPS with potentially multi-rate, delayed, and/or nonuniformly sampled measurements, which has important contributions for analyzing the CPS's analytic redundancies. For such a CPS, the lifting techniques \cite{meyer1990new,zhang2002fault,chow1984analytical} have been widely considered for state estimation \cite{li2008kalman,shen2022multi} and fault detection problems \cite{li2006subspace,zhang2016fault,izadi2007analysis,zhong2007observer}. Although the lifted system can preserve the same input-output behavior of the original system, the state of the lifted system may not have the same practical meanings, so the existing works (e.g., \cite{sui2020vulnerability}) on the study of system resilience cannot be applied by trivial extensions. In this paper, we define and characterize attack detectability by using the lifting technique and generalizing the analysis in \cite{sui2020vulnerability}, which is followed by discussions on the methods of checking or assuring detectability and the residual design problem for attack detection. For attack identification, motivated by the multiple model approaches \cite{menke1995sensor,shima2002efficient,tanwani2010inversion} (i.e., the compromised system behavior subject to each type of attack can be associated with a mode of the switched system), we define and characterize attack identifiability by relaxing the concept of mode discernibility of switched systems subject to unknown input \cite{gomez2011observability,sun2021controlled,sun2022controlled,fiore2017secure,boukhobza2011observability}, which is the property for whether the mode can be correctly recovered from the measurement. Based on it, a framework of attack identification method with assured properties can be derived.

The rest of the paper is organized as follows. In Section~\ref{sec:PF}, the definitions of attack detectability and identifiability based on the lifted system model are provided. Some lemmas for deriving the main results are presented in Section~\ref{sec:Prelim}. The main results on attack detection and identification are derived and discussed in Sections~\ref{sec:Det} and \ref{sec:Idt}, respectively. In Section \ref{sec:NE}, an UAS example is presented to demonstrate the theoretical results. Finally, Section~\ref{sec:Conclusion} concludes this paper.

\section{Problem Formulation} \label{sec:PF}
In this section, we revisit the lifting system technique for describing the input-to-output behavior of the cyber-physical system (CPS) with potentially multi-rate, delayed, and/or nonuniformly sampled measurement, and then formally define attack detectability and identifiability based on it.
\subsection{Lifting Technique for Cyber-Physical Systems}
In this paper, we consider the discrete-time linear time-invariant system model to describe the nominal input-to-state behavior of the CPS:
\begin{equation} \label{eq:NominalSys}
\hat x_{t+1} = \hat A \hat x_t + \hat B^u \hat u_t,
\end{equation}
where $\hat x_t \in \mathbb{R}^n$ is the state vector of the nominal CPS at time $t$ which may include the states with physical significance as well as the observer or dynamic controller states, and $\hat u_t \in \mathbb{R}^{m^u}$ is the nominal reference input to the CPS. As the nominal behavior of a CPS should be stable, the eigenvalues of $\hat A$ are assumed to be located within the unit disk of the complex plane. For the multi-rate measurements with potential delay and nonuniform sampling, it is assumed that there is a \textit{frame period} \cite{li2006subspace,li2008kalman,zhong2007observer}, $T$, such that for each sensor $i \in \{1,..., I\}$, there are $M_i$ samples obtained by the end of each frame $\{kT,...,(k+1)T-1\}$. These samples are related to the states and the input within that frame: 
\begin{equation}
\left\{
\begin{aligned}
\hat y^i_{kT+t^i_1} &= \hat C_i \hat x_{kT+t^i_1} + \hat D^u_i  \hat u_{kT+t^i_1} \\
\hat y^i_{kT+t^i_2} &= \hat C_i \hat x_{kT+t^i_2} + \hat D^u_i \hat u_{kT+t^i_2} \\
&\vdots \\
\hat y^i_{kT+t^i_{M_i}} &= \hat C_i \hat x_{kT+t^i_{M_i}} + \hat D^u_i \hat u_{kT+t^i_{M_i}} \\
\end{aligned}
\right.,
\end{equation}
which is illustrated in Fig.~\ref{fig:FramePeriod}. 

\begin{figure}[htb]% order of placement preference: here, top, bottom
	\centering
	\includegraphics[trim = 0mm 0mm 0mm 0mm, clip, scale=0.25]{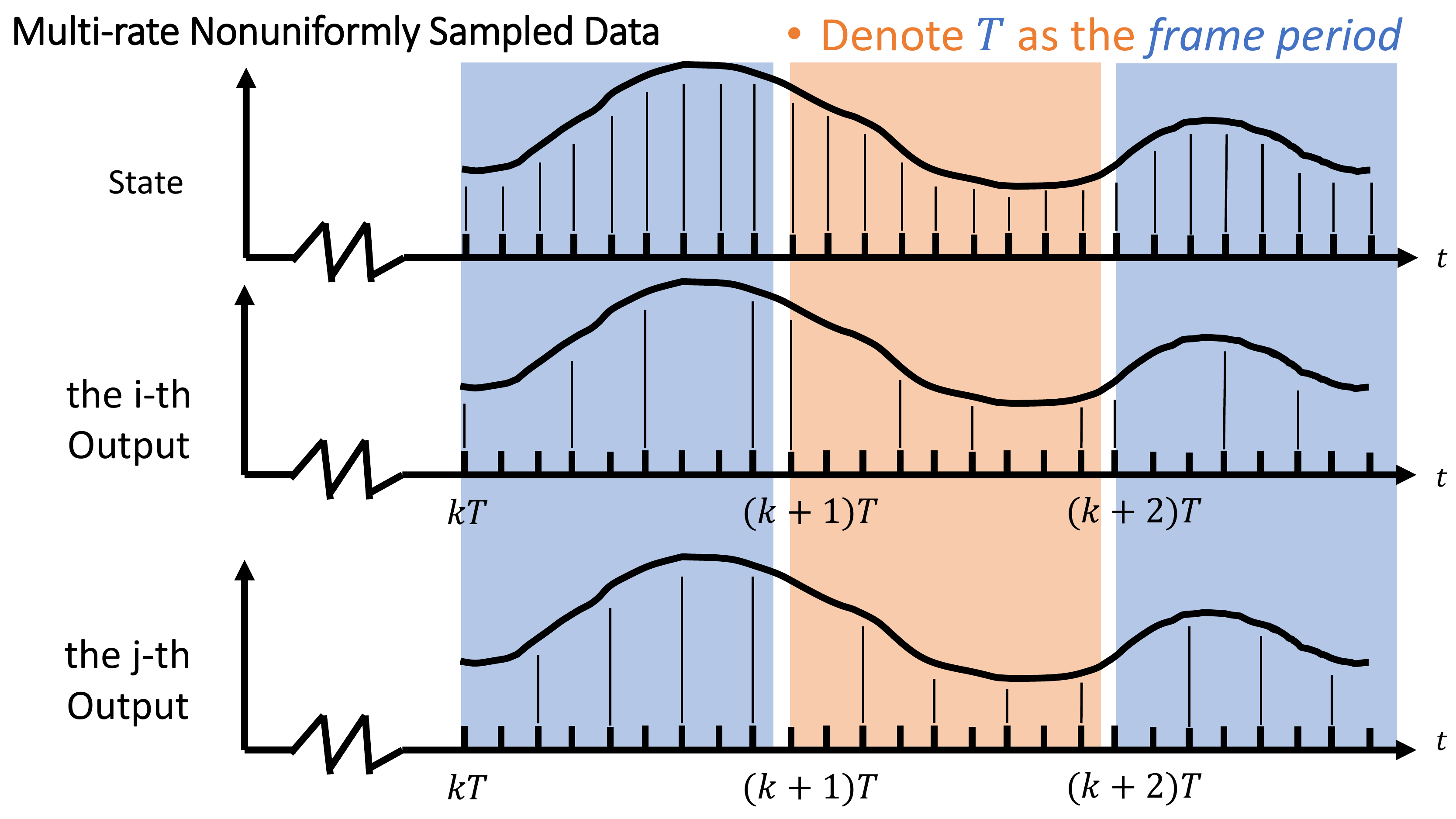}
	\caption{Illustration of Frame Period}
	\label{fig:FramePeriod}
\end{figure}

By introducing the following notations:
\begin{equation}
\begin{aligned}
\hat x_k &= \hat x_{kT}, \\
\hat u_k &= \begin{bmatrix}
\hat u_{kT}{}', \dots, \hat u_{(k+1)T-1}{}'
\end{bmatrix}', \\
\hat y_k &=	\begin{bmatrix}
\hat y^1_{kT+t^1_1} \dots \hat y^1_{kT+t^1_{M_1}} \dots \hat y^I_{kT+t^I_1} \dots \hat y^I_{kT+t^I_{M_I}}
\end{bmatrix}',
\end{aligned}	
\end{equation}
where $'$ denotes the transpose, the nominal behavior of the system can be described by the lifted system model:
\begin{equation} \label{eq:NomLiftedSys}
\hat \Sigma: \left\{
\begin{aligned}
\hat x_{k+1} &= A \hat x_k + B^u \hat u_k \\
\hat y_k &= C \hat x_k + D^u \hat u_k
\end{aligned}
\right. ,
\end{equation}
where $A$, $B^u$, $C$, $D^u$ matrices can be constructed using the standard approach given in \cite{li2006subspace,li2008kalman,zhong2007observer}: 
%\small
\begin{equation*} 
	\begin{aligned}
	A &= \hat A^T, &B^u = \begin{bsmallmatrix}
	\hat A^{T-1} \hat B^u &\hat A^{T-2} \hat B^u &\dots &\hat A \hat B^u &\hat B^u
	\end{bsmallmatrix},& \\
	C &= \begin{bsmallmatrix}
	\hat C_1 \hat A^{t^1_1} \\
	\vdots\\
	\hat C_I \hat A^{t^I_{M_I}}
	\end{bsmallmatrix}, &D^u = \begin{bsmallmatrix}
	\hat C_1 \hat A^{t^1_1-1} \hat B^u &\dots \,\ \hat D^u_1 &\dots &0\\
	\vdots &\ddots &\vdots  &\vdots \\
	\hat C_I \hat A^{t^I_{M_I}-1} \hat B^u &\dots &\hat D^u_I &\dots 0\\
	\end{bsmallmatrix},&
	\end{aligned}
\end{equation*}%\normalsize
where $T$ is the aforementioned frame period.

Different from the nominal system, the actual system states are subject to disturbance and noise, and it could be potentially compromised by the data injection type of attacks:
\begin{equation}
\begin{aligned}
	x^a_{t+1} = \hat A x^a_t + \hat B^u \hat u_t + \hat B^a_{q^a} a_t + \hat B^w w_t,
\end{aligned}
\end{equation}
where $x^a_t$ is the actual system state vector, $a_t$ is the false data injection signal, and $w_t$ represents the bounded disturbance and noise. Note that $q^a$ is the attack mode that takes values in $Q$, the set of candidate attack modes. $\hat B^a_{q^a}$ stands for the effect of attack mode $q^a$ to the system, while $\hat B^w$ is the channel from the disturbance and noise to the states. Note that the data injection attack can be realized by spoofing the sensors, so the measurements are potentially vulnerable to attacks:
\begin{equation}
\begin{aligned}
y^{a,i}_{kT+t^i_j} = \hat C_i x^a_{kT+t^i_j} &+ \hat D^u_i \hat u_{kT+t^i_j} \\
&+ \hat D^a_{i,q^a} a_{kT+t^i_j} +  \hat D^w_{i} w_{kT+t^i_j},\\
\end{aligned}
\end{equation}
where $y^{a,i}_{kT+t^i_j}$ is the falsified $j$-th measurement from the $i$-th sensor in the $k$-th frame, and matrices $\hat D^a_{i,q^a}$ and $\hat D^w_{i}$ represent how this measurement is affected by the attack associated with mode $q^a$ and noise, respectively. Similarly, by introducing the following notations:
\begin{equation}
\begin{aligned}
x^a_k &= x^{a}_{kT}, \\
a_k &= \begin{bmatrix}
a_{kT}{}', \dots, a_{(k+1)T-1}{}'
\end{bmatrix}', \\
w_k &= \begin{bmatrix}
w_{kT}{}', \dots, w_{(k+1)T-1}{}'
\end{bmatrix}',\\
y^a_k &= \begin{bmatrix}
y^{a,1}_{kT+t^1_1} \dots \hat y^{a,1}_{kT+t^1_{M_1}} \dots \hat y^{a,I}_{kT+t^I_1} \dots \hat y^{a,I}_{kT+t^I_{M_I}}
\end{bmatrix}',
\end{aligned}	
\end{equation}
the actual system subject to disturbance, noise, and attack can be described by the lifted system $\Sigma^a$:
\begin{equation} \label{eq:ActLiftedSys}
\Sigma^a: \left\{
\begin{aligned}
x^a_{k+1} &= A x^a_k + B^u \hat u_k + B^a_{q^a} a_k + B^w w_k\\
y^a_k &= C x^a_k + D^u \hat u_k + D^a_{q^a} a_k + D^w w_k
\end{aligned}
\right. ,
\end{equation}
where $B^a_{q^a}$, $B^w$, $D^a_{q^a}$, and $D^w$ can be constructed using the similar approach for constructing $B^u$ and $D^u$.

\subsection{Attack Detectability and Identifiability}
With the nominal system $\hat \Sigma$ in (\ref{eq:NomLiftedSys}) and the actual system $\Sigma^a$ in (\ref{eq:ActLiftedSys}), we can further define
\begin{equation}
	x_k = x^a_k - \hat x_k, \,\ y_k = y^a_k - \hat y_k.
\end{equation}
Note that $x_k$ reveals how the state vector of $\Sigma^a$ deviates from that of $\hat \Sigma$, and $y_k$ represents the difference between the observed measurements from the actual system and the expected measurements from the nominal system. It should be remarked that $y_k$ is closely related to the attack detection property: if $||y_k||$ is identically zero or sufficiently small, there is no way to distinguish the impact of the attack or effect of the disturbance or noise. On the other hand, if $||y_k||$ becomes sufficiently large, then the measured output significantly deviates from the expected output from the nominal system so that the monitoring system can trigger the alarm. For this reason, $||y_k||$ can stand for the \textit{stealthiness} of the attack.

Meanwhile, it could be improper to just use $||x_k||$ to measure the impact of the attack. First, $x_k$ only represents the deviation of the system state vector at certain time instances. It is possible that $||x_k|| = ||x^a_{kT} - \hat x_{kT}||$ is small for all $k$, but $||x^a_t - \hat x_t||$ becomes large at some $t$, which has been demonstrated in \cite{kim2016zero}. Second, it is not necessary that all elements of $x_k$ are safety-critical. As $x_k$ may include the dummy states or states that do not have physical significance, whether those states deviate from the nominal states does not necessarily reflect the \textit{severity} of the attack. For these reasons, we introduce a new variable $z_t$ to extract and scale the deviation of safety-critical states:
\begin{equation}
\begin{aligned}
		z_t &= \hat E x_t, \\
		z_k &= \begin{bmatrix}
		z_{kT}{}', \dots, z_{(k+1)T-1}{}'
		\end{bmatrix}'. \\
\end{aligned}
\end{equation}
Using $\hat \Sigma$ and $\Sigma^a$, we have the following system which characterizes the relationships between $x_k$, $y_k$ and $z_k$:
\begin{equation} \label{eq:DeltaLiftedSys}
\Sigma^\Delta: \left\{
\begin{aligned}
x_{k+1} &= A x_k + B^a_{q^a} a_k + B^w w_k\\
y_k &= C x_k + D^a_{q^a} a_k + D^w w_k \\
z_k &= E x_k + F^a_{q^a} a_k + F^w w_k
\end{aligned}
\right. ,
\end{equation}
where $E$, $F^a$ and $F^w$ can be constructed using the similar approach for constructing $C$, $D^a$ and $D^w$. For convenience, we use the following notations to denote the input-to-state and input-to-ouput behaviors of $\Sigma^\Delta$:
\begin{equation}\label{eq:ISIOmaps}
\begin{aligned}
	x_k &= \mathbf{x}^{q^a}_k(x_0, \{a_k\}_{k=0}^\infty, \{w_k\}_{k=0}^\infty), \\
%		&\triangleq A^k x_0 + \Sigma_{j=0}^{k-1}A^{k-1-j}B^a_{q^a} a_j + \Sigma_{j=0}^{k-1}A^{k-1-j}B^w w_j \\
	y_k &= \mathbf{y}^{q^a}_k(x_0, \{a_k\}_{k=0}^\infty, \{w_k\}_{k=0}^\infty), \\
	z_k &= \mathbf{z}^{q^a}_k(x_0, \{a_k\}_{k=0}^\infty, \{w_k\}_{k=0}^\infty).
\end{aligned}
\end{equation}
The explicit expressions of these mapping are well-known and thus omitted. Note that for each $K$, $x_K$, $y_K$, and $z_K$ do not depend on $\{a_k\}_{k = K+1}^\infty$ by causality, so $\{a_k\}_{k=0}^\infty$ in (\ref{eq:ISIOmaps}) can be either a sequence or replaced with a finite truncation of it without causing confusions. 

With the above notations, we now formally define attack detectability:
\begin{definition}[Detectability] \label{def:Detectability}
	The attack mode $q^a \in Q$ is \textbf{detectable} if there are $\delta > 0$ and $\epsilon > 0$ such that 
	\begin{equation}
		||\mathbf{z}^{q^a}_{k_z}(0, \{a_k\}_{k=0}^\infty, 0)|| \geq \delta \,\ \text{for some } k_z
	\end{equation}
	implies 
	\begin{equation}
		||\mathbf{y}^{q^a}_{k_y}(0, \{a_k\}_{k=0}^\infty, 0)|| \geq \epsilon \,\ \text{for some } k_y.
	\end{equation}
	If the attack mode $q^a$ is not detectable, we say the CPS is \textbf{vulnerable} to attack mode $q^a$.
\end{definition}
In words, the attack mode $q^a$ is called detectable if there is no severe attack preventing detection under this mode: whenever the attack causes a severe impact on the system, i.e., $||z_k||$ becomes large, the attack can be revealed by comparing the observed output and the output from the nominal system. Equivalently, for the system that is not vulnerable to the attack mode $q^a$, if the attack remains ``$\epsilon$-stealthy" (i.e., $||y_k||\leq \epsilon$ for all $k$), $||z_k||$ must be bounded by $\delta$. Indeed, detectability defined here is sufficient and necessary for detecting any severe attacks, though $x_0$ and $\{w_k\}_k$ are set to zeros in the definition, which will be discussed in Section \ref{sec:Det}.
\\
\begin{remark}
	%Compare this definition w. Sui's paper/Novelty of our definition
	A similar way for defining detectability or vulnerability has been considered in \cite{sui2020vulnerability}, which provides so far the most general result to the best of our knowledge. However, the severity of the attack in that work is related to the deviation of the state vector, which could be improper when the lifted system is considered, as discussed. Our approach for characterizing detectability is different from the approach in \cite{sui2020vulnerability}, and our results are more general. 
\end{remark}

For attack identifiability, we provide the formal definition as:
\begin{definition}[Identifiability] \label{def:Identifiability}
	A pair of modes $p \neq q \in Q$ are called \textbf{discernible}, if there are $\delta > 0$ and $\epsilon > 0$ such that either
	\begin{equation}
	||\mathbf{z}^{p}_{k_z}(x^p_0, \{a^p_k\}_{k=0}^\infty, 0)|| \geq \delta, 
	\end{equation}
	or 
	\begin{equation}
	||\mathbf{z}^{q}_{k_z}(x^q_0, \{a^q_k\}_{k=0}^\infty, 0)|| \geq \delta, 
	\end{equation}
	for some $k_z$ implies
	\begin{equation}
	||\mathbf{y}^{p}_{k_y}(x^p_0, \{a^p_k\}_{k=0}^\infty, 0) - \mathbf{y}^{q}_{k_y}(x^q_0, \{a^q_k\}_{k=0}^\infty, 0)|| \geq \epsilon
	\end{equation}
	for some $k_y$. If any pair of modes $p \neq q$ in $Q$ are discernible, the set of attack modes, $Q$, is called \textbf{identifiable}. 
\end{definition}
In words, discernibility means that the two modes result in sufficiently different output $\{y_k\}_k$ whenever one of them causes sufficiently large impact on the system. If there is a pair of modes that are indiscernible (i.e., not discernible), a severe impact might be caused while there is no way to distinguish these two modes, which means that an attack can be designed to cause severe impact on the system while preventing the attack identification. The definition as proposed is sufficient and necessary to identify the attack mode whenever the attack can potentially cause severe impact on the system. Although $\{w_k\}_k$ is set to zero in the definition, this definition is applicable for the cases where $\{w_k\}_k$ is nonzero but bounded, which will be discussed in Section \ref{sec:Idt}.  
\\
\begin{remark}
	In the proposed definition, we relate attack identifiability with the concept of mode discernibility (or mode distinguishability, mode observability) of the switched system with unknown input \cite{gomez2011observability,sun2021controlled,sun2022controlled}. However, for mode discernibility of the switched system with unknown input, two modes are called discernible if the outputs from the two modes are different no matter what unknown inputs are injected into the system, which is rather restrictive in general. Our definition relaxes the requirement in the sense that two modes do not have to behave differently unless a severe impact is potentially caused by the attack, which has not been considered in existing works. 
\end{remark}

In the rest of the paper, we will find necessary and sufficient conditions for detectability and identifiability, which will be followed by the discussions on the detection and identification schemes. 

\section{Preliminaries} \label{sec:Prelim}
The study of attack detectability, revealed by its definition, is closely related to the \textit{zero dynamics} \cite{de2001geometric,isidori2013zero}: vulnerability means  $\{z_k\}_k$ can be made arbitrarily large by some $\{a_k\}_k$ while $\{y_k\}_k$ can remain zero or arbitrarily small. In addition, by introducing the augmented system, the discernibility can also be related to output-nulling problems \cite{de2011location,gomez2011observability}. For this reason, we revisit the concept and notation of the maximum output-nulling subspace so that detectability can be strictly characterized. 

Let us consider system $\Sigma^\Delta$ in (\ref{eq:DeltaLiftedSys}) with a given fixed $q^a$. The maximal $y$-nulling subspace of this system is denoted by $\mathcal{V}$, which is well-known to be the maximal subspace satisfying that there is a matrix $M$ such that
\begin{equation}\label{eq:VisInv}
\begin{aligned}
(A+B^a_{q^a}M)\mathcal{V} &\subset \mathcal{V}, \\
(C+D^a_{q^a}M)\mathcal{V} &= \{0\}.
\end{aligned}
\end{equation}
In general, the choice of $M$ is not unique. Suppose $N$ is a matrix satisfying
\begin{equation}\label{eq:invDOF}
%\text{Im}\{B^a_{q^a}N\} = B^a_{q^a} \text{ker}\{D^a_{q^a}\} \cap \mathcal{V}, \text{Im}\{N\} \subset \text{ker}\{D^a_{q^a}\}
\begin{aligned}
\text{Im}\{N\} &\subset \text{ker}\{D^a_{q^a}\}, \\
\text{Im}\{B^a_{q^a}N\} &= B^a_{q^a} \text{ker}\{D^a_{q^a}\} \cap \mathcal{V},
\end{aligned}
\end{equation}
where we use $\text{Im}$ and $\text{ker}$ to denote the column space and null space, respectively. Then, for any $\bar M$ taking the form $M+NK$ ($K$ is arbitrary), it also satisfies (\ref{eq:VisInv}). For convenience, we say $(M,N)$ is a \textit{friend} of $\mathcal{V}$, denoted by $(M,N) \in \mathcal{F}(\mathcal{V})$, if $M$ and $N$ satisfy (\ref{eq:VisInv}) and (\ref{eq:invDOF}), respectively. Note that a state is contained in $\mathcal{V}$ if and only if starting from this state, the output $y$ can be made identically zero by some input $\{a_k\}_k$. For the rest of this paper, we assume $\mathcal{V}$ is not trivially $\{0\}$ without loss of generality.

The following lemma presents the relationship between any two pairs of friends of $\mathcal{V}$, which can be considered as a way to parameterize $\mathcal{F}(\mathcal{V})$.
\\
\begin{lemma} \label{lm:Friends}
	Suppose $(\hat M, \hat N) \in \mathcal{F}(\mathcal{V})$ and $(\bar M, \bar N) \in \mathcal{F}(\mathcal{V})$. Then, for any $V$ whose columns span $\mathcal{V}$,
	\begin{equation}
	B^a_{q^a}(\hat M - \bar M)V = B^a_{q^a} \hat{N} \hat{K}
	\end{equation}
	for some $\hat K$ with the proper dimension, and 
	\begin{equation}
	(\hat M - \bar M)V = \hat N \hat K + \hat H_M
	\end{equation}
	for some $\hat H_M$ so that $\text{Im}\{\hat H_M\} \in \text{ker}\{B^a_{q^a}\}\cap \text{ker}\{D^a_{q^a}\}$. Besides, we have
	\begin{equation}
	\bar N = \hat N \hat L + \hat{H}_N
	\end{equation}
	for some matrix $\hat L$ with the proper dimension and for some $\hat H_N$ satisfying $\text{Im}\{\hat H_N\} \in \text{ker}\{B^a_{q^a}\}\cap \text{ker}\{D^a_{q^a}\}$.
\end{lemma}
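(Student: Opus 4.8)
The plan is to derive all three identities directly from the two pairs of defining relations for a friend, (\ref{eq:VisInv}) and (\ref{eq:invDOF}), working entirely with subspace containments. Throughout I write $B := B^a_{q^a}$ and $D := D^a_{q^a}$, and I fix a matrix $V$ whose columns span $\mathcal{V}$. The starting observation is that both $\hat M$ and $\bar M$ satisfy (\ref{eq:VisInv}). Subtracting the two output-nulling equations $(C+D\hat M)\mathcal{V}=\{0\}$ and $(C+D\bar M)\mathcal{V}=\{0\}$ cancels $C$ and leaves $D(\hat M-\bar M)V = 0$, i.e., every column of $(\hat M-\bar M)V$ lies in $\text{ker}\{D\}$. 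Subtracting the two invariance relations $(A+B\hat M)\mathcal{V}\subset\mathcal{V}$ and $(A+B\bar M)\mathcal{V}\subset\mathcal{V}$ cancels $A$ and leaves $\text{Im}\{B(\hat M-\bar M)V\}\subset\mathcal{V}$. Combining the two, each column $w$ of $(\hat M-\bar M)V$ obeys $w\in\text{ker}\{D\}$ and $Bw\in\mathcal{V}$, hence $Bw\in B\,\text{ker}\{D\}\cap\mathcal{V}$. By the second friend relation in (\ref{eq:invDOF}) this intersection equals $\text{Im}\{B\hat N\}$, so every column of $B(\hat M-\bar M)V$ factors through $B\hat N$; collecting the columns gives $B(\hat M-\bar M)V = B\hat N\hat K$ for a suitable $\hat K$, which is the first identity.

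For the second identity I set $\hat H_M := (\hat M-\bar M)V - \hat N\hat K$ with the $\hat K$ just produced, so that $(\hat M-\bar M)V = \hat N\hat K + \hat H_M$ holds by construction and I only need to check $\text{Im}\{\hat H_M\}\subset\text{ker}\{B\}\cap\text{ker}\{D\}$. Applying $B$ gives $B\hat H_M = B(\hat M-\bar M)V - B\hat N\hat K = 0$ by the first identity, so $\text{Im}\{\hat H_M\}\subset\text{ker}\{B\}$; and since $\text{Im}\{(\hat M-\bar M)V\}\subset\text{ker}\{D\}$ (shown above) while $\text{Im}\{\hat N\}\subset\text{ker}\{D\}$ by the first relation in (\ref{eq:invDOF}), their difference also lies in $\text{ker}\{D\}$, giving the claimed containment. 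The third identity follows from the same factor-and-subtract device applied to the two $N$'s: the second relation in (\ref{eq:invDOF}) gives $\text{Im}\{B\bar N\} = B\,\text{ker}\{D\}\cap\mathcal{V} = \text{Im}\{B\hat N\}$, so every column of $B\bar N$ lies in $\text{Im}\{B\hat N\}$ and $B\bar N = B\hat N\hat L$ for some $\hat L$; then $\hat H_N := \bar N - \hat N\hat L$ satisfies $B\hat H_N = 0$, while $\text{Im}\{\bar N\}\subset\text{ker}\{D\}$ and $\text{Im}\{\hat N\}\subset\text{ker}\{D\}$ force $\text{Im}\{\hat H_N\}\subset\text{ker}\{D\}$, completing the proof.

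The arguments are short once the right containments are isolated, so there is no deep obstacle; the one step that requires care is the combination in the first identity, where it is essential to record that the columns of $(\hat M-\bar M)V$ lie in $\text{ker}\{D\}$ \emph{before} multiplying by $B$, since only then does the image of $B(\hat M-\bar M)V$ land in $B\,\text{ker}\{D\}\cap\mathcal{V}$ rather than merely in $\mathcal{V}$; matching this intersection to $\text{Im}\{B\hat N\}$ via (\ref{eq:invDOF}) is what makes the factorization through $\hat N$ possible. The remaining bookkeeping is simply verifying that each residual lands simultaneously in $\text{ker}\{B\}$ and $\text{ker}\{D\}$, which is immediate from the friend relations.
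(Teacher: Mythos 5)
Your proof is correct and follows essentially the same route as the paper's Appendix A: subtract the two friend relations to get $\mathrm{Im}\{(\hat M-\bar M)V\}\subset\ker\{D^a_{q^a}\}$ and $\mathrm{Im}\{B^a_{q^a}(\hat M-\bar M)V\}\subset\mathcal{V}$, identify the intersection with $\mathrm{Im}\{B^a_{q^a}\hat N\}$ via (\ref{eq:invDOF}), and check that each residual lands in $\ker\{B^a_{q^a}\}\cap\ker\{D^a_{q^a}\}$. You are slightly more explicit than the paper about why the two containments hold, but the argument is the same.
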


\begin{proof}
	See Appendix A.
\end{proof}

Note that $(M,N) \in \mathcal{F}(\mathcal{V})$ is useful to parameterize output-nulling input sequences as well as the output-nulling dynamics, which is shown in the following lemma:
\\
\begin{lemma}[$y$-nulling Dynamics] \label{lm:NullingPolicy}
	For system $\Sigma^\Delta$, suppose $(M,N)$ is a friend of $\mathcal{V}$. A state $x_0$ and a sequence $\{a_k\}_k$ satisfy
	\begin{equation}
	\mathbf{y}^{q^a}_k(x_0, \{a_k\}_k,0) = 0, \forall k \in \mathbb{N}
	\end{equation}
	if and only if $x_0 \in \mathcal{V}$ and for every $k$, $a_k$ can be written as 
	\begin{equation}\label{eq:nullingpolicy}
	a_k = M x_k + N \tilde{a}_k
	\end{equation}
	for some $\tilde{a}_k$, where $\{x_k\}_k$ is the state trajectory resulted from $x_0$ and $\{a_k\}_k$.
\end{lemma}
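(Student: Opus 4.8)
The plan is to prove the two implications separately: the ``if'' direction by a forward induction that propagates membership in $\mathcal{V}$ while nulling the output, and the ``only if'' direction by a pointwise membership argument built directly on the two friend conditions. Throughout I set $w = 0$, so that $\Sigma^\Delta$ reduces to $x_{k+1} = A x_k + B^a_{q^a} a_k$ and $y_k = C x_k + D^a_{q^a} a_k$.

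For sufficiency, I assume $x_0 \in \mathcal{V}$ and $a_k = M x_k + N \tilde a_k$ for all $k$, and show by induction on $k$ that $x_k \in \mathcal{V}$ and $y_k = 0$. The base case is the hypothesis $x_0 \in \mathcal{V}$. For the step, substituting the feedback form of $a_k$ into the output equation gives $y_k = (C + D^a_{q^a} M) x_k + D^a_{q^a} N \tilde a_k$; the first term vanishes because $x_k \in \mathcal{V}$ and $(C + D^a_{q^a} M)\mathcal{V} = \{0\}$, and the second vanishes because $\text{Im}\{N\} \subset \text{ker}\{D^a_{q^a}\}$. Substituting into the state equation gives $x_{k+1} = (A + B^a_{q^a} M) x_k + B^a_{q^a} N \tilde a_k$; here the first term lies in $\mathcal{V}$ by $(A + B^a_{q^a} M)\mathcal{V} \subset \mathcal{V}$, and the second lies in $\text{Im}\{B^a_{q^a} N\} = B^a_{q^a}\text{ker}\{D^a_{q^a}\} \cap \mathcal{V} \subset \mathcal{V}$, so $x_{k+1} \in \mathcal{V}$ and the induction closes.

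For necessity, I assume $y_k = 0$ for all $k$. I would first argue that every $x_k$ lies in $\mathcal{V}$: the shifted input sequence $\{a_{k+j}\}_{j \geq 0}$ nulls the output along the trajectory starting from $x_k$, so the characterization recalled just before the lemma (a state lies in $\mathcal{V}$ iff the output can be made identically zero from it) gives $x_k \in \mathcal{V}$ for every $k$. Next, fix $k$ and set $v_k := a_k - M x_k$. Subtracting the identities $y_k = 0$ and $(C + D^a_{q^a} M) x_k = 0$ yields $D^a_{q^a} v_k = y_k - (C + D^a_{q^a} M) x_k = 0$, so $v_k \in \text{ker}\{D^a_{q^a}\}$. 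From $x_{k+1} = (A + B^a_{q^a} M) x_k + B^a_{q^a} v_k$ together with $x_{k+1} \in \mathcal{V}$ and $(A + B^a_{q^a} M) x_k \in \mathcal{V}$, I get $B^a_{q^a} v_k \in \mathcal{V}$, hence $B^a_{q^a} v_k \in B^a_{q^a}\text{ker}\{D^a_{q^a}\} \cap \mathcal{V} = \text{Im}\{B^a_{q^a} N\}$. Choosing $\tilde a_k$ with $B^a_{q^a} v_k = B^a_{q^a} N \tilde a_k$ then yields the claimed feedback form $a_k = M x_k + N \tilde a_k$.

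The hard part is the final step of the necessity direction: upgrading the image-level identity $B^a_{q^a} v_k = B^a_{q^a} N \tilde a_k$ to the exact decomposition $v_k = N \tilde a_k$. The residual $v_k - N \tilde a_k$ lies in $\text{ker}\{B^a_{q^a}\} \cap \text{ker}\{D^a_{q^a}\}$ and therefore influences neither $y_k$ nor $x_{k+1}$. I would handle it by exploiting the freedom in selecting $N$ exposed by Lemma~\ref{lm:Friends}, namely that $N$ may be taken so that $\text{Im}\{N\}$ exhausts $\text{ker}\{D^a_{q^a}\} \cap (B^a_{q^a})^{-1}\mathcal{V}$ (the subspace to which $v_k$ belongs), or equivalently by absorbing the $\text{ker}\{B^a_{q^a}\} \cap \text{ker}\{D^a_{q^a}\}$ residual into $\tilde a_k$. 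Checking that this residual is genuinely inconsequential and that the resulting $\tilde a_k$ is well defined is where the argument requires the most care.
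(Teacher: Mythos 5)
Your route is the paper's route: sufficiency by propagating $x_k \in \mathcal{V}$ and $y_k = 0$ forward (the paper writes the closed-loop variation-of-constants formula for $x_k$ instead of an explicit induction, but the content is identical), and necessity by first placing every $x_k$ in $\mathcal{V}$ via the output-nulling characterization of $\mathcal{V}$, then deducing $D^a_{q^a}(a_k - Mx_k) = 0$ from the output equation and $B^a_{q^a}(a_k - Mx_k) \in \mathcal{V}$ from the state equation.

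The step you single out as the hard part is exactly the right one to worry about, and you should know that the paper's own Appendix~B does not resolve it: it passes directly from $B^a_{q^a}(a_k - Mx_k) \in B^a_{q^a}\text{ker}\{D^a_{q^a}\}\cap\mathcal{V} = \text{Im}\{B^a_{q^a}N\}$ to the conclusion $a_k - Mx_k = N\tilde a_k$ with no comment. Neither of your two proposed repairs closes this for the lemma as stated: $(M,N)$ is a \emph{given} friend, so you are not free to enlarge $\text{Im}\{N\}$; and the residual $h_k = a_k - Mx_k - N\tilde a_k$, which lies in $\text{ker}\{B^a_{q^a}\}\cap\text{ker}\{D^a_{q^a}\}$, need not lie in $\text{Im}\{N\}$ because condition (\ref{eq:invDOF}) only constrains $\text{Im}\{B^a_{q^a}N\}$, so it cannot in general be absorbed into $\tilde a_k$ (a degenerate instance with $B^a_{q^a} = 0$, $D^a_{q^a} = 0$, $N = 0$ shows the exact decomposition genuinely fails). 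What the argument actually delivers is $a_k = Mx_k + N\tilde a_k + h_k$ with $h_k \in \text{ker}\{B^a_{q^a}\}\cap\text{ker}\{D^a_{q^a}\}$ --- equivalently, the claimed form holds after multiplication by $B^a_{q^a}$ and by $D^a_{q^a}$, which is enough for nulling $y$ and for the state recursion. That weaker statement is also all the paper uses downstream: in Appendix~C the input is decomposed as in (\ref{eq:rewriteBa}) with an explicit residual $\Delta a_k$, and the only way the $\text{ker}\{B^a_{q^a}\}\cap\text{ker}\{D^a_{q^a}\}$ component can matter is through $F^a_{q^a}$, which is precisely what condition (i) of Theorem~\ref{thm:Detectability} isolates. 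So your proof is no less complete than the paper's; to make the necessity direction literally correct you should either add the hypothesis $\text{ker}\{B^a_{q^a}\}\cap\text{ker}\{D^a_{q^a}\}\subset\text{Im}\{N\}$, or restate the conclusion modulo that subspace.
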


\begin{proof}
	See Appendix B.
\end{proof}

Note that (\ref{eq:VisInv}) shows that $\mathcal{V}$ is $(A+BM)$-invariant. It is well-known that in general, a subspace $\mathcal{S}$ is $A$-invariant, i.e., $A\mathcal{S} \subset \mathcal{S}$, if and only if for any $S$ whose columns form a basis of $\mathcal{S}$, there is a matrix $A|_S$ satisfying $AS = SA|_S$. As $A$ maps $S$ to itself, $A|_S$ is called $A$ restricted in $\mathcal{S}$. The eigenvalues and (generalized) eigenvectors of these two matrices are related according to the following well-known lemma.
\\
\begin{lemma} \label{lm:InvEig}
	If $\lambda$ is an eigenvalue of $A|_S$, it is also an eigenvalue of $A$. $v$ is a eigenvector of $A|_S$ associated with $\lambda$ if and only if $Sv$ is an eigenvector of $A$ associated with the same eigenvalue.	Similarly, $J(\lambda)$ and $G$ are the Jordan block and the corresponding chain of generalized eigenvectors of $A|_S$ if and only if $J(\lambda)$ and $SG$ satisfy $ASG = SGJ(\lambda). $%are the jordan block and the corresponding chain of generalized eigenvectors of $A$.
\end{lemma}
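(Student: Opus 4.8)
The plan is to reduce everything to the single defining identity $AS = S A|_S$, exploiting that the columns of $S$ form a basis of $\mathcal{S}$, so $S$ has full column rank and is therefore injective: $Sv = 0$ implies $v = 0$. This injectivity is the only nontrivial ingredient; the rest is bookkeeping with the identity.

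First I would establish the eigenvector equivalence, which also yields the eigenvalue claim as a byproduct. For the forward direction, suppose $A|_S v = \lambda v$ with $v \neq 0$. Multiplying on the left by $S$ and using $AS = S A|_S$ gives $A(Sv) = S A|_S v = \lambda (Sv)$; injectivity of $S$ ensures $Sv \neq 0$, so $Sv$ is an eigenvector of $A$ for $\lambda$, and in particular $\lambda$ is an eigenvalue of $A$, proving the first sentence. For the converse, suppose $Sv$ is an eigenvector of $A$ for $\lambda$, i.e. $A(Sv) = \lambda(Sv)$ with $Sv \neq 0$. Then $S(A|_S v) = AS v = \lambda S v = S(\lambda v)$, and cancelling $S$ by injectivity gives $A|_S v = \lambda v$, with $v \neq 0$ because $Sv \neq 0$.

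Next I would treat the Jordan-chain statement by the same mechanism, now applied to a matrix equation rather than a single vector. The defining property of a Jordan chain $G$ with block $J(\lambda)$ for $A|_S$ is precisely $A|_S G = G J(\lambda)$. Left-multiplying by $S$ and inserting the identity yields $A(SG) = S A|_S G = S G J(\lambda) = (SG) J(\lambda)$, which is the forward direction. Conversely, the identity $AS = S A|_S$ always gives $A(SG) = S(A|_S G)$; if in addition $A(SG) = (SG) J(\lambda) = S(G J(\lambda))$, then $S(A|_S G) = S(G J(\lambda))$, and injectivity forces $A|_S G = G J(\lambda)$.

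The only point needing a line of care, and what I expect to be the lone (minor) obstacle, is confirming that $SG$ genuinely constitutes a chain of generalized eigenvectors of $A$ rather than a degenerate relation. Since the columns of $G$ are linearly independent and $S$ is injective, the columns of $SG$ are linearly independent and in particular nonzero, so $SG$ is a bona fide Jordan chain. No use of the specific structure of $\Sigma^\Delta$ or of $\mathcal{V}$ is required: the result holds for any injective $S$ intertwining $A$ and $A|_S$, which is exactly the situation guaranteed whenever $\mathcal{S}$ is $A$-invariant and $S$ is a basis matrix.
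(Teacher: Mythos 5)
Your proof is correct and is precisely the standard argument the paper relies on implicitly: the paper gives no proof of this lemma, labeling it ``well-known,'' and your reduction of everything to the intertwining identity $AS = SA|_S$ together with injectivity of the full-column-rank basis matrix $S$ is exactly the expected reasoning. The one point you flag---that $SG$ has linearly independent columns so the chain is nondegenerate---is indeed the only detail worth writing down, and you handle it correctly.
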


The next lemma can formalize the idea that if a vector is ``close'' to the two subspaces, then this vector is ``close'' to the intersection of these two subspaces. The following lemma is adapted from \cite{piziak1999constructing}:
\\
\begin{lemma}\label{lm:projection}
	Let $\mathcal{V}$ and $\mathcal{W}$ be two subspaces, and let $P_\mathcal{V}$ and $P_\mathcal{W}$ be the orthogonal projection matrices onto $\mathcal{V}$ and $\mathcal{W}$, respectively. Then, the orthogonal projection matrix onto $\mathcal{V} + \mathcal{W}$, $P_{\mathcal{V} + \mathcal{W}}$, can be obtained by 
	\begin{equation}
	P_{\mathcal{V} + \mathcal{W}} = (P_\mathcal{V}+P_\mathcal{W})^\dagger(P_\mathcal{V} + P_\mathcal{W}).
	\end{equation}
\end{lemma}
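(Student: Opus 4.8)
The plan is to reduce the claimed identity to two standard facts about the Moore--Penrose pseudoinverse and the geometry of orthogonal projectors. First I would set $M := P_\mathcal{V} + P_\mathcal{W}$ and record that $M$ is symmetric and positive semidefinite, since each orthogonal projector is symmetric and positive semidefinite. For any matrix, $M^\dagger M$ is the orthogonal projector onto the row space $\text{Im}\{M'\}$; since $M$ is symmetric this coincides with $\text{Im}\{M\}$. Hence $M^\dagger M$ is the orthogonal projector onto $\text{Im}\{M\}$, and the lemma becomes \emph{equivalent} to the purely subspace-theoretic identity $\text{Im}\{P_\mathcal{V} + P_\mathcal{W}\} = \mathcal{V} + \mathcal{W}$.

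To establish that identity, the inclusion $\text{Im}\{M\} \subseteq \mathcal{V}+\mathcal{W}$ is immediate, because $Mx = P_\mathcal{V} x + P_\mathcal{W} x \in \mathcal{V} + \mathcal{W}$ for every $x$. For the reverse inclusion I would pass to orthogonal complements: symmetry of $M$ gives $\text{Im}\{M\} = (\text{ker}\{M\})^\perp$, so it suffices to show $\text{ker}\{M\} = \mathcal{V}^\perp \cap \mathcal{W}^\perp = (\mathcal{V}+\mathcal{W})^\perp$. The key step exploits positive semidefiniteness: if $Mx = 0$ then $0 = x'Mx = x'P_\mathcal{V} x + x'P_\mathcal{W} x = \|P_\mathcal{V} x\|^2 + \|P_\mathcal{W} x\|^2$, where I use that $x' P_\mathcal{V} x = \|P_\mathcal{V} x\|^2$ by symmetry and idempotence of $P_\mathcal{V}$ (and similarly for $P_\mathcal{W}$). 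A sum of two nonnegative terms vanishing forces both to vanish, so $P_\mathcal{V} x = 0$ and $P_\mathcal{W} x = 0$, i.e.\ $x \in \mathcal{V}^\perp \cap \mathcal{W}^\perp$; the converse inclusion is trivial. Taking complements yields $\text{Im}\{M\} = \mathcal{V}+\mathcal{W}$, and combining this with the pseudoinverse fact completes the argument.

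The delicate point is precisely the separation step $\|P_\mathcal{V} x\|^2 + \|P_\mathcal{W} x\|^2 = 0 \Rightarrow P_\mathcal{V} x = P_\mathcal{W} x = 0$, which is where positive semidefiniteness (rather than mere symmetry) of $M$ is essential: without it one could not conclude that the two terms vanish separately. The only nonelementary ingredient is the identification of $M^\dagger M$ with the orthogonal projector onto $\text{Im}\{M\}$; I would either cite this as a standard property of the pseudoinverse or verify it directly from the spectral decomposition $M = U\Lambda U'$ of the symmetric positive semidefinite $M$, for which $M^\dagger = U \Lambda^\dagger U'$ and $M^\dagger M = U \Lambda^\dagger \Lambda U'$ is visibly the orthogonal projection onto the span of the eigenvectors with nonzero eigenvalue, namely $\text{Im}\{M\}$.
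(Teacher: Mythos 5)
Your proof is correct and complete. Note that the paper offers no proof of this lemma at all---it is stated as adapted from the cited reference of Piziak et al.---so there is no in-paper argument to compare against; what you have supplied is the standard self-contained derivation. The reduction to the two facts (a) $M^\dagger M$ is the orthogonal projector onto $\text{Im}\{M\}$ for symmetric $M=P_\mathcal{V}+P_\mathcal{W}$, and (b) $\text{Im}\{P_\mathcal{V}+P_\mathcal{W}\}=\mathcal{V}+\mathcal{W}$, is exactly the right structure, and your proof of (b) by passing to kernels and using $0=x'Mx=\|P_\mathcal{V}x\|^2+\|P_\mathcal{W}x\|^2$ to force $x\in\mathcal{V}^\perp\cap\mathcal{W}^\perp=(\mathcal{V}+\mathcal{W})^\perp$ is sound; you also correctly flag that positive semidefiniteness, not mere symmetry, is what makes the separation step work. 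The only stylistic remark is that the easy inclusion $\text{Im}\{M\}\subseteq\mathcal{V}+\mathcal{W}$ is not strictly needed once you have $\text{ker}\{M\}=(\mathcal{V}+\mathcal{W})^\perp$ and $\text{Im}\{M\}=(\text{ker}\{M\})^\perp$, but including it does no harm.
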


From this lemma, we can see that $P_{(\mathcal{V}\cap \mathcal{W})}^\perp \triangleq P_{(\mathcal{V}\cap \mathcal{W})^\perp} = P_{\mathcal{V}^\perp + \mathcal{W}^\perp} = (P_\mathcal{V}^\perp + P_\mathcal{W}^\perp)^\dagger (P_\mathcal{V}^\perp + P_\mathcal{W}^\perp)$. Note that we use $\dagger$" to denote the pseudoinverse. 

\section{Attack Detection} \label{sec:Det}
In this section, we derive and discuss the main results on attack detection using the lifted system.
\subsection{Characterization of Vulnerability}
To characterize attack detectability in Definition~\ref{def:Detectability}, we further denote the $(A,B^a_{q^a})$-controllable subspace by $\mathcal{C}$, and define $\mathcal{V}^*$ to be $\mathcal{C}\cap \mathcal{V}$. Note that $\mathcal{V}^*$ also has the invariance property as $\mathcal{V}$, i.e., if a matrix $M$ satisfies (\ref{eq:VisInv}), then $(A+B^a_{q^a}M)\mathcal{V}^* \subset \mathcal{V}^*$ and $(C+D^a_{q^a}M)\mathcal{V}^* = \{0\}$. Let $V^*$ be a matrix whose columns form the basis of $\mathcal{V}^*$, then for a matrix $M$ satisfying (\ref{eq:VisInv}), $(A+B^a_{q^a}M)|_{V^*}$ is well defined such that $(A+B^a_{q^a}M)V^* = V^*(A+B^a_{q^a}M)|_{V^*}$. With these notations and lemmas in Section \ref{sec:Prelim}, we now present the characterization of attack detectability.
\\
\begin{theorem}\label{thm:Detectability}
	Consider the system described by (\ref{eq:NomLiftedSys}), (\ref{eq:ActLiftedSys}), and (\ref{eq:DeltaLiftedSys}), it is vulnerable to attack mode $q^a$ if and only if at least one of the following conditions holds:
	\begin{enumerate}[(i)]
	\item $\text{ker}\{B^a_{q^a}\}\cap \text{ker}\{D^a_{q^a}\} \not\subset \text{ker}\{F^a_{q^a}\}$;
	\item for some $(\hat M,\hat N) \in \mathcal{F}(\mathcal{V})$, $ L_{(\hat M, \hat N)}^{\Sigma^\Delta} \neq 0$, where $L_{(\hat M, \hat N)}^{\Sigma^\Delta}$ is defined as
	\begin{equation}
	\begin{aligned}
	 \Big[(E+F^a_{q^a}\hat{M})(A&+B^a_{q^a}\hat{M})^{n-1}B^a_{q^a}\hat{N} \,\ \dots \,\  \\
	          &(E+F^a_{q^a}\hat{M})B^a_{q^a}\hat N \,\ F^a_{q^a}\hat N \Big];
	\end{aligned}
	\end{equation}
	\item for some $(\hat M,\hat N) \in \mathcal{F}(\mathcal{V})$, $(A+B^a_{q^a}\hat{M})|_{V^*}$ has a unstable eigenvalue $\lambda$ (i.e., $|\lambda| \geq 1$) whose generalized eigenspace $\mathcal{S}$ satisfies \footnote{When the eigenvalue and the generalized eigenvectors are complex, the condition becomes that either the real part or imaginary part of one of the generalized eigenvector is not contained in $\text{ker}\{(E+F^a_{q^a}\hat M)V^*\}$.} 
	\begin{equation}\label{eq:EigNotKer}
	V^*\mathcal{S} \not\subset \text{ker}\{(E+F^a_{q^a}\hat{M})\}.
	\end{equation}
	\end{enumerate}
\end{theorem}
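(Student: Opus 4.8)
The plan is to prove the two directions separately and, throughout, to exploit the homogeneity of the maps in \eqref{eq:ISIOmaps}: with $x_0=0$ and $w\equiv 0$ the maps $\mathbf y^{q^a}$ and $\mathbf z^{q^a}$ are linear in the attack sequence, so scaling an attack by $c$ scales both $\{y_k\}$ and $\{z_k\}$ by $c$. Hence \textbf{vulnerability} (the negation of Definition~\ref{def:Detectability}) is equivalent to the statement that $\sup_k\|z_k\|$ can be made arbitrarily large while $\sup_k\|y_k\|$ is kept arbitrarily small; for the sufficiency (``if'') direction it therefore suffices to exhibit attacks along which $\{y_k\}$ is held identically zero (or the transient output is frozen at a fixed level) while $\sup_k\|z_k\|=\infty$. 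The central tool is Lemma~\ref{lm:NullingPolicy}: starting from $x_0\in\mathcal V$, the output stays at zero exactly when $a_k=\hat M x_k+\hat N\tilde a_k$, under which the trajectory obeys the $y$-nulling dynamics $x_{k+1}=(A+B^a_{q^a}\hat M)x_k+B^a_{q^a}\hat N\tilde a_k$ with $z_k=(E+F^a_{q^a}\hat M)x_k+F^a_{q^a}\hat N\tilde a_k$. On this zero-output manifold the impact $\{z_k\}$ is exactly the response of a reduced linear system whose Markov parameters are the blocks assembled in $L^{\Sigma^\Delta}_{(\hat M,\hat N)}$ of condition~(ii).

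For the ``if'' direction I would read the three conditions as three increasingly global mechanisms for making $z$ large while the output stays silent. Under~(i) I pick $a^\ast\in(\ker\{B^a_{q^a}\}\cap\ker\{D^a_{q^a}\})\setminus\ker\{F^a_{q^a}\}$ and inject the single impulse $a_0=c\,a^\ast$: the state never leaves the origin and $D^a_{q^a}a^\ast=0$ forces $y_k\equiv 0$, yet $z_0=c\,F^a_{q^a}a^\ast\neq 0$ is unbounded in $c$. Under~(ii), a nonzero block of $L^{\Sigma^\Delta}_{(\hat M,\hat N)}$ means some Markov parameter of the reduced system is nonzero, so a finite $\{\tilde a_k\}$ produces $z_{k^\ast}\neq 0$ with $y\equiv 0$ by Lemma~\ref{lm:NullingPolicy}, and scaling again gives unbounded impact. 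Under~(iii) I use that the generalized eigenvectors spanning $\mathcal S$ lie, through $V^\ast$, in $\mathcal V^\ast=\mathcal C\cap\mathcal V$: being in $\mathcal C$ they are reachable from the origin in finite time (producing only a bounded output transient), and being in $\mathcal V$ one can then switch to the $y$-nulling policy and let the unstable mode evolve freely with $y\equiv 0$ thereafter. Lemma~\ref{lm:InvEig} identifies this free evolution with growth along $(E+F^a_{q^a}\hat M)V^\ast\mathcal S\neq\{0\}$, so $\sup_k\|z_k\|=\infty$ while the output remains the fixed transient. The boundary $|\lambda|=1$ is exactly where the \emph{generalized} eigenspace matters, since a nontrivial Jordan block is needed to produce polynomial growth that outpaces the transient, and the complex case is reduced to the real/imaginary-part reformulation of the footnote.

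The necessity (``only if'') direction is the hard part, and I would prove its contrapositive: assuming that none of (i)--(iii) holds, show the system is detectable, i.e.\ (by homogeneity) that $\sup_k\|y_k\|\le 1$ forces $\sup_k\|z_k\|\le\gamma$ for a fixed gain $\gamma$. The skeleton is a geometric/modal decomposition of the state. Since $x_0=0$, every reachable $x_k$ lies in $\mathcal C$; I would first establish an auxiliary output-detectability estimate, $\mathrm{dist}(x_k,\mathcal V)\le c_1\sup_j\|y_j\|$, expressing that any deviation transverse to the maximal $y$-nulling subspace must register in the output within a bounded window. Then I invoke Lemma~\ref{lm:projection}, through the identity $P^\perp_{(\mathcal V\cap\mathcal C)}=(P^\perp_{\mathcal V}+P^\perp_{\mathcal C})^\dagger(P^\perp_{\mathcal V}+P^\perp_{\mathcal C})$, to upgrade ``close to $\mathcal V$ and inside $\mathcal C$'' into ``close to $\mathcal V^\ast$,'' obtaining $\mathrm{dist}(x_k,\mathcal V^\ast)\le c_2\sup_j\|y_j\|$. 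On $\mathcal V^\ast$ I fix one friend $(\hat M,\hat N)$ via Lemma~\ref{lm:Friends} and split the invariant restriction $(A+B^a_{q^a}\hat M)|_{V^\ast}$ by Lemma~\ref{lm:InvEig} into its stable and unstable blocks; the failure of (ii) and (i) removes the feedthrough and the reduced Markov contributions to $z$, while the failure of (iii) guarantees that the entire unstable block is annihilated by $(E+F^a_{q^a}\hat M)V^\ast$.

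I expect the main obstacle to be the final bound, because the trajectory is only \emph{approximately} on $\mathcal V^\ast$: the residual $\mathrm{dist}(x_k,\mathcal V^\ast)=O(\sup_j\|y_j\|)$ acts as a persistent forcing that can excite the unstable, $z$-invisible modes, which then grow without bound even though $\sup_k\|y_k\|$ stays small. The delicate step is showing this growth stays quarantined in the $z$-invisible subspace, i.e.\ that the $y$-induced perturbation never surfaces in the range of $(E+F^a_{q^a}\hat M)$. The mechanism I would exploit is that the restriction to $\mathcal V^\ast$ is \emph{block-diagonal} across its stable and unstable invariant blocks, so the unstable (invisible) block cannot cross-feed the stable (visible) block; the transverse residual feeds the visible block only with an $O(\sup_j\|y_j\|)$ gain into a \emph{stable} system, yielding $\sup_k\|z_k\|\le\gamma\sup_j\|y_j\|$. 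Making this leak-free decomposition precise and assembling the uniform gain estimates, rather than the explicit constructions of the converse, is where the real work lies.
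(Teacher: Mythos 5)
Your overall architecture matches the paper's: sufficiency by explicit construction under each of (i)--(iii), necessity by the contrapositive using the chain ``output small $\Rightarrow$ $x_k$ close to $\mathcal V$, $x_k\in\mathcal C$ $\Rightarrow$ (via Lemma~\ref{lm:projection}) close to $\mathcal V^*$,'' followed by a stable/unstable Jordan splitting of $(A+B^a_{q^a}\hat M)|_{V^*}$ in which the failure of (iii) forces $(E+F^a_{q^a}\hat M)V^*$ to annihilate the unstable block. Your constructions for (i), for (ii), and for the subcases of (iii) with $|\lambda|>1$ or with a visible non-terminal generalized eigenvector are the same as the paper's, and your necessity outline, including the identification of the ``quarantine'' issue and its resolution through the block structure, is essentially the paper's Appendix~C argument (the paper additionally carries out a least-squares decomposition $B^a_{q^a}a_k=B^a_{q^a}\hat Mx_k+B^a_{q^a}\hat N\tilde a_k+B^a_{q^a}\Delta a_k$ and bounds $\|B^a_{q^a}\Delta a_k\|$, $\|D^a_{q^a}\Delta a_k\|$ to control the feedthrough terms you wave at, but that is filling in detail, not a different route).

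There is, however, one genuine gap in your sufficiency argument for condition (iii): the case $|\lambda|=1$ when the only generalized eigenvector in the chain that is visible through $(E+F^a_{q^a}\hat M)V^*$ is the terminal one --- in particular the case of a \emph{simple} (non-defective) eigenvalue on the unit circle. Your strategy is ``reach $\alpha V^*Se_{i^*}$ with a small transient, then coast under $a_k=\hat Mx_k$.'' For $|\lambda|=1$ without a usable Jordan chain the free evolution gives $z_k=\alpha\lambda^{k-n}\eta_{i^*}$, which is \emph{bounded}; making $\|z\|$ large requires scaling up $\alpha$, which scales up the transient output past any prescribed $\epsilon$. Your own remark that ``a nontrivial Jordan block is needed to produce polynomial growth'' effectively concedes this, but the theorem (and the paper's UAS example, whose eigenvalue at $1$ is non-defective) asserts vulnerability in exactly this case. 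The paper closes it with a different mechanism (its case~3): the periodic re-injection policy $a_k=Mx_k+\lambda^{\tilde kn}a_j-\lambda^{\tilde kn}\hat Mx_j$ of (\ref{eq:subcasepolicy}), which repeats the small reaching transient every $n$ steps scaled by $\lambda^{\tilde kn}$ (of unit modulus, so the output never exceeds $\epsilon$) while the contributions to the eigendirection add coherently, yielding $z_{\tilde kn+j}=\alpha\tilde k\lambda^{\tilde kn+j-n}\eta_{i^*}+\lambda^{\tilde kn}z_j$ and hence linear divergence. Without this (or an equivalent accumulation argument) your proof establishes sufficiency only for a strictly weaker version of condition (iii).
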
 
  
\begin{proof}
	The proof is given in Appendix C.
\end{proof}

The proof of Theorem~\ref{thm:Detectability} is challenging, but we can gain important insights from it. When the system is vulnerable, the proof of Theorem 1 provides explicit ways to design the stealthy and severe attack. If condition (i) holds, there is an attack input sequence $\{a_k\}_k$ such that it affects $\{z_k\}_k$ while resulting in identically zero $\{x_k\}_k$ and $\{y_k\}_k$, assuming the initial condition $x_0 = 0$. By scaling this $\{a_k\}_k$, a stealthy and sufficiently severe attack can be obtained. If condition (ii) holds, the output-nulling dynamics of $\Sigma^\Delta$ has the nontrivial controllable subspace, which means the attack policy in the form of (\ref{eq:nullingpolicy}) can be used to generate stealthy and severe attacks. If condition (iii) holds, the attacker can activate an unstable eigenvector of $(A+B^a_{q^a}\hat M)$ so that the attack policy $a_k = \hat M x_k$ can generate a stealthy and severe attack. 

When the attack mode $q^a$ is detectable, the proof gives a relationship between $\epsilon$ and $\delta$. It should be remarked that their relationship is crucial for designing and analyzing the attack detection using $||y_k||$. 
Recall that by detectability, attack severity $||z_{k_z}|| \geq \delta$ for some $k_z$ implies $||y_{k_y}|| \geq \epsilon$ for some $k_y$, so $\delta$ can represent the sensitivity for attack detection when $\epsilon$ is used as the threshold for triggering the alarm (i.e., the alarm is triggered when $||y_k|| > \epsilon$). Equivalently,  the attack stealthiness $||y_{k}|| < \epsilon$ for all $k$ implies attack severity $||z_k|| < \delta$, so $\delta$ can also be interpreted as the bound on the impact caused by undetected attacks. Although $\delta(\epsilon)$ obtained in the proof could be conservatively over-approximated depending on the choice of $(\hat M,\hat N) \in \mathcal{F}(\mathcal{V})$ in (\ref{eq:rewriteBa}), the conservativeness might be reduced if a $\mathtt{z}$-domain approach is applied based on the ideas in \cite{meyer1990new,zhong2007observer,sui2020vulnerability}. %, which can be a future work. 
For example, consider the case where the attack input $\{a_k\}_k$ and outputs $\{y_k\}_k$, $\{z_k\}_k$ of system $\Sigma^\Delta$ in (\ref{eq:DeltaLiftedSys}) have $\mathtt{z}$-transforms $a_\mathtt{z}$, $y_\mathtt{z}$, and $z_\mathtt{z}$, respectively. Then, in the $\mathtt{z}$-domain, the following relationship hold when $x_0 = 0$: 
\begin{equation}
	y_\mathtt{z} = T_y (\mathtt{z}) a_\mathtt{z}, \,\ z_\mathtt{z} = T_z(\mathtt{z}) a_\mathtt{z},
\end{equation}
where $T_y (\mathtt{z})$ and $T_z (\mathtt{z})$ are $a$-to-$y$ and $a$-to-$z$ transfer function matrices, respectively. Suppose $T_y$ can be written in the Smith-McMillan form $P_L S_y P_R $, and $T_y^\dagger$ given by $P_R^{-1} S_y^\dagger P_L^{-1}$ is stable. Then, it can be shown that $z_\mathtt{z} = T_z(\mathtt{z}) T_y^\dagger(\mathtt{z}) y_\mathtt{z}$ under the condition of detectability, which implies that $\delta(\epsilon)$ can be taken as $||T_z(\mathtt{z}) T_y^\dagger(\mathtt{z})||_1 \epsilon$, where $||T_z(\mathtt{z}) T_y^\dagger(\mathtt{z})||_1 = \Sigma ||R^{zy}_k||_2$ and $||R^{zy}_k||_2$ is the spectrum norm of the impulse response matrix of $T_z(\mathtt{z}) T_y^\dagger(\mathtt{z})$ at $k$. 

\subsection{Assuring Attack Detectability}
We have found the necessary and sufficient conditions for vulnerability in Theorem~\ref{thm:Detectability}, but it is also critical to discuss how the conditions (i)-(iii) in the theorem can be checked. 

It is straightforward to check whether condition (i) in Theorem~\ref{thm:Detectability} holds or not, but to check conditions (ii) and (iii), we need to use some properties of $\mathcal{V}$ and $\mathcal{F}(\mathcal{V})$. Fortunately, we have the following proposition showing if condition (i) does not hold, we can take an arbitrary $(\hat M,\hat N) \in \mathcal{F}(\mathcal{V})$ and check whether $L_{(\hat M, \hat N)}^{\Sigma^\Delta} = 0$ to determine whether condition (ii) holds: if we find that $L_{(\hat M, \hat N)}^{\Sigma^\Delta} = 0$, then all other $(M,N) \in \mathcal{F}(\mathcal{V})$ will make $L_{(M, N)}^{\Sigma^\Delta}$ zero. 
\\
\begin{proposition}\label{pr:GeometricMN}
	Suppose condition (i) in Theorem~\ref{thm:Detectability} does not hold. Then condition (ii) holds if and only if $L_{(M, N)}^{\Sigma^\Delta} \neq 0$ for all $(M, N) \in \mathcal{F}(\mathcal{V})$.
\end{proposition}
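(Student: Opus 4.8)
The plan is to show that, once condition (i) fails, the vanishing of $L_{(M,N)}^{\Sigma^\Delta}$ is independent of the chosen friend; the proposition then follows at once, because $\mathcal{F}(\mathcal{V})$ is nonempty, so ``$L_{(M,N)}^{\Sigma^\Delta}\neq0$ for all friends'' and condition (ii) (``$L_{(\hat M,\hat N)}^{\Sigma^\Delta}\neq0$ for some friend'') coincide as soon as zeroness is friend-independent. Thus it suffices to prove: if $(\hat M,\hat N),(\bar M,\bar N)\in\mathcal{F}(\mathcal{V})$ and $L_{(\hat M,\hat N)}^{\Sigma^\Delta}=0$, then $L_{(\bar M,\bar N)}^{\Sigma^\Delta}=0$ (the converse is symmetric). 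Throughout I abbreviate $A_M:=A+B^a_{q^a}M$ and $E_M:=E+F^a_{q^a}M$.

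First I would reinterpret $L_{(M,N)}^{\Sigma^\Delta}$ intrinsically on $\mathcal{V}$. Fix a basis matrix $V$ of $\mathcal{V}$. Since $\mathrm{Im}(B^a_{q^a}N)=B^a_{q^a}\ker D^a_{q^a}\cap\mathcal{V}\subset\mathcal{V}$ by (\ref{eq:invDOF}) and $A_M\mathcal{V}\subset\mathcal{V}$ by (\ref{eq:VisInv}), every block $E_M A_M^{\,j}B^a_{q^a}N$ is computed inside $\mathcal{V}$: writing $A_M V=V\Theta$ (the restriction $A_M|_V$), $B^a_{q^a}N=V\Gamma$, $\tilde E:=E_M V$ and $\Phi:=F^a_{q^a}N$, the blocks become $\tilde E\,\Theta^{j}\Gamma$ together with the feedthrough $\Phi$. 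Hence $L_{(M,N)}^{\Sigma^\Delta}=0$ is equivalent to $\Phi=0$ and $\tilde E\,\Theta^{j}\Gamma=0$ for $j=0,\dots,n-1$, and thus for all $j\ge0$ by Cayley--Hamilton (note $\dim\mathcal{V}\le n$). In other words $L_{(M,N)}^{\Sigma^\Delta}$ is exactly the Markov-parameter sequence of the reduced triple $(\Theta,\Gamma,\tilde E)$ with feedthrough $\Phi$, i.e.\ the $\tilde a$-to-$z$ map of the $y$-nulling dynamics of Lemma~\ref{lm:NullingPolicy}.

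Next I would use Lemma~\ref{lm:Friends} to relate the reduced data of the two friends, obtained by applying this reduction to $(\hat M,\hat N)$ and $(\bar M,\bar N)$ to get $(\hat\Theta,\hat\Gamma,\hat{\tilde E},\hat\Phi)$ and $(\bar\Theta,\bar\Gamma,\bar{\tilde E},\bar\Phi)$. From $\bar N=\hat N\hat L+\hat H_N$ with $\mathrm{Im}\,\hat H_N\subset\ker B^a_{q^a}\cap\ker D^a_{q^a}$ I get $B^a_{q^a}\bar N=B^a_{q^a}\hat N\hat L$, so $\bar\Gamma=\hat\Gamma\hat L$; from $B^a_{q^a}(\hat M-\bar M)V=B^a_{q^a}\hat N\hat K$ I get $A_{\bar M}V=A_{\hat M}V-B^a_{q^a}\hat N\hat K$, i.e.\ $\bar\Theta=\hat\Theta-\hat\Gamma\hat K$; and using $(\hat M-\bar M)V=\hat N\hat K+\hat H_M$, $E_{\bar M}V=E_{\hat M}V-F^a_{q^a}\hat N\hat K-F^a_{q^a}\hat H_M$ and $\bar\Phi=F^a_{q^a}\hat N\hat L+F^a_{q^a}\hat H_N$. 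This is precisely where the failure of condition (i) is indispensable: it gives $\ker B^a_{q^a}\cap\ker D^a_{q^a}\subset\ker F^a_{q^a}$, hence $F^a_{q^a}\hat H_M=0$ and $F^a_{q^a}\hat H_N=0$, which removes the extra terms and leaves $\bar{\tilde E}=\hat{\tilde E}-\hat\Phi\hat K$ and $\bar\Phi=\hat\Phi\hat L$. Without this, a friend carrying nonzero $F^a_{q^a}$ on $\ker B^a_{q^a}\cap\ker D^a_{q^a}$ could make $L^{\Sigma^\Delta}\neq0$ while another makes it zero. The upshot is that changing the friend acts on the reduced triple exactly as a state feedback $\hat\Theta\mapsto\hat\Theta-\hat\Gamma\hat K$ together with an input transformation $\hat\Gamma\mapsto\hat\Gamma\hat L$.

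Finally, assuming $L_{(\hat M,\hat N)}^{\Sigma^\Delta}=0$, i.e.\ $\hat\Phi=0$ and $\hat{\tilde E}\,\hat\Theta^{j}\hat\Gamma=0$ for all $j$, I would conclude the bar sequence vanishes. Since $\hat\Phi=0$ the relations collapse to $\bar\Phi=0$, $\bar{\tilde E}=\hat{\tilde E}$, $\bar\Gamma=\hat\Gamma\hat L$, $\bar\Theta=\hat\Theta-\hat\Gamma\hat K$, so it remains to show $\hat{\tilde E}(\hat\Theta-\hat\Gamma\hat K)^{j}\hat\Gamma=0$ for all $j$. I would prove this by expanding $(\hat\Theta-\hat\Gamma\hat K)^{j}$ into monomials in $\hat\Theta$ and $-\hat\Gamma\hat K$: reading any monomial from the left, either it is the pure power $\hat\Theta^{j}$, giving $\hat{\tilde E}\hat\Theta^{j}\hat\Gamma=0$, or its first $-\hat\Gamma\hat K$ factor, after $a\ge0$ copies of $\hat\Theta$, produces the left factor $\hat{\tilde E}\hat\Theta^{a}\hat\Gamma=0$, so the whole monomial vanishes. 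Hence $\bar h_j=\hat{\tilde E}(\hat\Theta-\hat\Gamma\hat K)^{j}\hat\Gamma\,\hat L=0$ for all $j$, so $L_{(\bar M,\bar N)}^{\Sigma^\Delta}=0$. The main obstacle is the middle step: extracting from Lemma~\ref{lm:Friends} that a change of friend is exactly a state-feedback-plus-input-transformation on the reduced triple, and seeing that the failure of condition (i) is precisely what eliminates the feedthrough obstruction, so that the invariance of ``zero Markov parameters'' under such transformations can be invoked.
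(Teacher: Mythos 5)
Your proof is correct and takes exactly the route the paper intends: the paper omits the proof, stating only that the proposition ``can be derived from Lemma~\ref{lm:Friends},'' and your argument is precisely that derivation. Restricting to $\mathcal{V}$, using Lemma~\ref{lm:Friends} together with the failure of condition (i) (which kills the $F^a_{q^a}\hat H_M$ and $F^a_{q^a}\hat H_N$ terms) to show that a change of friend acts on the restricted triple as a state feedback plus an input transformation, and then invoking the invariance of vanishing Markov parameters under such transformations, correctly establishes that the zeroness of $L^{\Sigma^\Delta}_{(M,N)}$ is friend-independent, which is all the proposition requires.
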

The proof is omitted as it can be derived from Lemma~\ref{lm:Friends}. Checking if condition (iii) in Theorem~\ref{thm:Detectability} holds or not is more complicated. We need the following lemma which parameterizes $\{(A+B^a_{q^a}\hat{M})|_{V^*} | (\hat M,\hat N) \in \mathcal{F}(\mathcal{V})\}$:
\\
\begin{lemma}\label{lm:ParA+BM}
	Suppose $V^*$ is the matrix whose columns form a basis of $\mathcal{V}^*$ (recall that $\mathcal{V}^* = \mathcal{V} \cap \mathcal{C}$). Let $(\hat M, \hat N)$ be a friend of $\mathcal{V}$ and $\hat B_N$ be a matrix satisfying $B^a_{q^a}\hat N = V^*\hat B_N$. Then, for any $(M,N) \in \mathcal{F}(\mathcal{V})$, there exist some $\hat K$ such that
	\begin{equation}\label{eq:A+BM2A+BMhat}
	(A+B^a_{q^a}M)|_{V^*} = (A+B^a_{q^a} \hat M)|_{V^*} + \hat{B}_N \hat K.
	\end{equation} 
	On the other hand, for any $\hat K$, there is a matrix $M$ such that $(M,\hat N) \in \mathcal{F}(\mathcal{V})$ and (\ref{eq:A+BM2A+BMhat}) holds. 
\end{lemma}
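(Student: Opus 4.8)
The plan is to reduce the whole statement to Lemma~\ref{lm:Friends} together with the defining properties of a friend, so that the proposition becomes a matter of pushing the already-known parameterization of the feedback $M$ through the restriction map onto $\mathcal{V}^*$. Two preliminaries come first. First I would check that $\hat B_N$ is well defined: by (\ref{eq:invDOF}), $\text{Im}\{B^a_{q^a}\hat N\} = B^a_{q^a}\text{ker}\{D^a_{q^a}\}\cap\mathcal{V}$, and since $B^a_{q^a}\text{ker}\{D^a_{q^a}\}\subset\text{Im}\{B^a_{q^a}\}\subset\mathcal{C}$, the columns of $B^a_{q^a}\hat N$ lie in $\mathcal{C}\cap\mathcal{V} = \mathcal{V}^*$; as the columns of $V^*$ form a basis of $\mathcal{V}^*$, there is a matrix $\hat B_N$ with $B^a_{q^a}\hat N = V^*\hat B_N$. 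Second, since the columns of $V^*$ are linearly independent, $V^*$ admits a left inverse $V^{*\dagger}$ with $V^{*\dagger}V^* = I$, which I will use to cancel $V^*$ on the left.

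For the forward direction, I would fix a matrix $V$ whose columns span $\mathcal{V}$ and write $V^* = VR$ (possible because $\mathcal{V}^*\subset\mathcal{V}$). Applying Lemma~\ref{lm:Friends} to the friends $(\hat M,\hat N)$ and $(M,N)$ gives $B^a_{q^a}(M-\hat M)V = B^a_{q^a}\hat N K_0$ for some $K_0$; right-multiplying by $R$ yields $B^a_{q^a}(M-\hat M)V^* = B^a_{q^a}\hat N\hat K$ with $\hat K := K_0 R$. I then expand $(A+B^a_{q^a}M)V^* = (A+B^a_{q^a}\hat M)V^* + B^a_{q^a}(M-\hat M)V^*$ and substitute $B^a_{q^a}\hat N = V^*\hat B_N$ together with the restriction identities $(A+B^a_{q^a}M)V^* = V^*(A+B^a_{q^a}M)|_{V^*}$ and $(A+B^a_{q^a}\hat M)V^* = V^*(A+B^a_{q^a}\hat M)|_{V^*}$, which are legitimate because $\mathcal{V}^*$ is invariant under both closed-loop maps. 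The identity then reads $V^*(A+B^a_{q^a}M)|_{V^*} = V^*[(A+B^a_{q^a}\hat M)|_{V^*} + \hat B_N\hat K]$, and cancelling $V^*$ on the left via $V^{*\dagger}$ gives exactly (\ref{eq:A+BM2A+BMhat}).

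For the converse, I would use the parameterization noted after (\ref{eq:invDOF}): for any $K$ the feedback $M := \hat M + \hat N K$ still satisfies (\ref{eq:VisInv}), and since the second component is left as $\hat N$, the pair $(M,\hat N)$ remains in $\mathcal{F}(\mathcal{V})$. Repeating the restriction computation with this $M$ and using $B^a_{q^a}\hat N K V^* = V^*\hat B_N (KV^*)$ gives $(A+B^a_{q^a}M)|_{V^*} = (A+B^a_{q^a}\hat M)|_{V^*} + \hat B_N (KV^*)$. Hence, given a target $\hat K$, the choice $K := \hat K V^{*\dagger}$ makes $KV^* = \hat K$ and produces the desired equality, showing every $\hat K$ is realized.

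I expect the main obstacle to be the first preliminary step — verifying that the columns of $B^a_{q^a}\hat N$ genuinely land in $\mathcal{V}^* = \mathcal{V}\cap\mathcal{C}$, not merely in $\mathcal{V}$ — since the factorization $B^a_{q^a}\hat N = V^*\hat B_N$, and therefore the very appearance of $\hat B_N$ in the statement, rests on the containment in $\mathcal{C}$; alongside this, one must be careful that the restriction maps $(\cdot)|_{V^*}$ are well defined, which holds only because $\mathcal{V}^*$ inherits the invariance of $\mathcal{V}$ as recorded just before Theorem~\ref{thm:Detectability}. Everything else is linear-algebra bookkeeping: the right factor $R$ transporting Lemma~\ref{lm:Friends} from $\mathcal{V}$ to $\mathcal{V}^*$, and the left inverse $V^{*\dagger}$ cancelling $V^*$.
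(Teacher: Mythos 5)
Your proof is correct and follows exactly the route the paper indicates: the paper omits the proof with the remark that it is ``a result from Lemma~\ref{lm:Friends}'', and your argument is precisely that derivation, with the useful added care of verifying that $\mathrm{Im}\{B^a_{q^a}\hat N\}\subset\mathcal{V}\cap\mathcal{C}=\mathcal{V}^*$ so that $\hat B_N$ is well defined. No gaps.
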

The proof is omitted as it is a result from Lemma~\ref{lm:Friends}. This lemma shows that for any $M$ satisfying (\ref{eq:VisInv}), $(A+B^a_{q^a}M)|_{V^*}$ can be considered as a ``closed-loop" system matrix where $(A+B^a_{q^a} \hat M)|_{V^*}$ is the ``open-loop" system matrix and $\hat K$ is the ``feedback gain". According to this relationship, we can see that for any $K$, all the generalized eigenvectors of $(A+B^a_{q^a}\hat M)|_{V^*} + \hat B_N K)$ corresponding to the $((A+B^a_{q^a}\hat M)|_{V^*}, \hat B_N)$-controllable eigenvalues are always included in the $((A+B^a_{q^a}\hat M)|_{V^*}, \hat B_N)$-controllable subspace. 
\\
\begin{proposition}\label{lm:Kunstableuncontrollable}
	Suppose conditions (i) and (ii) in Theorem~\ref{thm:Detectability} do not hold. Condition (iii) in the theorem holds if and only if $(A+B^a_{q^a}\hat M)|_{V^*}$ has an unstable eigenvalue that is not $((A+B^a_{q^a}\hat M)|_{V^*}, \hat B_N)$-controllable, and there is a ``gain matrix'' $K$ such that one of the eigenvectors or generalized eigenvectors of $((A+B^a_{q^a}\hat M)|_{V^*} + \hat B_N K)$ associated with this eigenvalue is not contained in the null space of $(E+F^a_{q^a}\hat M) V^*$.
\end{proposition}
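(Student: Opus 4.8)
The plan is to translate condition (iii) of Theorem~\ref{thm:Detectability}, which is quantified existentially over all friends $(M,N) \in \mathcal{F}(\mathcal{V})$, into an equivalent statement about state feedback applied to the single fixed pair $(\bar A_0,\bar B) := ((A+B^a_{q^a}\hat M)|_{V^*},\ \hat B_N)$, where $\hat B_N$ satisfies $B^a_{q^a}\hat N = V^*\hat B_N$. Write $G := (E+F^a_{q^a}\hat M)V^*$. The first step is to show that, when conditions (i) and (ii) fail, $G$ does not depend on the choice of friend. Since condition (i) fails, $\mathrm{ker}\{B^a_{q^a}\}\cap\mathrm{ker}\{D^a_{q^a}\} \subset \mathrm{ker}\{F^a_{q^a}\}$; since condition (ii) fails, Proposition~\ref{pr:GeometricMN} gives $L^{\Sigma^\Delta}_{(\hat M,\hat N)} = 0$, whose last block yields $F^a_{q^a}\hat N = 0$. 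Taking any other friend $(\bar M,\bar N)$ and writing $V^* = VT$ for a basis $V$ of $\mathcal{V}$, Lemma~\ref{lm:Friends} gives $(\hat M - \bar M)V^* = \hat N\hat K T + \hat H_M T$ with $\mathrm{Im}\{\hat H_M\} \subset \mathrm{ker}\{B^a_{q^a}\}\cap \mathrm{ker}\{D^a_{q^a}\}$. Applying $F^a_{q^a}$ and using $F^a_{q^a}\hat H_M T = 0$ together with $F^a_{q^a}\hat N = 0$ collapses the difference to zero, so $(E + F^a_{q^a}\bar M)V^* = G$ for every friend. Hence the output condition $V^*\mathcal{S}\not\subset \mathrm{ker}\{E + F^a_{q^a}M\}$ is equivalent to $\mathcal{S}\not\subset\mathrm{ker}\{G\}$ independently of the friend, and by Lemma~\ref{lm:ParA+BM} the restricted dynamics range exactly over $\{\bar A_0 + \bar B K : K \text{ arbitrary}\}$.

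Second, I would establish the containment $\mathcal{C}_{(\bar A_0,\bar B)} \subset \mathrm{ker}\{G\}$. Using $V^*\bar A_0 = (A+B^a_{q^a}\hat M)V^*$ inductively and $V^*\hat B_N = B^a_{q^a}\hat N$, one obtains $G\bar A_0^{\,j}\bar B = (E+F^a_{q^a}\hat M)(A+B^a_{q^a}\hat M)^{j}B^a_{q^a}\hat N$, which is exactly the $j$-th block of $L^{\Sigma^\Delta}_{(\hat M,\hat N)}$ for $j=0,\dots,n-1$. Since $L^{\Sigma^\Delta}_{(\hat M,\hat N)}=0$, every column of the controllability matrix $[\,\bar B,\ \bar A_0\bar B,\ \dots\,]$ is annihilated by $G$, i.e.\ $\mathcal{C}_{(\bar A_0,\bar B)}\subset \mathrm{ker}\{G\}$.

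With these two reductions, condition (iii) becomes: there is a gain $K$ and an unstable eigenvalue $\lambda$ of $\bar A_0 + \bar B K$ whose generalized eigenspace $\mathcal{S}_K$ is not contained in $\mathrm{ker}\{G\}$. For the forward direction, pick $g\in\mathcal{S}_K$ with $g\notin\mathrm{ker}\{G\}$; the containment of the second step forces $g\notin\mathcal{C}_{(\bar A_0,\bar B)}$. Since $\mathcal{C}_{(\bar A_0,\bar B)}$ is $(\bar A_0+\bar B K)$-invariant for every $K$, projecting $g$ onto the quotient $\mathbb{R}^{\dim\mathcal{V}^*}/\mathcal{C}_{(\bar A_0,\bar B)}$ gives a nonzero generalized eigenvector of the induced quotient map, which coincides with the quotient action of $\bar A_0$ and is independent of $K$, for the same $\lambda$. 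Thus $\lambda$ is an unstable \emph{uncontrollable} eigenvalue of $(\bar A_0,\bar B)$, and the supplied $g$ is precisely the witness required by the proposition. For the converse, an unstable uncontrollable $\lambda$ remains an eigenvalue of $\bar A_0 + \bar B K$ for every $K$, so the given (generalized) eigenvector $v\notin\mathrm{ker}\{G\}$ lies in $\mathcal{S}_K$ and certifies $\mathcal{S}_K\not\subset\mathrm{ker}\{G\}$; taking the friend associated with this $K$ via Lemma~\ref{lm:ParA+BM} reconstitutes condition (iii).

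The main obstacle is this third step: care is needed when $\lambda$ is simultaneously a controllable and an uncontrollable eigenvalue, so that I must argue through the quotient map, rather than directly via eigenvectors, that a generalized eigenvector lying outside $\mathcal{C}_{(\bar A_0,\bar B)}$ necessarily originates from the uncontrollable spectrum. The complex-eigenvalue case flagged in the footnote of Theorem~\ref{thm:Detectability} is handled by passing to the real and imaginary parts of the generalized eigenvectors, and the friend-independence of $G$ from the first step is the computational linchpin that allows the single fixed matrix $(E+F^a_{q^a}\hat M)V^*$ to serve for every gain $K$.
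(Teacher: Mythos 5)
Your proposal is correct and follows essentially the same route as the paper's Appendix D: reduce condition (iii) to the fixed pair $((A+B^a_{q^a}\hat M)|_{V^*},\hat B_N)$ via Lemma~\ref{lm:ParA+BM}, show $(E+F^a_{q^a}M)V^*$ is friend-independent via Lemma~\ref{lm:Friends} together with the failure of conditions (i) and (ii), and use $L^{\Sigma^\Delta}_{(\hat M,\hat N)}=0$ to place the controllable subspace inside $\ker\{(E+F^a_{q^a}\hat M)V^*\}$ so that any witnessing generalized eigenvector must come from an uncontrollable unstable eigenvalue. Your quotient-space argument merely makes explicit (and slightly more careful, for eigenvalues that are simultaneously controllable and uncontrollable) the paper's contrapositive claim that generalized eigenspaces of controllable eigenvalues stay inside the controllable subspace for every gain $K$.
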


\begin{proof}
	See Appendix D.
\end{proof}

This proposition states that if conditions (i) and (ii) of Theorem~\ref{thm:Detectability} do not hold, condition (iii) holds if and only if a certain modification of the eigenspace associated with an uncontrollable \footnote{By ``controllable", we mean $((A+B^a_{q^a}\hat M)|_{V^*}, \hat B_N)$-controllable throughout this subsection.} and unstable eigenvalue of $(A+B^a_{q^a}\hat M)|_{V^*}$ is possible via the ``feedback gain" $K$. The important thing revealed by the proposition is that we only need to focus on the uncontrollable and unstable eigenvalues \footnote{Indeed, these eigenvalues are invariant zeros}
when we are checking whether there exists a qualified $K$. To see if such a $K$ exists or not, one can refer to the conventional problem of eigenspace assignment in \cite{sinswat1977eigenvalue}, where it has been shown that there is a $K$ such that $\lambda$ and $G$ are an eigenvalue and some corresponding generalized eigenvectors (not necessarily all) of $(A+B^a_{q^a}\hat M)|_{V^*}+\hat B_N K$ if and only if
\begin{equation}\label{eq:SylvesterTypeEq}
(I - \hat B_N \hat{B}_N^\dagger) (GJ(\lambda) - (A+B^a_{q^a}\hat M)|_{V^*} G) = 0,
\end{equation}
where $J(\lambda)$ is the Jordan block associated with $\lambda$ and its size should be compatible with $G$. If $\lambda$ is an uncontrollable eigenvalue, its multiplicity in the uncontrollable subsystem is invariant no matter what $K$ is chosen. Thus, one can start with finding the solution space of (\ref{eq:SylvesterTypeEq}) assuming $J$ has the size of $1$-by-$1$, and if there is a solution not contained in $\text{ker}\{(E+F^a_{q^a}\hat M)V^*\}$, one can claim the system is vulnerable. Otherwise, one can increase the size of $J$. Furthermore, one can also check other unstable uncontrollable eigenvalues by repeating the process. It can be claimed that the attack mode $q^a$ is detectable if and only if for any uncontrollable unstable eigenvalue $\lambda$ with feasible $J(\lambda)$, the solution space of (\ref{eq:SylvesterTypeEq}) is always contained in $\text{ker}\{(E+F^a_{q^a}\hat M)V^*\}$. Last but not least, it should be pointed out that equation (\ref{eq:SylvesterTypeEq}) is merely a linear equation where $G$ is the unknown, so one can at least characterize its solution space by rewriting it into the standard form. Besides, readers can refer to general Sylvester equations. The following two remarks provide two special cases where condition (iii) of Theorem~\ref{thm:Detectability} can be checked easily. 
\\
\begin{remark}
	Consider a special case that $\text{ker}\{(E+F^a_{q^a}\hat M)V^*\}$ is trivially $\{0\}$. It is implied that $\hat B_N$ is zero when coniditon (ii) in Theorem~\ref{thm:Detectability} does not hold. It means none of the eigenvalues of $(A+B^a_{q^a}\hat M)|_{V^*}$ is controllable, and so, the condition in Proposition~\ref{lm:Kunstableuncontrollable} is equivalent to the fact that $(A+B^a_{q^a}\hat M)|_{V^*}$ is unstable, which is coincident with the main result in \cite{sui2020vulnerability}.\\
\end{remark}

\begin{remark}
	Consider another special case where $\text{ker}\{(E+F^a_{q^a}\hat M)V^*\}$ is strictly equal to the controllable subspace. For this case, we can show the condition in Proposition~\ref{lm:Kunstableuncontrollable} is equivalent to the fact that $(A+B^a_{q^a}\hat M)|_{V^*}$ has an unstable eigenvalue whose generalized eigenspace is not contained in $\text{ker}\{(E+F^a_{q^a}\hat M)V^*\}$. 
\end{remark}

\subsection{Attack Detection under Bounded Uncertainties}
In this section, we look into an attack detection strategy:
\begin{equation}
\begin{aligned}
||y_k|| \geq \epsilon &\implies \text{ Alarm is triggered at } k. \\
%||y_k|| < \epsilon &\implies \text{ No alarm}
\end{aligned}	
\end{equation}
According to Definition \ref{def:Detectability}, if the attack mode $q^a$ is detectable, an alarm will be triggered if $||z_{k_z}|| > \delta(\epsilon)$ at some $k_z$, assuming $x_0 = 0$ and $\{w_k\}_k = 0$. Meanwhile, with the same setup, there is no false alarm no matter what $\epsilon > 0$ is taken, i.e., $||y_k||$ is identically zero when $\{a_k\}_k = 0$. It is certainly not the case where $\{w_k\}_k$ is bounded but nonzero. In this case, the false alarm can be avoided while the sufficiently severe attack can be detected if $\epsilon$ is well-designed, which is shown in the following corollary:
\\
\begin{proposition}\label{pr:detectionwithnoise}
Suppose the attack mode $q^a$ is detectable, $\hat \Sigma$ in (\ref{eq:NomLiftedSys}) is stable and $||w_k|| \leq 1, \forall k$, without loss of generality. There are $\delta$ and $\epsilon$ such that 
\begin{enumerate}[(i)]
\item There is no false alarm, i.e., \\
$||\mathbf{y}^{q^a}_{k}(0, 0, \{w_k\}_{k=0}^\infty)|| < \epsilon$ for all $k \in \mathbb{N}$; and 
\item Severe attacks can be detected, i.e., \\
$||\mathbf{y}^{q^a}_{k_y}(0, \{a_k\}_{k=0}^\infty, \{w_k\}_{k=0}^\infty)|| \geq \epsilon$ for some $k_y$ if \\ $||\mathbf{z}^{q^a}_{k_z}(0, \{a_k\}_{k=0}^\infty, \{w_k\}_{k=0}^\infty)|| \geq \delta$ for some $k_z$. 
\end{enumerate}
\end{proposition}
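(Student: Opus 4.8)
The plan is to exploit the linearity of $\Sigma^\Delta$ in (\ref{eq:DeltaLiftedSys}) to split each response into an \emph{attack part} (with the noise switched off) and a \emph{noise part} (with the attack switched off), and then to control the two parts separately: the noise part is uniformly bounded because $A$ is Schur, and the attack part is governed by detectability. Concretely, since the maps in (\ref{eq:ISIOmaps}) are linear in $(x_0,\{a_k\},\{w_k\})$ and $x_0=0$ throughout, superposition gives
\begin{equation}
\mathbf{y}^{q^a}_k(0,\{a_k\},\{w_k\}) = y^a_k + y^w_k, \quad \mathbf{z}^{q^a}_k(0,\{a_k\},\{w_k\}) = z^a_k + z^w_k,
\end{equation}
where $y^a_k := \mathbf{y}^{q^a}_k(0,\{a_k\},0)$, $y^w_k := \mathbf{y}^{q^a}_k(0,0,\{w_k\})$, and $z^a_k, z^w_k$ are defined analogously. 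Note that $y^a_k$ and $z^a_k$ are exactly the quantities appearing in Definition~\ref{def:Detectability}.

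First I would bound the noise parts. Since $A=\hat A^T$ inherits the Schur property of $\hat A$, the free response to a bounded input is uniformly bounded: writing $x^w_k = \sum_{j=0}^{k-1} A^{k-1-j} B^w w_j$ and using $\sum_{j\ge 0}\|A^j\| < \infty$ together with $\|w_k\|\le 1$, there is a constant $\bar x_w$ with $\|x^w_k\|\le \bar x_w$ for all $k$. Consequently $\|y^w_k\| \le \|C\|\,\bar x_w + \|D^w\| =: \bar y_w$ and $\|z^w_k\| \le \|E\|\,\bar x_w + \|F^w\| =: \bar z_w$ for all $k$.

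The key step is to upgrade the threshold statement of Definition~\ref{def:Detectability} into a homogeneous linear gain bound. Because $y^a_k$ and $z^a_k$ depend linearly, hence positively homogeneously, on $\{a_k\}$, I would argue as follows: let $\delta_0,\epsilon_0>0$ be the constants from detectability and set $\gamma := \epsilon_0/\delta_0$. For any attack with $\sup_k\|z^a_k\|>0$, scaling $\{a_k\}$ by $c$ scales $z^a_k$ and $y^a_k$ by $c$; choosing $c$ just above $\delta_0/\sup_k\|z^a_k\|$ makes the scaled $z$-response exceed $\delta_0$, so detectability forces the scaled $y$-response to reach $\epsilon_0$, and undoing the scaling yields
\begin{equation}
\sup_k\|y^a_k\| \ge \gamma \sup_k\|z^a_k\|
\end{equation}
for every attack sequence, the case $\sup_k\|z^a_k\|=0$ being trivial.

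Finally I would assemble the constants. For (i), taking any $\epsilon>\bar y_w$ gives $\|\mathbf{y}^{q^a}_k(0,0,\{w_k\})\|=\|y^w_k\|\le\bar y_w<\epsilon$ for all $k$. For (ii), suppose $\|z_{k_z}\|\ge\delta$ for some $k_z$, where $z_k=z^a_k+z^w_k$; then $\sup_k\|z^a_k\|\ge \|z^a_{k_z}\|\ge \delta-\bar z_w$, so by the gain bound $\sup_k\|y^a_k\|\ge\gamma(\delta-\bar z_w)$. Choosing $\delta>\bar z_w+(\epsilon+\bar y_w)/\gamma$ makes this supremum strictly exceed $\epsilon+\bar y_w$, which guarantees an index $k_y$ with $\|y^a_{k_y}\|>\epsilon+\bar y_w$, whence $\|y_{k_y}\|\ge\|y^a_{k_y}\|-\|y^w_{k_y}\|>\epsilon$. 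The main obstacle I anticipate is the third step: one must be careful that the existential ``for some $k_z$'' form of detectability, which is not a priori a uniform bound, can indeed be homogenized into $\sup_k\|y^a_k\|\ge\gamma\sup_k\|z^a_k\|$, and that the supremum manipulations respect the ``for some $k_y$'' conclusion, using that a supremum strictly above a value is exceeded by some individual index. The remaining triangle-inequality bookkeeping and the Schur bound are routine.
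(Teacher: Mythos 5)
Your proof is correct and follows essentially the same route as the paper's: superposition to separate the attack and noise responses, a stability bound on the noise channel, and a scaled (homogenized) version of detectability applied to the attack part. The only real difference is presentational --- you argue the implication directly via an explicit gain $\gamma=\epsilon_0/\delta_0$, carefully justifying why the existential definition of detectability can be turned into a supremum inequality, whereas the paper argues the contrapositive and leaves the corresponding scaling step (the constant $\bar\delta(2\epsilon)$) implicit.
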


\begin{proof}
	By the stability condition, $\epsilon$ can be taken as a number greater than $\Sigma ||R^{yw}_k||_2$ to satisfy statement (i), where $R^{yw}_k$ is the impulse response matrix of (\ref{eq:DeltaLiftedSys}) from $w_k$ to $y_k$. Taking $\tilde{\delta}$ as $\Sigma ||R^{wz}_k||_2$, where $R^{zw}_k$ is the impulse response matrix from $w_k$ to $z_k$, we have $||\mathbf{z}^{q^a}_{k}(0, 0, \{w_k\}_k)|| < \tilde \delta$ for all $k$. By detectability, there is a $\bar \delta (2\epsilon)$ such that  $||\mathbf{y}^{q^a}_{k}(0, \{a_k\}_{k=0}^\infty, 0)|| < 2 \epsilon$ for all $k$ implies $||\mathbf{z}^{q^a}_{k}(0, \{a_k\}_{k=0}^\infty, 0)|| < \bar \delta$ for all $k$. Let $\delta$ be $\bar{\delta} + \tilde{\delta}$. Now, we can see that if $||\mathbf{y}^{q^a}_{k_y}(0, \{a_k\}_{k=0}^\infty, \{w_k\}_{k=0}^\infty)|| < \epsilon$ for all $k$, then,
	\begin{equation}
	\begin{aligned}
	&||\mathbf{y}^{q^a}_{k}(0, \{a_k\}_{k=0}^\infty, 0)|| \\
	\leq &||\mathbf{y}^{q^a}_{k}(0, \{a_k\}_{k=0}^\infty, \{w_k\}_{k=0}^\infty)|| + ||\mathbf{y}^{q^a}_{k}(0, 0, \{-w_k\}_{k=0}^\infty)|| \\
	< &2 \epsilon,
	\end{aligned}
	\end{equation}
	for all $k$, which implies
	 %$||\mathbf{y}^{q^a}_{k_y}(0, \{a_k\}_{k=0}^\infty, 0)|| < 2 \epsilon$, which implies 
	%$||\mathbf{z}^{q^a}_{k}(0, \{a_k\}_{k=0}^\infty, \{w_k\}_k^\infty)|| \leq ||\mathbf{z}^{q^a}_{k}(0, \{a_k\}_{k=0}^\infty, 0)|| + ||\mathbf{z}^{q^a}_{k}(0, 0, \{w_k\}_k^\infty)|| < \bar{\delta} + \tilde{\delta} = \delta$. 
	\begin{equation}
	\begin{aligned}
	&||\mathbf{z}^{q^a}_{k}(0, \{a_k\}_{k=0}^\infty, \{w_k\}_{k=0}^\infty)|| \\
	\leq &||\mathbf{z}^{q^a}_{k}(0, \{a_k\}_{k=0}^\infty, 0)|| + ||\mathbf{z}^{q^a}_{k}(0, 0, \{w_k\}_{k=0}^\infty)|| \\
	< &\bar{\delta} + \tilde{\delta} = \delta
	\end{aligned}
	\end{equation}
	for all $k$. Therefore, $\delta$ and $\epsilon$ satisfy statement (ii). 
\end{proof}

\begin{remark}\label{rm:designresidual}
	Detectability remains invariant when the output $y$ is transformed to $\tilde{y}_\mathtt{z} = H_y(\mathtt{z})y_\mathtt{z}$, where $H_y(\mathtt{z})$ is any stable and invertible filter with a stable inverse. Note that with the transformed output, the relationship between $\delta$ and $\epsilon$ can be different, which means by choosing an appropriate $H_y(\mathtt{z})$, $\tilde{y}$ could be more sensitive to the attack $\{a_k\}_k$ instead of noise $\{w_k\}_k$. The similar problems have been widely considered in the area of fault detection and isolation. Motivated by Proposition~\ref{pr:detectionwithnoise}, we propose a general formulation for this design problem: 
	\begin{equation}
		\min_{\delta, \epsilon, H_y(\mathtt{z})} \delta
	\end{equation}
	subject to \\
	(1) $H_y(\mathtt{z})$ is stable and invertible with stable inverse;\\
	and for any $\{a_k\}_k$ and $\{w_k\}_k$ ($||w_k||\leq 1$, $\forall k$),\\
	(2) $||\tilde{\mathbf{y}}^{q^a}_{k}(0, 0, \{w_k\}_{k=0}^\infty)|| < \epsilon$ for all $k$; \\%, $\forall k$, $\forall \{w_k\}_k$ that is bounded, i.e., $||w_k|| \leq  1$;
	(3) $||\tilde{\mathbf{y}}^{q^a}_{k}(0, \{a_k\}_{k=0}^\infty, \{w_k\}_{k=0}^\infty)|| < \epsilon$ for all $k$ \\ $\implies$ 
	$||\mathbf{z}^{q^a}_{k_z}(0, \{a_k\}_{k=0}^\infty, \{w_k\}_{k=0}^\infty)|| < \delta$ for all $k$. \\
	Note that we use $\tilde{\mathbf{y}}^{q^a}_{k}(\cdot, \cdot, \cdot)$ to denote the input-to-output behavior of $\tilde{y}$ that is obtained by $\tilde{y}_\mathtt{z} = H_y(\mathtt{z}) y_\mathtt{z}$. The problem is feasible according to Proposition~\ref{pr:detectionwithnoise}, and it can be relaxed and solved easily, if needed, by assuming a certain structure of $H_y(\mathtt{z})$ and replacing (3) with a conservative but simpler relationship between $\delta$ and $\epsilon$. By solving this design problem, $\tilde{y}$ can be used to monitor the system, which guarantees the sensitivity to the attack quantified by $\delta$ as well as the absence of false alarm. We would remark that this formulation also has a limitation that the time required to detect the severe attack is not guaranteed, and addressing this issue can be a future work. 
\end{remark}

\section{Attack Identification} \label{sec:Idt}
In this section, we present the main results on attack identification using the lifted system.
\subsection{Characterization of Discernibility}
Motivated by existing works on mode discernibility for switched systems, we can construct the augmented system that facilitates the investigation:
\begin{equation} \label{eq:AugSys}
\Sigma^\Delta_{pq}: \left\{
\begin{aligned}
x^{pq}_{k+1} &= A_{pq} x^{pq}_k + B^a_{pq} a^{pq}_k\\
y^{pq}_k &= C_{pq} x^{pq}_k + D^a_{pq} a^{pq}_k\\
z^{pq}_k &= E_{pq} x^{pq}_k + F^a_{pq} a^{pq}_k
\end{aligned}
\right. ,
\end{equation}
where $x^{pq}_k = [x^p_k{}' \,\ x^q_k{}']'$, $z^{pq}_k = [z^p_k{}' \,\ z^q_k{}']'$, $a^{pq}_k = [a^p_k{}' \,\ a^q_k{}']'$, $y^{pq}_k = y^p_k - y^q_k$, and
\begin{equation*}
\begin{aligned}
%x^{pq}_k &= \begin{bmatrix}
%x^p_k \\ x^q_k
%\end{bmatrix}, \ \
%&a^{pq}_k &=\begin{bmatrix}
%a^p_k \\ a^q_k
%\end{bmatrix}, \\
A_{pq}  &= \begin{bsmallmatrix}
A &0 \\ 0 &A
\end{bsmallmatrix}, 
&B^a_{pq}  &= \begin{bsmallmatrix}
B^a_p &0 \\ 0 &B^a_q
\end{bsmallmatrix}, \\
C_{pq} &= \begin{bsmallmatrix}
-C &C
\end{bsmallmatrix}, &D^a_{pq}  &= \begin{bsmallmatrix}
D^a_p &-D^a_q
\end{bsmallmatrix},\\
E_{pq}  &= \begin{bsmallmatrix}
E &0 \\ 0 &E
\end{bsmallmatrix}, 
&F^a_{pq}  &= \begin{bsmallmatrix}
F^a_p &0 \\ 0 &F^a_q
\end{bsmallmatrix}, \\
\end{aligned}
\end{equation*}
It can be seen that by $\Sigma^\Delta_{pq}$, discernibility is related to the output-nulling problem, similar to detectability that we have characterized in Section~\ref{sec:Det}. For characterization of discernibility, we use $\mathcal{V}_{pq}$ to denote the maximal $y^{pq}$-nulling subspace, and $(M,N) \in \mathcal{F}(\mathcal{V}_{pq})$ to indicate that they satisfy
\begin{equation}\label{eq:VpqisInv}
\begin{aligned}
(A_{pq}+B^a_{pq}M)\mathcal{V}_{pq} &\subset \mathcal{V}_{pq}, \\
(C_{pq}+D^a_{pq}M)\mathcal{V}_{pq} &= \{0\},
\end{aligned}
\end{equation}
and
\begin{equation}\label{eq:VpqinvDOF}
\begin{aligned}
\text{Im}\{N\} &\subset \text{ker}\{D^a_{pq}\} \\
\text{Im}\{B^a_{pq}N\} &= B^a_{pq} \text{ker}\{D^a_{pq}\} \cap \mathcal{V}_{pq}.
\end{aligned}
\end{equation}
In addition, we present the input-to-state and input-to-output behaviors of $\Sigma^\Delta_{pq}$ as
\begin{equation}\label{eq:ISIOAugmaps}
\begin{aligned}
x^{pq}_k &= \mathbf{x}^{pq}_k(x^{pq}_0, \{a^{pq}_k\}_{k=0}^\infty) \\
%		&\triangleq A^k x_0 + \Sigma_{j=0}^{k-1}A^{k-1-j}B^a_{q^a} a_j + \Sigma_{j=0}^{k-1}A^{k-1-j}B^w w_j \\
y^{pq}_k &= \mathbf{y}^{pq}_k(x^{pq}_0, \{a^{pq}_k\}_{k=0}^\infty) \\
&= \mathbf{y}^{p}_k(x^{p}_0, \{a^{p}_k\}_{k=0}^\infty,0) - \mathbf{y}^{q}_k(x^{q}_0, \{a^{q}_k\}_{k=0}^\infty, 0),\\
z^{pq}_k &= \mathbf{z}^{pq}_k(x^{pq}_0, \{a^{pq}_k\}_{k=0}^\infty)\\
&= \begin{bmatrix}
\mathbf{z}^{p}_k(x^{p}_0, \{a^{p}_k\}_{k=0}^\infty,0)\\
\mathbf{z}^{q}_k(x^{q}_0, \{a^{q}_k\}_{k=0}^\infty,0)
\end{bmatrix}
\end{aligned}
\end{equation}
for convenience. With these notations, we present the main result on the characterization of discernibility now:
\\
\begin{theorem}\label{thm:Discernibility}
	A pair of attack modes $p$ and $q$ from $Q$ ($p \neq q$)  are discernible if and only if all of the following conditions hold:
	\begin{enumerate}[(i)]
	\item $\text{ker}\{B^a_{pq}\}\cap \text{ker}\{D^a_{pq}\} \subset \text{ker}\{F^a_{pq}\}$;
	\item for any $(M, N) \in \mathcal{F}(\mathcal{V}_{pq})$, %$(E_{pq}+F^a_{pq}M)
	%\mathcal{V}_{pq} = \{0\}$.
	\begin{equation}
	\begin{aligned}
		F^a_{pq} N &= 0, \\
		(E_{pq}+F^a_{pq}M)\mathcal{V}_{pq} &= \{0\}.
	\end{aligned}
	\end{equation}
\end{enumerate}
\end{theorem}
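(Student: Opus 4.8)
The plan is to recognize that discernibility of the pair $(p,q)$ is precisely a detectability-type property of the augmented system $\Sigma^\Delta_{pq}$ in (\ref{eq:AugSys}), with the one crucial difference from Definition~\ref{def:Detectability} that the augmented initial state $x^{pq}_0 = [x^p_0{}'\ x^q_0{}']'$ is now free rather than pinned at the origin. Since $z^{pq}_k$ stacks $z^p_k$ and $z^q_k$, the event ``$||\mathbf z^{p}_{k_z}|| \ge \delta$ or $||\mathbf z^{q}_{k_z}|| \ge \delta$ for some $k_z$'' is equivalent, up to a factor of $\sqrt{2}$, to ``$||z^{pq}_{k_z}|| \ge \delta$ for some $k_z$,'' and $y^{pq}_k = y^p_k - y^q_k$ is exactly the quantity compared against $\epsilon$. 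I would therefore restate discernibility as: there exist $\delta,\epsilon>0$ so that for every $(x^{pq}_0,\{a^{pq}_k\}_k)$, $\sup_k||z^{pq}_k|| \ge \delta$ implies $\sup_k||y^{pq}_k|| \ge \epsilon$; by homogeneity of the (noise-free) maps in (\ref{eq:ISIOAugmaps}) this is the finite-gain statement that $\sup_k||z^{pq}_k|| \le C\,\sup_k||y^{pq}_k||$ for some finite $C$ and all trajectories with bounded $y^{pq}$. I would prove the two directions against this reformulation, and separately record, via Lemma~\ref{lm:Friends}, that once condition (i) holds both equalities in condition (ii) are independent of the chosen friend, which legitimizes the ``for any $(M,N)$'' quantifier.

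For necessity I would argue the contrapositive and exhibit an indiscernible trajectory whenever (i) or (ii) fails. If (i) fails, choose $v \in \text{ker}\{B^a_{pq}\}\cap\text{ker}\{D^a_{pq}\}$ with $F^a_{pq}v \ne 0$; injecting $a^{pq}_0 = v$ from $x^{pq}_0 = 0$ leaves the state and $y^{pq}$ identically zero while $z^{pq}_0 = F^a_{pq}v \ne 0$. If the first half of (ii) fails, i.e. $F^a_{pq}N \ne 0$ for some friend $(M,N)$, I would use the $y$-nulling policy $a^{pq}_k = M x^{pq}_k + N\tilde a^{pq}_k$ of Lemma~\ref{lm:NullingPolicy} from $x^{pq}_0 = 0$ and pick $\tilde a^{pq}_0$ with $F^a_{pq}N\tilde a^{pq}_0 \ne 0$, which keeps $y^{pq}\equiv 0$ but yields $z^{pq}_0 = F^a_{pq}N\tilde a^{pq}_0 \ne 0$. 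If the second half fails, i.e. $(E_{pq}+F^a_{pq}M)\mathcal{V}_{pq}\ne\{0\}$, I would start from $x^{pq}_0 = v\in\mathcal{V}_{pq}$ with $(E_{pq}+F^a_{pq}M)v\ne0$ and apply $a^{pq}_k = Mx^{pq}_k$, again giving $y^{pq}\equiv0$ and $z^{pq}_0\ne0$. In each case scaling the construction makes $\sup_k||z^{pq}_k||$ arbitrarily large while $y^{pq}\equiv0$, so no admissible $\delta,\epsilon$ can exist and the modes are indiscernible.

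For sufficiency I would assume (i) and (ii) and first settle the qualitative core, $y^{pq}\equiv0 \Rightarrow z^{pq}\equiv0$: by Lemma~\ref{lm:NullingPolicy}, $y^{pq}\equiv0$ forces $x^{pq}_0\in\mathcal{V}_{pq}$ and $a^{pq}_k = Mx^{pq}_k + N\tilde a^{pq}_k$, the resulting trajectory stays in the $(A_{pq}+B^a_{pq}M)$-invariant subspace $\mathcal{V}_{pq}$, and then $z^{pq}_k = (E_{pq}+F^a_{pq}M)x^{pq}_k + F^a_{pq}N\tilde a^{pq}_k = 0$ by the two equalities in (ii). The remaining step is the uniform finite-gain bound. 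I would handle it by pre-applying the friend feedback $a^{pq}_k = Mx^{pq}_k + \tilde a^{pq}_k$ and changing coordinates adapted to $\mathcal{V}_{pq}$: since $(C_{pq}+D^a_{pq}M)$ and $(E_{pq}+F^a_{pq}M)$ both annihilate $\mathcal{V}_{pq}$, both $y^{pq}$ and $z^{pq}$ depend only on the transversal (quotient) state component and on the part of $\tilde a^{pq}$ outside $\text{Im}\{N\}$, on which the map to $y^{pq}$ is left-invertible. Because $A_{pq}=\text{diag}(A,A)$ is Schur-stable (as $\hat A$, hence $A=\hat A^T$, has all eigenvalues inside the unit disk), every transfer map of $\Sigma^\Delta_{pq}$ from $x^{pq}_0$ and $a^{pq}_k$ to $y^{pq}$ and $z^{pq}$ is stable; combining left-invertibility on the quotient with this stability yields a finite $\ell_\infty$ gain from $y^{pq}$ to $z^{pq}$. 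Concretely I would realize this with the $\mathtt{z}$-domain factorization sketched after Theorem~\ref{thm:Detectability}, writing $z^{pq}_\mathtt{z} = T_z a^{pq}_\mathtt{z} + G_z x^{pq}_0$ and $y^{pq}_\mathtt{z} = T_y a^{pq}_\mathtt{z} + G_y x^{pq}_0$ with $G_y,G_z$ stable, and using (i)--(ii) to show the $a$-to-$z$ action factors through the $a$-to-$y$ action modulo these stable free responses.

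I expect the finite-gain upgrade to be the main obstacle: the implication $y^{pq}\equiv0\Rightarrow z^{pq}\equiv0$ is immediate from Lemma~\ref{lm:NullingPolicy}, but ruling out a sequence of trajectories with $\sup_k||y^{pq}_k||\to0$ yet $\sup_k||z^{pq}_k||$ bounded away from zero requires the quotient/left-invertibility decomposition together with Schur-stability of $A_{pq}$, and this is exactly where the argument parallels the hard part of the proof of Theorem~\ref{thm:Detectability}. A secondary technical point, dispatched by Lemma~\ref{lm:Friends}, is checking that $F^a_{pq}N=0$ and $(E_{pq}+F^a_{pq}M)\mathcal{V}_{pq}=\{0\}$ are friend-independent once (i) holds, so that necessity (which produces one bad friend) and sufficiency (which uses one good friend) are consistent with the universally quantified condition (ii).
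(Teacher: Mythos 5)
Your reformulation of discernibility as a free-initial-state, output-nulling property of the augmented system $\Sigma^\Delta_{pq}$, and your entire necessity argument, coincide with the paper's Appendix E: the paper likewise produces the three scaled stealthy constructions (a direction of $\text{ker}\{B^a_{pq}\}\cap\text{ker}\{D^a_{pq}\}$ not killed by $F^a_{pq}$; an output-nulling input with $F^a_{pq}\hat N\tilde a\neq 0$; an initial state in $\mathcal{V}_{pq}$ not annihilated by $E_{pq}+F^a_{pq}\hat M$). Where you genuinely diverge is the quantitative half of sufficiency. The paper stays in the time domain: from $||y^{pq}_k||<\epsilon$ for all $k$ it gets $||P^\perp_{\mathcal{V}_{pq}}x^{pq}_k||\leq\epsilon_1$, splits $B^a_{pq}a^{pq}_k$ into a friend term plus a residual $\Delta a^{pq}_k$ whose images under $B^a_{pq}$ and $D^a_{pq}$ are bounded (via Lemma~\ref{lm:projection}), and then notes that in the resulting four-term expansion of $z^{pq}_k$ the terms supported on $\mathcal{V}_{pq}$ and on $\text{Im}\{\hat N\}$ vanish outright by condition (ii) while the other two are bounded by condition (i); crucially no convolution sum survives, which is why Theorem~\ref{thm:Discernibility} has no analogue of condition (iii) of Theorem~\ref{thm:Detectability} and why the paper's argument never needs Schur-stability of $A_{pq}$ (your reliance on it is superfluous here, though harmless since the paper assumes it). Your alternative — quotient by $\mathcal{V}_{pq}$ and the $N$-directions, then invoke left-invertibility with a stable left inverse — can be made rigorous (the quotient has trivial weakly-unobservable subspace, hence no finite invariant zeros, hence an FIR left inverse), but as written it is the one assertion that carries all the weight and is not proved; note that the paper's own $\mathtt{z}$-domain remark after Theorem~\ref{thm:Detectability} treats stability of $T_y^\dagger$ as an extra hypothesis, whereas you would have to derive it from (i)--(ii). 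So the architecture and conclusion are right, and your friend-independence observation via Lemma~\ref{lm:Friends} matches the paper's remark following the theorem; the only substantive difference is that the paper's residual-decomposition bound replaces your (sketched) inverse-system argument, and is the more self-contained of the two.
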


\begin{proof}
	See Appendix E.
\end{proof}

Note that by taking any arbitrary $(\hat M, \hat N)\in \mathcal{F}(\mathcal{V}_{pq})$, condition (ii) holds if and only if $F^a_{pq} \hat N = 0$ and $(E_{pq}+F^a_{pq}\hat M)\mathcal{V}_{pq} = \{0\}$, similar to the statements in Proposition~\ref{pr:GeometricMN}. To assure that the set of attack modes $Q$ is identifiable, we need to check discernibility between every pair of modes in $Q$. 

\subsection{Attack Mode Identification}

The following corollary derived from Theorem~\ref{thm:Discernibility} shows that if a sufficiently large impact is caused at time instant $k_0$ by the attack mode $p$, then mode $q$, distinguishable from mode $p$, must behave differently within $n+1$ time steps (recall that $n$ is the dimension of the state space).
\\
\begin{corollary}\label{cr:n+1stepID}
	The following statements are equivalent:
	\begin{enumerate}[(i)]
		\item Attack modes $p$ and $q$ are discernible;
		\item There exist $\epsilon > 0$ and $\delta > 0$ such that 
		\begin{equation}
		||\mathbf{z}^{pq}_{k_0}(x^{pq}_{k_0}, \{a^{pq}_k\}_{k=k_0}^\infty)|| \geq \delta
		\end{equation}
		implies
		\begin{equation}
		||\mathbf{y}^{pq}_k(x^{pq}_{k_0}, \{a^{pq}_k\}_{k=k_0}^\infty)|| \geq \epsilon
		\end{equation}
		for some $k \in \{k_0, ..., k_0 + n + 1\}$.
	\end{enumerate}
\end{corollary}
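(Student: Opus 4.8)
The plan is to establish the two implications separately: \textbf{(ii)}$\Rightarrow$\textbf{(i)} is a short time-invariance argument, while \textbf{(i)}$\Rightarrow$\textbf{(ii)} is the substantive direction and rests on a finite-horizon refinement of the $y^{pq}$-nulling analysis underlying Theorem~\ref{thm:Discernibility}. For \textbf{(ii)}$\Rightarrow$\textbf{(i)}, I would use that $\Sigma^\Delta_{pq}$ is time-invariant: given a trajectory with $\|\mathbf{z}^{pq}_{k_z}\|\ge\delta$ at some $k_z$, shift it so that $k_z$ becomes the window start $k_0$; then (ii) yields $k_y\in\{k_z,\dots,k_z+n+1\}$ with $\|\mathbf{y}^{pq}_{k_y}\|\ge\epsilon$. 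Reading this back through (\ref{eq:ISIOAugmaps}), where $z^{pq}=[z^p{}'\ z^q{}']'$ and $y^{pq}=y^p-y^q$, reproduces Definition~\ref{def:Identifiability} with the same $\epsilon$ and a $\delta$ adjusted by the fixed norm-equivalence constant between $\|z^{pq}\|$ and $\max(\|z^p\|,\|z^q\|)$.

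For \textbf{(i)}$\Rightarrow$\textbf{(ii)} I would first reduce to a purely algebraic claim. Taking $k_0=0$ without loss of generality, both $d\mapsto z^{pq}_0$ and $d\mapsto(y^{pq}_0,\dots,y^{pq}_{n+1})$ are linear in the finite datum $d=(x^{pq}_0,a^{pq}_0,\dots,a^{pq}_{n+1})$. By homogeneity, the existence of $\delta,\epsilon$ in (ii) is equivalent to a bound $\|z^{pq}_0\|\le c\,\max_{0\le k\le n+1}\|y^{pq}_k\|$ for some $c>0$, which holds if and only if the kernel of the $(n{+}2)$-step output map is contained in the kernel of the map $d\mapsto z^{pq}_0$. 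Hence it suffices to prove: if $y^{pq}_k=0$ for every $k\in\{0,\dots,n+1\}$, then $z^{pq}_0=0$.

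The key to the horizon constant is that the maximal nulling subspace is large. Let $\mathcal{N}_j$ be the set of states from which $y^{pq}$ can be made zero over $j$ consecutive steps; this is a decreasing chain of subspaces of the augmented state space $\mathbb{R}^{2n}$ with $\bigcap_j\mathcal{N}_j=\mathcal{V}_{pq}$, stabilizing at an index at most $2n-\dim\mathcal{V}_{pq}$. Now the diagonal subspace $\mathcal{D}=\{x^{pq}:x^p=x^q\}$ is $A_{pq}$-invariant and satisfies $C_{pq}\mathcal{D}=\{0\}$ since $C_{pq}=[-C\ \ C]$, so $\mathcal{D}\subset\mathcal{V}_{pq}$ and $\dim\mathcal{V}_{pq}\ge n$. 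Therefore the chain stabilizes within $n$ steps, i.e. $\mathcal{N}_j=\mathcal{V}_{pq}$ for all $j\ge n$, and nulling $y^{pq}_0,\dots,y^{pq}_{n+1}$ forces both $x^{pq}_0\in\mathcal{N}_{n+2}=\mathcal{V}_{pq}$ and $x^{pq}_1=A_{pq}x^{pq}_0+B^a_{pq}a^{pq}_0\in\mathcal{N}_{n+1}=\mathcal{V}_{pq}$. With this in hand, I would fix $(M,N)\in\mathcal{F}(\mathcal{V}_{pq})$ and set $r=a^{pq}_0-Mx^{pq}_0$. Using (\ref{eq:VpqisInv}) on $x^{pq}_0\in\mathcal{V}_{pq}$, the constraint $y^{pq}_0=0$ gives $D^a_{pq}r=0$, while invariance together with $x^{pq}_1\in\mathcal{V}_{pq}$ gives $B^a_{pq}r\in B^a_{pq}\text{ker}\{D^a_{pq}\}\cap\mathcal{V}_{pq}=\text{Im}\{B^a_{pq}N\}$ by (\ref{eq:VpqinvDOF}); choosing $\xi$ with $B^a_{pq}r=B^a_{pq}N\xi$ yields $r-N\xi\in\text{ker}\{B^a_{pq}\}\cap\text{ker}\{D^a_{pq}\}$. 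Invoking the discernibility conditions of Theorem~\ref{thm:Discernibility} --- condition (i) gives $\text{ker}\{B^a_{pq}\}\cap\text{ker}\{D^a_{pq}\}\subset\text{ker}\{F^a_{pq}\}$, and condition (ii) gives $F^a_{pq}N=0$ and $(E_{pq}+F^a_{pq}M)\mathcal{V}_{pq}=\{0\}$ --- I then compute $z^{pq}_0=(E_{pq}+F^a_{pq}M)x^{pq}_0+F^a_{pq}(r-N\xi)+F^a_{pq}N\xi=0$.

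The routine parts are the shift and the homogeneity reduction; the step I expect to require the most care is fixing the horizon constant, namely arguing that $(n{+}2)$-step nulling already pushes \emph{both} $x^{pq}_0$ and $x^{pq}_1$ into $\mathcal{V}_{pq}$ rather than into a strictly larger finite-horizon set. This is precisely where the inequality $\dim\mathcal{V}_{pq}\ge n$ obtained from the diagonal subspace $\mathcal{D}$ is essential; without it the naive bound would involve the full augmented dimension $2n$. The remaining computation is a direct application of the friend relations (\ref{eq:VpqisInv})--(\ref{eq:VpqinvDOF}) and Theorem~\ref{thm:Discernibility}, with Lemma~\ref{lm:NullingPolicy} providing the identification of $\mathcal{V}_{pq}$ with the infinite-horizon nulling set used when passing to the limit $\bigcap_j\mathcal{N}_j$.
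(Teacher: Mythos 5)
Your proof is correct, and while it rests on the same geometric skeleton as the paper's Appendix F --- place $x^{pq}_{k_0}$ and $x^{pq}_{k_0+1}$ in (or near) $\mathcal{V}_{pq}$, decompose $a^{pq}_{k_0}$ through a friend $(M,N)$, and kill the terms of $z^{pq}_{k_0}$ using conditions (i)--(ii) of Theorem~\ref{thm:Discernibility} --- the route differs in two genuine ways. First, the paper argues perturbatively: it assumes $\|y^{pq}_k\|<\epsilon$ over the window, bounds $\|P^\perp_{\mathcal{V}_{pq}}x_{k_0}\|$ and $\|P^\perp_{\mathcal{V}_{pq}}x_{k_0+1}\|$, bounds the residuals $B^a_{pq}\Delta a^{pq}_{k_0}$ and $D^a_{pq}\Delta a^{pq}_{k_0}$, and concludes $\|z^{pq}_{k_0}\|$ is bounded. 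You instead reduce, via homogeneity of the finite-dimensional linear maps $d\mapsto z^{pq}_{k_0}$ and $d\mapsto(y^{pq}_{k_0},\dots,y^{pq}_{k_0+n+1})$, to the exact kernel-containment claim and prove that exact $(n{+}2)$-step nulling forces $z^{pq}_{k_0}=0$; this is cleaner here precisely because the horizon is finite, so the equivalence between kernel containment and a uniform bound is immediate, whereas the paper's $\epsilon$-chasing is the price it pays for reusing the machinery of its infinite-horizon Theorem~\ref{thm:Discernibility} proof. Second, and more substantively, you justify the horizon constant: the paper asserts ``by the property of $\mathcal{V}_{pq}$'' that $n{+}2$ outputs suffice to pin $x_{k_0}$ and $x_{k_0+1}$ to $\mathcal{V}_{pq}$, but the naive stabilization bound for the chain $\mathcal{N}_j$ in the augmented state space is $2n$, not $n$; your observation that the diagonal subspace $\mathcal{D}=\{x^p=x^q\}$ is $A_{pq}$-invariant, lies in $\ker C_{pq}$, and hence forces $\dim\mathcal{V}_{pq}\ge n$ is exactly the missing dimension count, and it is a worthwhile addition that the paper leaves implicit. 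The final computation with $r=a^{pq}_0-Mx^{pq}_0$, the membership $r-N\xi\in\ker\{B^a_{pq}\}\cap\ker\{D^a_{pq}\}$, and the invocation of $F^a_{pq}N=0$ and $(E_{pq}+F^a_{pq}M)\mathcal{V}_{pq}=\{0\}$ matches the paper's mechanism exactly, and your shift argument for (ii)$\Rightarrow$(i) fills in what the paper dismisses as ``clear.''
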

\begin{proof}
	See Appendix F.
\end{proof} 

This corollary implies that a projection-based method can discern the attack modes. Suppose for each mode $q$, at time instance $k_0+n+1$, the following residual can be computed by using the observed measurements $Y_{k_0:k_0+n+1} = [y_{k_0}{}', \,\ ... \,\ y_{k_0+n+1}{}']'$ (the similar way has been considered in the previous works, e.g., \cite{alessandri2005receding}):
\begin{equation}\label{eq:defResidual}
	r^q_{k_0} = ||P^\perp_{\mathcal{Y}^q_{n+1}} Y_{k_0:k_0+n+1}||,
\end{equation}
where $P^\perp_{\mathcal{Y}^q_{n+1}}$ is the projection matrix onto the orthogonal complement subspace of ${\mathcal{Y}^q_{n+1}} \triangleq \text{Im}\{[\mathcal{O}^q_{n+1}\,\ \mathcal{G}^q_{n+1}]\}$, and $\mathcal{O}^q_{n+1}$ and $\mathcal{G}^q_{n+1}$ are given as 
\begin{equation*}
\begin{aligned}
\mathcal{O}^q_{n+1} = \begin{bsmallmatrix}
C  \\
C A \\
\vdots\\
C A^{n+1}
\end{bsmallmatrix}, \mathcal{G}^q_{n+1} = \begin{bsmallmatrix}
D^a_{q} &0 &\cdots &0 \\
CB^a_{q} &D^a_{q} &\cdots &0 \\
\vdots &\vdots &\ddots &\vdots \\
CA^{n}B^a_{q} &CA^{n-1}B^a_{q} &\cdots &D^a_{q}
\end{bsmallmatrix}.
\end{aligned}
\end{equation*}
Then, if $Y_{k_0:k_0+n+1}$ is induced by attack mode $q$ in the ideal case where $\{w_k\}_k$ is identically zero, $r^q_{k_0}$ is zero. In addition, assuming modes $p$ and $q$ are discernible, as a result of Corollary~\ref{cr:n+1stepID}, one can find $\delta$ and $\epsilon$ such that if the attack with mode $q$ causes a severe impact at time instant $k_0$, i.e., $||\mathbf{z}^{q}_{k_0}(0, \{a_k\}_k, 0)|| \geq \delta$, then $r^p_{k_0} \geq \epsilon$. Hence, the following criteria can be applied to identify the attack, once the attack is detected at $k = 0$ (attack identification is performed only when an attack has been detected): 
\begin{equation}\label{eq:IDlogic}
\begin{aligned}
	\hat Q_0 &= Q\\
	%r^q_k \geq \epsilon \implies \text{Exclude $q$ from $\hat Q$,}
	\hat Q_{k+1} &= \hat Q_{k} \setminus \{q | r^q_k \geq \epsilon^q\}
\end{aligned}
\end{equation}
where $\hat Q$ is the set estimate of attack mode.

The following Proposition summarize the effectiveness of (\ref{eq:IDlogic}): 
\\
\begin{proposition}
	Suppose that the system is not vulnerable to any attack mode in $Q$, and $Q$ is identifiable. Assume that $\hat \Sigma$ in (\ref{eq:NomLiftedSys}) is stable, and $||w_k|| \leq 1, \forall k$, without loss of generality. Now consider the case that an alarm for attack detection has been triggered at $k = 0$, and the attack mode is constant. There are $\delta^q > 0$ and $\epsilon^q > 0$, $q \in Q$, such that by applying (\ref{eq:IDlogic}), the following statements hold:
	\begin{enumerate}[(i)]
	\item if the true attack mode is $q$, then $q \in \hat Q_k$ for all $k$;
	\item if the true attack mode is $q$ and the resulted $z_k$ satisfies
	\begin{equation}\label{eq:SeverAttack2BeID}
		||\mathbf{z}^{q}_{k_0}(x_{k_0}, \{a_k\}_k, \{w_k\}_k)|| \geq \delta^q,
	\end{equation}
	then, $\hat Q_{k_0+1} = \{q\}$.
	\end{enumerate}
\end{proposition}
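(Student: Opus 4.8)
The plan is to mirror the noise-handling argument of Proposition~\ref{pr:detectionwithnoise}: decompose every signal of $\Sigma^\Delta$ by linearity into a \emph{signal part} driven by the state-and-attack pair $(x_{k_0},\{a_k\}_k)$ and a \emph{noise part} driven by $\{w_k\}_k$, bound the noise part uniformly using stability of $\hat\Sigma$ together with $\|w_k\|\le 1$, and then reduce the two claims to the exact noise-free separation guaranteed by Corollary~\ref{cr:n+1stepID}. Concretely, I would first fix two constants that depend only on the system matrices and the noise bound: a bound $\bar r$ on $\|P^\perp_{\mathcal{Y}^q_{n+1}}Y^w_{k:k+n+1}\|$, where $Y^w$ is the contribution of $\{w_k\}_k$ to the output window of length $n+2$, and a bound $\tilde\delta$ on $\|\mathbf z^{q}_{k}(0,0,\{w_k\}_k)\|$; both are finite because the window is finite and $\hat\Sigma$ is stable, exactly as in Proposition~\ref{pr:detectionwithnoise}.

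Statement (i) then follows directly. When the true mode is $q$, the signal part of the observed window $Y_{k:k+n+1}$ equals $\mathcal{O}^q_{n+1}x_k+\mathcal{G}^q_{n+1}a_{k:k+n}$, which lies in $\mathcal{Y}^q_{n+1}=\mathrm{Im}\{[\mathcal{O}^q_{n+1}\ \mathcal{G}^q_{n+1}]\}$ for \emph{every} state and attack, so the projection $P^\perp_{\mathcal{Y}^q_{n+1}}$ annihilates it and leaves only the noise part. Hence $r^q_k\le\bar r$ for all $k$, and choosing $\epsilon^q>\bar r$ guarantees $r^q_k<\epsilon^q$ for all $k$, so $q$ is never removed by the rule~(\ref{eq:IDlogic}) and $q\in\hat Q_k$ for all $k$.

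For statement (ii) I would first observe that, by definition of the orthogonal projection, $r^p_{k_0}$ equals the minimal mismatch $\min_{x^p,a^p}\|Y_{k_0:k_0+n+1}-\mathcal{O}^p_{n+1}x^p-\mathcal{G}^p_{n+1}a^p\|$; in the noise-free case this is precisely the minimum over mode-$p$ trajectories of the windowed norm of $\mathbf y^{pq}$ in the augmented system $\Sigma^\Delta_{pq}$, with the mode-$q$ half fixed to the true trajectory. The key algebraic step is to convert the separation of Corollary~\ref{cr:n+1stepID} into a linear lower bound. Writing $r^{p}_{\mathrm{sig}}=\|\Phi_{pq}(x_{k_0},a)\|$ and $\mathbf z^{q}_{k_0,\mathrm{sig}}=\Psi_q(x_{k_0},a)$ as linear maps of the mode-$q$ data, discernibility via Corollary~\ref{cr:n+1stepID} forces $\ker\Phi_{pq}\subset\ker\Psi_q$ (if the mode-$q$ window is exactly explainable by mode $p$, then $\mathbf y^{pq}$ vanishes on the window, and scaling the data forces the corresponding $z$ to vanish), and a standard finite-dimensional argument then yields a constant $\kappa_{pq}>0$ with $r^{p}_{\mathrm{sig}}\ge\kappa_{pq}\,\|\mathbf z^{q}_{k_0,\mathrm{sig}}\|$. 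Splitting off the noise part and using $\bar r,\tilde\delta$ gives $r^p_{k_0}\ge\kappa_{pq}(\|\mathbf z^{q}_{k_0}(x_{k_0},\{a_k\}_k,\{w_k\}_k)\|-\tilde\delta)-\bar r$ for each $p\ne q$.

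It remains to fix the thresholds consistently, which I regard as the main obstacle. I would set $\epsilon^q>\bar r$ for every $q$ (as required by (i)), and then choose $\delta^q=\tilde\delta+\max_{p\ne q}(\epsilon^p+\bar r)/\kappa_{pq}$. With this choice, whenever the true mode $q$ satisfies $\|\mathbf z^{q}_{k_0}(x_{k_0},\{a_k\}_k,\{w_k\}_k)\|\ge\delta^q$, the inequality above gives $r^p_{k_0}\ge\epsilon^p$ for every $p\ne q$, so all rival modes are removed at step $k_0+1$; combined with (i), which keeps $q$, this yields $\hat Q_{k_0+1}=\{q\}$. The delicate point is precisely this simultaneous feasibility: the same $\epsilon^q$ must serve both as a ``no-false-removal'' ceiling for $q$ when it is the true mode and as a detection floor for $q$ when it is a rival; the one-directional dependence $\delta^q\!\to\!\epsilon^p$ (never $\epsilon^q\!\to\!\delta^q$) avoids circularity, and since each $\kappa_{pq}>0$ by discernibility, the required $\delta^q$ is finite. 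Verifying the uniform noise bounds $\bar r,\tilde\delta$ and the kernel-inclusion step $\ker\Phi_{pq}\subset\ker\Psi_q$ rigorously are the two technical cores of the argument.
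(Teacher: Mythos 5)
Your proposal is correct and follows essentially the same route as the paper: statement (i) via the projection $P^\perp_{\mathcal{Y}^q_{n+1}}$ annihilating the mode-$q$ signal window plus a uniform noise bound, and statement (ii) by reducing the window residual $r^p_{k_0}$ to the augmented system $\Sigma^\Delta_{pq}$ and invoking Corollary~\ref{cr:n+1stepID}. The only difference is in packaging: the paper argues (ii) by contraposition (a surviving rival $p$ forces $||\mathbf{z}^{q}_{k_0}||$ below an implicit bound $\bar\delta^{pq}$), whereas you extract an explicit linear gain $\kappa_{pq}$ from the kernel inclusion and write the thresholds $\delta^q$ in closed form --- a legitimate and slightly more constructive rendering of the same argument.
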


\begin{proof}
For statement (i), we just need to show the existence of $\epsilon^q$ so that $r^q_k < \epsilon^q$ for all $k$. By the condition that $\hat \Sigma$ in (\ref{eq:NomLiftedSys}) is stable and $||w_k||\leq 1$ for all $k$, there is an $\tilde \epsilon$ so that $||\mathbf{y}^{q}_{k}(0, 0, \{w_k\}_{k=0}^\infty)||< \tilde \epsilon$ for all $k$, and thus for any $x_0$, $\{a_k\}_k$, and $\{w_k\}_k$ ($||w_k||\leq 1$ for all $k$), 
\begin{equation}
\begin{aligned}
r^q_{k} &= ||P^\perp_{\mathcal{Y}^q_{n+1}} \begin{bsmallmatrix}
\mathbf{y}^{q}_{k}(x_0, \{a_k\}_k, \{w_k\}_k) \\
\vdots\\
\mathbf{y}^{q}_{k+n+1}(x_0, \{a_k\}_k, \{w_k\}_k)
\end{bsmallmatrix}|| \\
&= ||P^\perp_{\mathcal{Y}^q_{n+1}} \begin{bsmallmatrix}
\mathbf{y}^{q}_{k}(0, 0, \{w_k\}_k) \\
\vdots\\
\mathbf{y}^{q}_{k+n+1}(0, 0, \{w_k\}_k)
\end{bsmallmatrix}|| < \epsilon^q
\end{aligned}	
\end{equation}
for some $\epsilon^q$. 

We now need to show that for any $x^q_{k_0}$, $\{a^q_k\}_k$, and $\{w_k\}_k$, respectively, if $p \neq q$ is in $\hat Q_{k_0+1}$, $||\mathbf{z}^{q}_{k_0}(x^q_{k_0}, \{a^q_k\}_k, \{w_k\}_k)||$ is bounded. $p \neq q$ is in $\hat Q_{k_0+1}$ means that $r^p_{k_0} < \epsilon^p$, which implies that for $Y_{k_0:k_0+n+1}$ given as
\begin{equation}
	Y_{k_0:k_0+n+1} = \begin{bsmallmatrix}
	\mathbf{y}^{q}_{k_0}(x^q_{k_0}, \{a^q_k\}_k, \{w_k\}_k) \\
	\vdots\\
	\mathbf{y}^{q}_{k_0+n+1}(x^q_{k_0}, \{a^q_k\}_k, \{w_k\}_k)
	\end{bsmallmatrix},
\end{equation}
we have
\begin{equation}
\begin{aligned}
r^p_{k_0} =& ||P^\perp_{\mathcal{Y}^p_{n+1}} Y_{k_0:k_0+n+1}|| \\
=& ||Y_{k_0:k_0+n+1} - P_{\mathcal{Y}^p_{n+1}} Y_{k_0:k_0+n+1}||\\
=& ||Y_{k_0:k_0+n+1} - \begin{bsmallmatrix}
\mathbf{y}^{p}_{k_0}(x^p_{k_0}, \{a^p_k\}_k,0) \\
\vdots\\
\mathbf{y}^{p}_{k_0+n+1}(x^p_{k_0}, \{a^p_k\}_k, 0)
\end{bsmallmatrix}|| \\
<& \epsilon^p
\end{aligned}
\end{equation}
for some $x^p_0$, and $\{a^p_k\}_k$. By introducing $x^{pq}_{k_0} = [x^p_{k_0}{}' \,\ x^q_{k_0}{}']'$ and $a^{pq}_{k} = [a^p_{k}{}' \,\ a^q_{k}{}']'$, we have
\begin{equation}
	||y^{pq}_{k}(x^{pq}_{k_0}, \{a^{pq}_k\}_k) || \leq \bar{\epsilon}^{pq}, \forall k \in\{k_0,...,k_0+n+1\}
\end{equation}
for some $\bar{\epsilon}^{pq}$, where we have used that $||\mathbf{y}^{q}_{k}(0, 0, \{w_k\}_k||$ is bounded. By Corollary~\ref{cr:n+1stepID}, it is implied that \begin{equation}
	||\mathbf{z}^{pq}_{k_0}(x^{pq}_{k_0}, \{a^{pq}_k\}_k)|| < \bar \delta^{pq},
\end{equation}
for a $\bar \delta^{pq}$, and thus, $||\mathbf{z}^{q}_{k_0}(x^{q}_{k_0}, \{a^{q}\}_k, \{w_k\}_k||$ is bounded under the stability condition. 

\end{proof}

\begin{remark}
Similar to what we have commented in Remark~\ref{rm:designresidual}, discernibility, as a qualitative property, is invariant under the transformation of the output $y$. By transforming $y$ to $\tilde{y}_\mathtt{z} = H_y(\mathtt{z})y_\mathtt{z}$, where $H_y(\mathtt{z})$ is any stable and invertible filter with stable inverse, the identification scheme could better distinguish attack modes with less sensitivity to noises. It provides certain degrees of freedom for reshaping $y$ and modifying $r^q_k$ in (\ref{eq:defResidual}). 
\end{remark}

\section{An Illustrative Example} \label{sec:NE}
Motivated by \cite{kwon2014analysis}, we consider the following model for describing the two-dimensional navigational behavior of a UAS in the nominal case:
\begin{equation}
	\begin{aligned}
	\hat x_{t+1} &= \underbrace{\begin{bsmallmatrix}
	A_o & -B_oK_o \\ L_oC_o & A_o - B_oK_o - L_o C_o
	\end{bsmallmatrix}}_{\hat A} \hat x_t  + \underbrace{\begin{bsmallmatrix}
	B_oK_o \\ B_oK_o 
	\end{bsmallmatrix}}_{\hat B^u}\hat u_t \\
	\hat y_t &= \underbrace{\begin{bsmallmatrix}
	C_o &0 \\
	0 &I_{4\times4}
	\end{bsmallmatrix}}_{\hat C}\hat x_t
	\end{aligned}
\end{equation}
where the state vector includes the horizontal (north and east) position, velocity, and corresponding observer states; the output used for monitoring the system includes the GPS measurement and the observer state. The kinematic model, guidance control gain, and observer gain are assumed to be the following for demonstration:  
\begin{equation}
\begin{aligned}
A_o &= \begin{bsmallmatrix}
1 &0 &\Delta t &0 \\ 0 &1 &0 &\Delta_t \\ 0 &0 &1 &0 \\ 0 &0 &0 &1
\end{bsmallmatrix}, B_o = \begin{bsmallmatrix}
\frac{\Delta t^2}{2} &0 \\ 0 &\frac{\Delta t^2}{2} \\ \Delta t &0 \\0 &\Delta t
\end{bsmallmatrix}, L_o = \begin{bsmallmatrix}
1.09 &0 \\ 0 &1.09 \\ 0.94 &0 \\ 0 &0.94
\end{bsmallmatrix},\\
C_o &= \begin{bsmallmatrix}
1 &0 &0 &0\\
0 &1 &0 &0
\end{bsmallmatrix}, K_o = \begin{bsmallmatrix}
	9.89 &0 &7.24 &0\\
	0 &9.89 &0 &7.24
	\end{bsmallmatrix}, \Delta t = 0.1.
\end{aligned}
\end{equation}\normalsize
The actual system is subject to disturbance and noise, and we assume the position measurement from the on-board GPS receiver is vulnerable to spoofing attacks. Thus, the deviations of the actual state and output from the nominal state and output have the following relationship:
\begin{equation}\label{eq:vulsys}
\begin{aligned}
	x_{t+1} &= \hat A x_t +  \hat B^a a_t + \hat B^w w_t, \\
	y_t &= \hat C x_t + \hat D^a a_t + \hat D^w w_t, \\
	z_t &= \hat E x_t 
\end{aligned}
\end{equation}
where $a_t = [a^N_t{} \,\ a^E_t{}]'$ is the spoofing attack input to the on-board GPS measurement, and $w_t$ represents the bounded disturbance and noise, 
\begin{equation*}
	\hat B^a = \begin{bsmallmatrix}
	0 \\ L_o
	\end{bsmallmatrix}, \hat{B}^w = \begin{bsmallmatrix}
	B_o &0 \\ 0 &L_o
	\end{bsmallmatrix}, \hat D^a = \begin{bsmallmatrix}
	I_{2\times2} \\0
	\end{bsmallmatrix}, \hat D^w = \begin{bsmallmatrix}
	 0 &I_{2\times2} \\0 &0
	\end{bsmallmatrix},
\end{equation*}
and $\hat E = [I_{2\times2} \,\ 0]$ so that $z_t$ denotes the deviation of the position from the reference trajectory. We can see that $\mathcal{F}(\mathcal{V})$ has only one element for this system: $M = [-I_{2\times2} \,\ 0]$, $N = 0$. As $(\hat A + \hat B^a M)|_{V^*}$ has a non-defective unstable eigenvalue which is $1$, after checking condition (iii) of Theorem~\ref{thm:Detectability}, we can conclude the system is vulnerable. Actually, using the method given in the proof, we can construct a stealthy attack according to (\ref{eq:initialEx}) and (\ref{eq:subcasepolicy}), which can make $||z_t||$ linearly divergent. The simulation results showing the impact of such an attack are given in Fig.~\ref{fig:SR1}.
\begin{figure}[htb]
	\centering
	\includegraphics[trim = 0mm 120mm 80mm 0mm, clip, scale=0.44]{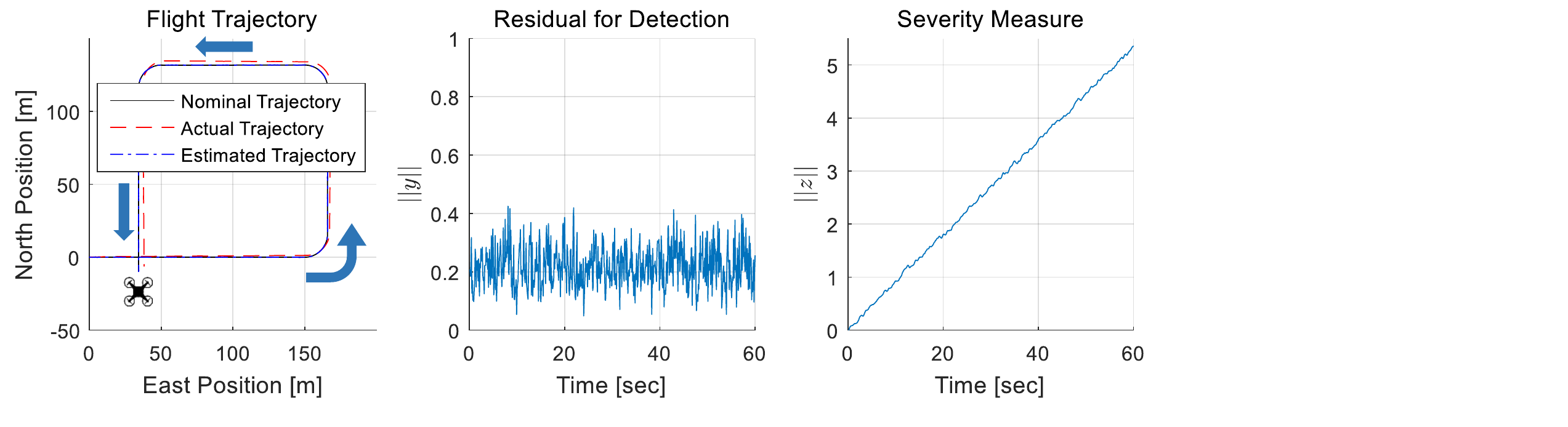}
	\caption{Vulnerability of the System (\ref{eq:vulsys})}
	\label{fig:SR1}
\end{figure}
In the plot, it can be observed that $z_t$ keeps increasing but $y_t$ does not reflect the change of $z_t$. Now, we suppose that there is a secure off-board  measurement (e.g., position measurements by radar systems) at the ground control station, and it is streamed to the UAS every $5$ time steps for on-board monitoring (the streamed data can be delayed as long as the delay is less than $5$ time steps). Due to the multi-rate issue and the potential delay, we consider the lifting technique now. A lifted system with $T = 5$ can be constructed according to Section~\ref{sec:PF}, and we can obtain the system relating $x_k$, $y_k$, and $z_k$ in the form of (\ref{eq:DeltaLiftedSys}). We can generate $y_k$ now from the lifted system, and by Theorem~\ref{thm:Detectability}, the attack can be detected using $||y_k||$. By detectability, we can use Proposition~\ref{pr:detectionwithnoise} to design $\epsilon$ and estimate $\delta$. % as $0.7592$ and $2.1645$, respectively. 
Fig.~\ref{fig:SR2} shows that $||y_k||$ can be used for attack detection and $\epsilon$ can be used as the threshold for triggering the alarm.
\begin{figure}[htb]
	\centering
	\includegraphics[trim = 0mm 60mm 80mm 0mm, clip, scale=0.44]{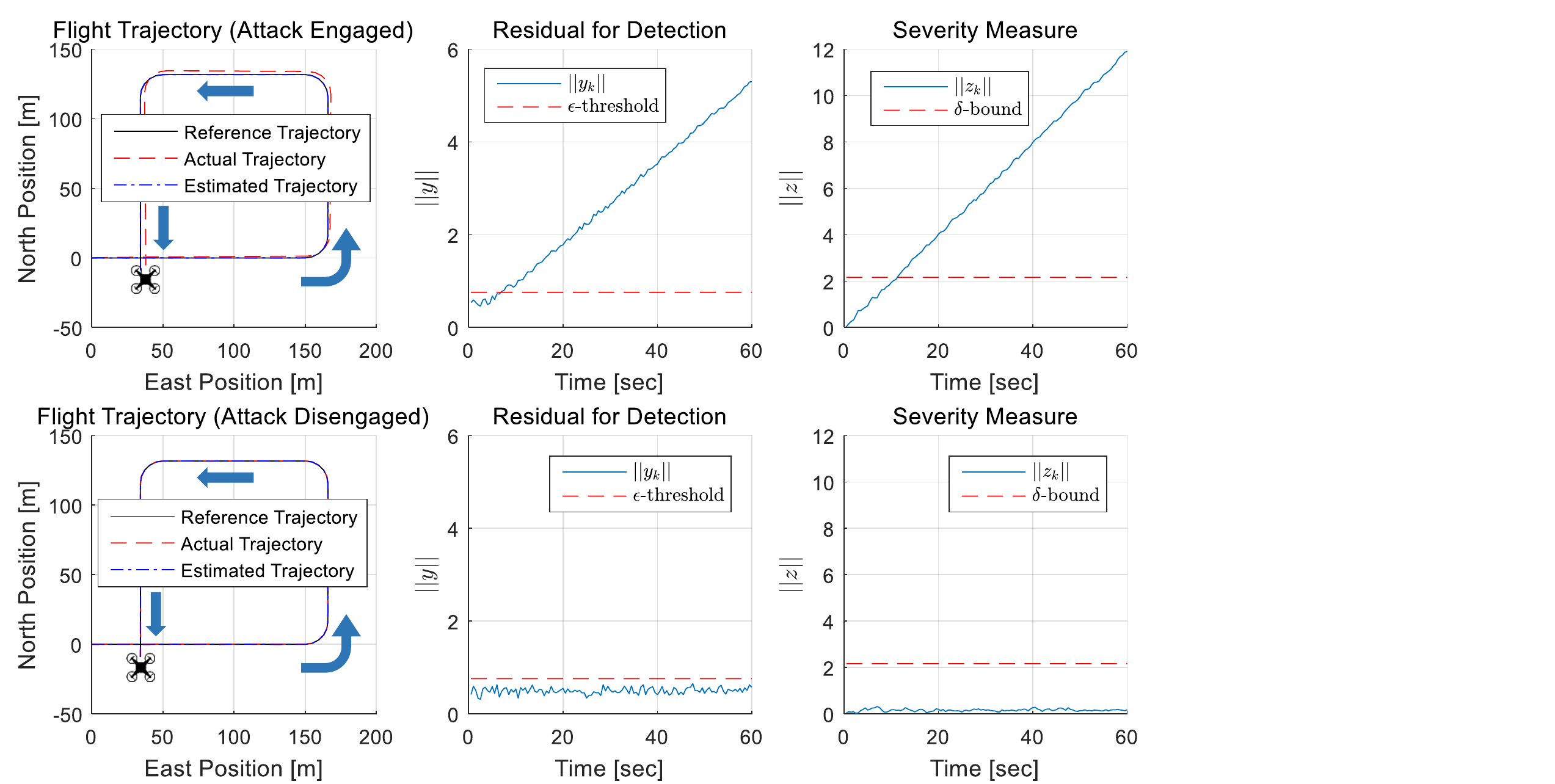}
	\caption{Attack Detection for the Lifted System}
	\label{fig:SR2}
\end{figure}
The previously designed stealthy attack can now be detected: $y_k$ is increasing as $z_k$ is increasing. Also note that the $\epsilon$-threshold avoids the false alarm and the trajectory deviation is below the $\delta$-bound when the attack is disengaged. 

Now, we test the attack mode identification. Let us assume that there are following two attack modes for demonstration: for mode $1$ attack, only the north-position measurement is falsified, and for mode $2$, only the east-position measurement is falsified. Using Theorem~\ref{thm:Discernibility}, we can conclude that the two attack modes are discernible. Fig.~\ref{fig:SR3} shows the $r^1_k$ and $r^2_k$ histories generated using (\ref{eq:defResidual}) when the true attack mode is $1$.
\begin{figure}[htb]% order of placement preference: here, top, bottom
	\centering
	\includegraphics[trim = 0mm 110mm 100mm 0mm, clip, scale=0.55]{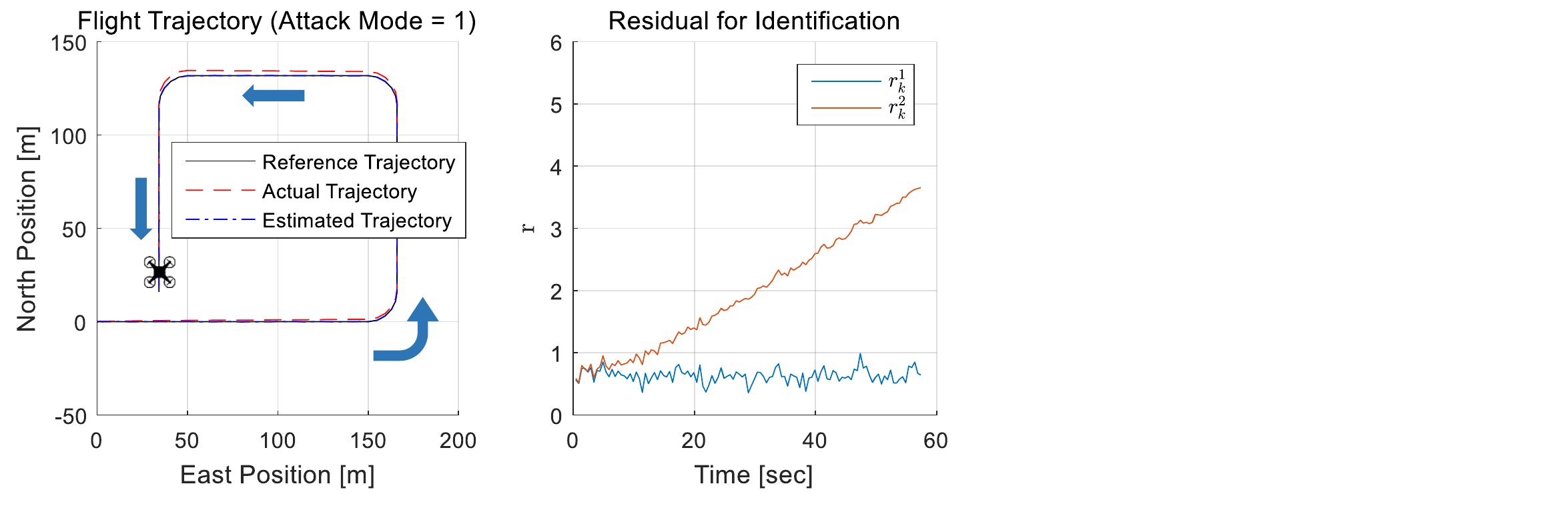}
	\caption{Test of Attack Identification}
	\label{fig:SR3}
\end{figure}
As shown in Fig.~\ref{fig:SR3}, $r^1_k$ remains the same level while $r^2_k$ increases as the trajectory deviation increases, which means identification strategy (\ref{eq:IDlogic}) can provide the correct estimate of the attack mode. 

\section{Conclusion} \label{sec:Conclusion}
In this paper, we have considered attack detection and identification using the lifted system model. The severe attack detectability and identifiability based on the lifted system model have been formally defined and strictly characterized using geometric control theory. A method of checking detectability has been proposed and discussed, and the residual design problem for attack detection has been formulated in a general way. For attack identification, we have discussed it by generalizing the concept of mode discernibility, and a strategy for identifying the attack mode has been derived based on the theoretical analysis. An illustrative example of an unmanned aircraft system has been provided to demonstrate the main results. 

%%%%%%%%%%%%%%%%%%%%%%%%%%%%%%%%%%%%%%%%%%%%%%%%%%%%%%%%%%%%%%%%%%%%%%%%%%
\bibliographystyle{plain}        % Include this if you use bibtex 
\bibliography{RefList}           % and a bib file to produce the 

%%%%%%%%%%%%%%%%%%%%%%%%%%%%%%%%%%%%%%%%%%%%%%%%%%%%%%%%%%%%%%%%%%%%%%%%%%
%\appendix
\section*{Appendix A}
According to our definition of $\mathcal{F}(\mathcal{V})$, we have
\begin{equation}
\begin{aligned}
\text{Im} \{B^a_{q^a} (\hat M - \bar M)V\} &\subset \mathcal{V}, \\
\text{Im} \{(\hat{M} - \bar M)V\} &\subset \text{ker}\{D^a_{q^a}\},
\end{aligned}
\end{equation}
and it is followed by 
\begin{equation*}
\begin{aligned}
\text{Im}\{B^a_{q^a} (\hat M - \bar M)V\} \subset B^a_{q^a}\text{ker}\{D^a_{q^a}\} \cap \mathcal{V} = \text{Im}\{B^a_{q^a}\hat N\},
\end{aligned}
\end{equation*}
which implies $B^a_{q^a} (\hat M - \bar M)V = B^a_{q^a}\hat N \hat K$ for some $\hat K$. 

Furthermore, we have
\begin{equation}
B^a_{q^a}\left[(\hat M - \bar M)V - \hat N \hat K \right] = 0,
\end{equation}
which implies that $\text{Im}\{(\hat M - \bar M)V - \hat N \hat K\}$ is contained in the null space of $B^a_{q^a}$. In addition, since both $\text{Im} \{(\hat{M} - \bar M)V\}$ and $\text{Im}\{\hat N \}$ are contained in the null space of $D^a_{q^a}$, we have
\begin{equation}
\text{Im}\{(\hat M - \bar M)V - \hat N \hat K\} \subset \text{ker}\{B^a_{q^a}\}\cap \text{ker}\{D^a_{q^a}\},
\end{equation}
where we can conclude the first part of the proof. 

Since $\text{Im}\{B^a_{q^a}\bar N\} = \text{Im}\{B^a_{q^a} \hat N\}$, we have $B^a_{q^a} \bar N = B^a_{q^a} \hat N \hat L$
for some $\hat L$. Then, $\text{Im}\{\bar N - \hat N \hat L\}$ is necessarily contained in the null space of $B^a_{q^a}$. Since both $\text{Im}\{\bar N\}$ and $\text{Im}\{\hat N\}$ are contained in the null space of $D^a_{q^a}$, we have
\begin{equation}
\text{Im}\{\bar N - \hat N \hat L\} \subset \text{ker}\{B^a_{q^a}\}\cap \text{ker}\{D^a_{q^a}\},
\end{equation}
which concludes the second part of the proof.

%%%%%%%%%%%%%%%%%%%%%%%%%%%%%%%%%%%%%%%%%%%%%%%%%%%%%%%%%%%%%%%%%%%%%%%%%%
\section*{Appendix B}
For necessity, if the output $y_k$ is identically zero, then the state is always included in $\mathcal{V}$. By $Ax_k + B^a_{q^a}(a_k + Mx_k- Mx_k) = (A+B^a_{q^a}M)x_k + B^a_{q^a}(a_k - Mx_k) = x_{k+1}$, we can see that $B^a_{q^a}(a_k - Mx_k) \in \mathcal{V}$ for each $k$. By $Cx_k + D^a_{q^a}(a_k + M x_k - Mx_k) = (C+D^a_{q^a}M)x_k + D^a_{q^a}(a_k - Mx_k) = y_k = 0$ for each $k$, we can see that $(a_k - Mx_k) \in \text{ker}\{D^a_{q^a}\}$. Therefore, $a_k - Mx_k$ can be written as $N\tilde{a}_k$ for some $\tilde{a}_k$.

For sufficiency, if $a_k$ takes the form $Mx_k + N\tilde{a}_k$, then $x_k$ can be written as
\begin{equation*}
x_k = (A+B^a_{q^a}M)^k x_ 0 + \Sigma_{j=0}^t (A+B^a_{q^a}M)^{k-j-1}B^a_{q^a}N\tilde{a}_j,
\end{equation*}
which implies $y_k$ is identically zero according to (\ref{eq:VisInv}), (\ref{eq:invDOF}), and $x_0 \in \mathcal{V}$.

%%%%%%%%%%%%%%%%%%%%%%%%%%%%%%%%%%%%%%%%%%%%%%%%%%%%%%%%%%%%%%%%%%%%%%%%%%
\section*{Appendix C}
First, by rewriting Definition \ref{def:Detectability}, we know the system is vulnerable if and only if for any $\delta > 0$ and $\epsilon > 0$, there is an attack input sequence $\{a_k\}_{k=0}^\infty$ such that 
\begin{equation}\label{eq:StealthyAndSevere}
\begin{aligned}
||\mathbf{y}^{q^a}_k(0, \{a_t\}_{t=0}^\infty,0)|| &< \epsilon, \forall k \in \mathbb{N}, \\
||\mathbf{z}^{q^a}_K(0, \{a_t\}_{t=0}^\infty,0)|| &\geq \delta, \text{ for some } K.
\end{aligned}
\end{equation}

We show sufficiency now. Suppose condition (i) holds. Then, clearly there is an attack sequence $\{a_k\}_k$ such that it affects $\{z_k\}_k$, but it results in identically zero $\{x_k\}_k$ and $\{y_k\}_k$ when $x_0 = 0$. By scaling this $\{a_k\}_k$, clearly (\ref{eq:StealthyAndSevere}) can be satisfied.

Suppose condition (ii) holds. Then, one can construct an input sequence by taking the policy in the form $\hat a_k = \hat M x_k + \hat N \tilde{a}_k$. Then, the resulting $\{y_k\}_k$ is identically zero by Lemma \ref{lm:NullingPolicy}. As $L_{(\hat M, \hat N)}^{\Sigma^\Delta}$ is nonzero, $\{\tilde{a}_k\}_k$ can be chosen such that $z_K$ is not zero for some $K$. Then, an attack input sequence $\{a_k\}_k$ can be obtained by scaling $\{\hat a_k\}_k$. According to the linearity of the system, the resulted $\{y_k\}_k$ is still identically zero, while $||z_K||$ can be made arbitrarily large. 

Suppose condition (iii) holds. Let $J(\lambda)$ and $S$ be the Jordan block and the matrix of the corresponding chain of generalized eigenvectors satisfying condition (iii). From (\ref{eq:EigNotKer}), we know $(E+F^a_{q^a}\hat{M})V^*S \neq 0$. Let $i^*$ be the smallest positive integer such that the $i^*$-th column of  $(E+F^a_{q^a}\hat{M})V^*S \neq 0$ is nonzero. For convenience, we shall denote the $i$-th column of $(E+F^a_{q^a}\hat{M})V^*S$ by $\eta_i$. We will consider 3 cases: case 1) $|\lambda| > 1$, case 2) $|\lambda|= 1$ and $i^*$ is strictly less than the number of columns of $S$, and case 3) $|\lambda| = 1$ and $i^*$ is equal to the number of columns of $S$.

For case 1), for any $\epsilon > 0$, we can find an input sequence $\{a_k\}_{k=0}^{n-1}$ such that 
\begin{equation}\label{eq:DesignSmallEx}
||\mathbf{y}^{q^a}_k(0,\{a_k\}_{k=0}^{n-1},0)|| < \epsilon,  \forall t \in \{0,1,...,n-1\}
\end{equation} 
and
\begin{equation}\label{eq:initialEx}
\mathbf{x}^{q^a}_n(0, \{a_k\}_{k = 0}^{n-1},0) = \alpha V^*S e_{i^*}
\end{equation}
for some nonzero real number $\alpha$, where $e_{i^*}$ is a vector whose $i^*$-th element is $1$ and other elements are zeros. It is possible since $\alpha V^*S e_{i^*}$ is contained in the controllable subspace~\footnote{\label{fn:complexEig}We only consider the case where $\lambda$ and $S$ are real without loss of generality. If $\lambda$ and $S$ are complex, $Se_{i^*}$ is replaced with the real part of it or the imaginary of it depending on which part is not contained in the null space of $(E+F^a_{q^a}\hat M)V^*$. }. For $k \geq n$, $a_k$ can be constructed according to the policy $a_k = \hat M x_k$. Then, the resulted output $\{y_k\}_{k = n}^\infty$ will be identically zero by Lemma \ref{lm:NullingPolicy}. The resulted $\{z_t\}_{t=n}^\infty$ satisfies
\begin{equation}
\begin{aligned}
z_k &= (E+F^a_{q^a}\hat M) (A+B^a_{q^a}\hat{M})^{k-n} \alpha V^* S e_{i^*} \\ 
&= \alpha (E+F^a_{q^a}\hat{M})V^* (A+B^a_{q^a}\hat M)|_{V^*}{}^{k-n} S e_{i^*} \\
&= \alpha (E+F^a_{q^a}\hat{M})V^*S J(\lambda)^{k-n} e_{i^*} \\
&= \alpha \begin{bmatrix}
0 &\dots &0 &\eta_{i^*} &*
\end{bmatrix} \begin{bmatrix}
* &\lambda^{k-n} &0 &\dots &0
\end{bmatrix}' \\
&= \alpha\lambda^{k-n} \eta_{i^*}.
\end{aligned}
\end{equation} 
Clearly, for any $\delta > 0$, $||z_K|| > \delta$ for some sufficiently large $K$ as $|\lambda| > 1$.  

For case 2), we can repeat all the steps for case 1) except that we replace $e_{i^*}$ with $e_{i^*+1}$, and then we can ensure that the resulted $||y_k||$ is small for all $k$, while for the resulted $\{z_k\}_k$, we have for $k \geq n$,
\begin{equation}
\begin{aligned}
z_k &= (E+F^a_{q^a}\hat M) (A+B^a_{q^a}\hat{M})^{k-n} \alpha V^* S e_{i^*+1} \\ 
&= \alpha (E+F^a_{q^a}\hat{M})V^* (A+B^a_{q^a}\hat M)|_{V^*}{}^{k-n} S e_{i^*+1} \\
&= \alpha (E+F^a_{q^a}\hat{M})V^*S J(\lambda)^{k-n} e_{i^*+1} \\
&= \alpha \begin{bmatrix}
0 &\dots &0 &\eta_{i^*} &\eta_{i^*+1} &*
\end{bmatrix} \cdot \\
& \,\ \begin{bmatrix}
* &(k-n)\lambda^{k-n-1} &\lambda^{k-n} &0 &\dots &0
\end{bmatrix}' \\
&= \alpha(k-n)\lambda^{k-n-1} \eta_{i^*} + \alpha\lambda^{k-n} \eta_{i^*+1}.
\end{aligned}
\end{equation}
Thus, $||z_k|| \geq (k-n) \alpha ||\eta_{i^*}|| - \alpha||\eta_{i^*+1}||$, can become arbitrarily large as $k$ increases.

For case 3), we construct an attack sequence $\{a_k\}_k$ for all $k \geq n$ using the policy
\begin{equation}\label{eq:subcasepolicy}
	a_k = M x_k + \lambda^{\tilde{k}n}a_j - \lambda^{\tilde{k}n} \hat M x_j
\end{equation} 
where $k = \tilde{k}n+j$, $j = 0, 1, ..., n-1$. We now use induction to prove that under such $\{a_k\}_k$, the corresponding state trajectory at any time instant $k \geq n$ satisfies
\begin{equation}\label{eq:ToshowbyInduction}
	x_{\tilde{k}n+j} = \alpha \tilde{k}\lambda^{\tilde{k}n+j-n}V^*S e_{i^*} + \lambda^{\tilde{k}n}x_j + \tilde{x}_{\tilde{k}n+j}
\end{equation}
where $\tilde{x}_{\tilde{k}n+j}$ is contained in $\text{Im}\{V^*\tilde{S}\}$, and $\tilde{S}$ is $S$ excluding the last column (it is zero if $S$ only has one column). Suppose (\ref{eq:ToshowbyInduction}) is true, then for $j<n-1$,
\begin{equation}
\begin{aligned}
	&x_{\tilde{k}n+j+1} = Ax_{\tilde{k}n+j} + B^{a}_{q^a} a_{\tilde{k}n+j} \\
	=&\alpha\tilde{k}\lambda^{\tilde{k}n+j-n}(A+B^{a}_{q^a}\hat M) V^*S e_{i^*} \\
	& \,\ + \lambda^{\tilde{k}n}(Ax_j + B^{a}_{q^a}a_j) + (A+B^{a}_{q^a}\hat M)\tilde{x}_{\tilde{k}n+j}\\
	=&\alpha\tilde{k}\lambda^{\tilde{k}n+j-n+1} V^*S e_{i^*} + \lambda^{\tilde{k}n}x_{j+1} + \tilde{x}_{\tilde{k}n+j+1}
\end{aligned}
\end{equation}
where $\tilde{x}_{\tilde{k}n+j+1}$ is provided as 
\begin{equation}
	(A+B^{a}_{q^a}\hat M)\tilde{x}_{\tilde{k}n+j+1} + \alpha\tilde{k}\lambda^{\tilde{k}n+j-n+1} V^*S e_{i^*-1},
\end{equation}
which is contained in $\text{Im}\{V^*\tilde{S}\}$ by the Jordan form (we can set $e_{0} = 0$). For $j = n-1$, 
\begin{equation}
\begin{aligned}
&x_{(\tilde{k}+1)n} = Ax_{\tilde{k}n+n-1} + B^{a}_{q^a}a_{\tilde{k}n+n-1} \\
=&\alpha \tilde{k}\lambda^{\tilde{k}n-1}(A+B^{a}_{q^a}\hat M) V^*S e_{i^*} \\
&+ \lambda^{\tilde{k}n}(Ax_{n-1} + B^{a}_{q^a}a_{n-1}) + (A+B^{a}_{q^a}\hat M) \tilde{x}_{\tilde{k}n+n-1}\\
=&\alpha \tilde{k}\lambda^{\tilde{k}n} V^*S e_{i^*} + \alpha \lambda^{\tilde{k}n}V^* S e_{i^*} + \tilde{x}_{\tilde{k}n+n}\\
=& \alpha(\tilde{k}+1)\lambda^{\tilde{k}n} V^*S e_{i^*} + \tilde{x}_{(\tilde{k}+1)n},
\end{aligned}
\end{equation}
where $\tilde{x}_{(\tilde{k}+1)n}$ given as
\begin{equation}
	(A+B^{a}_{q^a}\hat M)\tilde{x}_{\tilde{k}n+n-1} + \alpha\tilde{k}\lambda^{\tilde{k}n} V^*S e_{i^*-1}
\end{equation}
is contained in $\text{Im}\{V^*\tilde{S}\}$. Hence, we can claim (\ref{eq:ToshowbyInduction}) is true. Since for case 3), $(E+F^{a}_{q^a} \hat M)V^*\tilde{S} = 0$ and $|\lambda| = 1$, we have
\begin{equation}
\begin{aligned}
	&z_{\tilde{k}n+j} = E z_{\tilde{k}n+j} + F^{a}_{q^a} a_{\tilde{k}n+j} \\
	=& (E+F^{a}_{q^a}\hat M) x_{\tilde{k}n+j} + F^{a}_{q^a}(\lambda^{\tilde{k}n}a_j - \lambda^{\tilde{k}n} \hat M x_j)\\
	=& \alpha\tilde{k}\lambda^{\tilde{k}n+j-n} \eta_{i^*}  +  \lambda^{\tilde{k}n} z_j,
\end{aligned}
\end{equation}
whose magnitude diverges to infinity as $k$ increases. In addition, 
\begin{equation}
\begin{aligned}
	&y_{\tilde{k}n+j} = C z_{\tilde{k}n+j} + D^{a}_{q^a} a_{\tilde{k}n+j} \\
=& (C+D^{a}_{q^a}\hat M) x_{\tilde{k}n+j} + D^{a}_{q^a}(\lambda^{\tilde{k}n}a_j - \lambda^{\tilde{k}n} \hat M x_j)\\
=& \lambda^{\tilde{k}n} y_j,
\end{aligned}
\end{equation}
whose norm is less than $\epsilon$ by (\ref{eq:DesignSmallEx}). Now, we have concluded the sufficiency.

For necessity, we will show that $||\mathbf{y}^{q^a}_k(0, \{a_k\}_{k=0}^\infty, 0)|| \leq \epsilon$ for all $t \in \mathbb{N}$ can imply that $||\mathbf{z}^{q^a}_k(0, \{a_k\}_{k=0}^\infty,0)|| \leq \bar \delta$ for some $\bar \delta$ under the conditions that
\begin{equation}\label{eq:NecessaryCond1}
	\text{ker}\{B^a_{q^a}\}\cap \text{ker}\{D^a_{q^a}\} \subset \text{ker}\{F^a_{q^a}\}
\end{equation}
and for any $(M, {N})\in \mathcal{F}(\mathcal{V})$, $L_{(M, N)}^{\Sigma^\Delta} = 0$ while any generalized eigenspaces of unstable eigenvalues of $(A+B^a_{q^a} M)|_{V^*}$ are contained in $\text{ker}\{(E+F^a_{q^a} M)V^*\}$. 

If $\{a_k\}_{k=0}^\infty$ is such that $||\mathbf{y}^{q^a}_k(0, \{a_k\}_{k=0}^\infty,0)|| \leq \epsilon$ for all $k \in \mathbb{N}$, the resulted state trajectory $\{x_k\}_k$ satisfies that
\begin{equation}\label{eq:dist(V,xk)}
||P_{\mathcal{V}}^\perp x_k|| \leq \epsilon_1, \forall k \in \mathbb{N}
\end{equation}
for an $\epsilon_1$ depending on the system matrices, where $P_{\mathcal{V}}^\perp$ is the orthogonal projection matrix onto the orthogonal complementary subspace of $\mathcal{V}$. Since $\{x_k\}_k$ is the state trajectory starting from the zero initial state, $x_k$ for every $k$ is staying in the controllable subspace, and thus $||P_{\mathcal{C}}^\perp x_k|| = 0$. Using Lemma \ref{lm:projection}, we have
\begin{equation}\label{eq:dist(Vstar,xk)}
||P_{\mathcal{V}^*}^\perp x_k|| \leq \epsilon_2, \forall k \in \mathbb{N}
\end{equation}
for some $\epsilon_2$. 

Take an arbitrary $(\hat{M},\hat{N}) \in \mathcal{F}(\mathcal{V})$, given the resulted state trajectory $\{x_k\}_k$, we can always rewrite $B^a_{q^a} a_k$ as
\begin{equation}\label{eq:rewriteBa}
B^a_{q^a} a_k = B^a_{q^a} \hat M x_k + B^a_{q^a} \hat{N}\tilde{a}_k + B^a_{q^a} \Delta a_k
\end{equation} 
where $\tilde{a}_k$ is such that
\begin{equation}\label{eq:LSPCost}
B^a_{q^a}\hat{N}\tilde{a}_k = \arg\min||B^a_{q^a} a_k - B^a_{q^a} \hat M x_k - B^a_{q^a} \hat N \tilde{a}_k||.
\end{equation}
%under the constraint that 
%\begin{equation}\label{eq:LSPCons}
%	{B^a_{q^a}\hat{N}\tilde{a}_k \in B^a_{q^a}\text{ker}\{D^a_{q^a}\}\cap \mathcal{V}}.
%\end{equation}
%Note that such $\tilde{a}_k$ exists as the least squares problem is well-posed.

Now, we will prove that $||B^a_{q^a} \Delta a_k||$ is bounded when $\{a_k\}_{k=0}^\infty$ satisfies $||\mathbf{y}^{q^a}_k(0, \{a_k\}_{k=0}^\infty,0)|| \leq \epsilon$ for all $k \in \mathbb{N}$. Observe that the resulted state trajectory satisfies
\begin{equation}\label{eq:midstep1}
\begin{aligned}
&P_{\mathcal{V}}^\perp x_{k+1} = P_{\mathcal{V}}^\perp (Ax_k + B^a_{q^a} a_k) \\
&= P_{\mathcal{V}}^\perp (A+B^a_{q^a}\hat M) P_{\mathcal{V}}^\perp x_k + P_{\mathcal{V}}^\perp (A+B^a_{q^a} \hat M) P_{\mathcal{V}} x_k \\
&+ P_{\mathcal{V}}^\perp B^a_{q^a} \hat N \tilde{a}_k +  P_{\mathcal{V}}^\perp B^a_{q^a} \Delta a_k \\
&= P_{\mathcal{V}}^\perp (A+B^a_{q^a} \hat M) P_{\mathcal{V}}^\perp x_k +  P_{\mathcal{V}}^\perp B^a_{q^a} \Delta a_k
\end{aligned}
\end{equation}
By (\ref{eq:dist(V,xk)}), we have
\begin{equation} \label{eq:BDaInVperpisSmall}
||P_\mathcal{V}^\perp B^a_{q^a}\Delta a_k|| \leq \epsilon_1 + ||P_{\mathcal{V}}^\perp (A+B^a_{q^a}\hat M)|| \epsilon_1 .
\end{equation}
%It should be remarked that  $||P_{\mathcal{V}}^\perp (A+B\hat M)||$ is independent of the choice of $\hat M$. If we take a different choice $\bar M$, range of $B(\hat M - \bar M)$ is in $\mathcal{V}$, which means $P_\mathcal{V}^\perp B(\hat M - \bar M) = 0$. 
We also observe that 
\begin{equation}\label{eq:midstep2}
\begin{aligned}
y_k &= Cx_k + D^a_{q^a} a_k \\
&= (C+D^a_{q^a} \hat M)x_k + D^a_{q^a} \hat N \tilde{a}_k + D^a_{q^a}\Delta a_k\\
&= (C+D^a_{q^a}\hat M)P_\mathcal{V}^\perp x_k + D^a_{q^a} \Delta a_k,
\end{aligned}
\end{equation}
where we know 
\begin{equation} \label{eq:DDaisSmall}
||D^a_{q^a}\Delta a_k|| \leq \epsilon + ||C+D^a_{q^a}\hat{M}||\epsilon_1.
\end{equation} 
%It should be remarked that $||C+D\hat{M}||$ is independent of the choice of $\hat M$: if we take a different choice $\bar M$, since range of $\hat{M} - \bar M$ is in the range of $\bar N$, which is contained in the null space of $D$. 
%Now we have bounds on $||P_\mathcal{V}^\perp B^a_{q^a}\Delta a_k||$ as well as $||P_{B^a_{q^a} \text{ker}\{D\}}^\perp B^a_{q^a} \Delta a_k||$, and it followed that we can obtain a bound 
By (\ref{eq:BDaInVperpisSmall}), (\ref{eq:DDaisSmall}), and Lemma \ref{lm:projection}, there is a $\bar \beta$ ensuring $||P_{B^a_{q^a}\text{ker}\{D^a_{q^a}\}\cap\mathcal{V}}^\perp B^a_{q^a}\Delta a_k||\leq \bar \beta$ for all $t$. Recall the definition of $\Delta a_k$, $B^a_{q^a}\Delta a_k$ is in the orthogonal complementary subspace of $B^a_{q^a}\text{ker}\{D^a_{q^a}\} \cap \mathcal{V}$ by the optimality of (\ref{eq:LSPCost})%and (\ref{eq:LSPCons})
, and thus $||B^a_{q^a}\Delta a_k|| \leq \bar \beta$ for all $k$. 

Then, we will find the bound on $||z_k||$. If we define $\xi_k$ as $P_{\mathcal{V}^*} x_k$, we can observe that
\begin{equation}\label{eq:xiDynamics}
\begin{aligned}
&\xi_{k+1} = P_{\mathcal{V}^*} x_{k+1} = P_{\mathcal{V}^*} (Ax_k + B^a_{q^a} a_k) \\
&= P_{\mathcal{V}^*} (A+B^a_{q^a}\hat M) x_k + P_{\mathcal{V}^*} B^a_{q^a}\hat N \tilde{a}_k + P_{\mathcal{V}^*} B^a_{q^a} \Delta a_k\\
&= (A+B^a_{q^a}\hat M) \xi_k + B^a_{q^a}\hat N \tilde{a}_k + \zeta_k
\end{aligned} 
\end{equation} 
where 
\begin{equation}
	\zeta_k = P_{\mathcal{V}^*} \left[(A+B^a_{q^a}\hat M)P_{\mathcal{V}^*}^\perp x_k + B^a_{q^a}\Delta a_k\right].
\end{equation}
From (\ref{eq:xiDynamics}), we can write $\xi_k$ as
\begin{equation}
\begin{aligned}
\xi_k = &\Sigma_{j = 0}^{k-1} (A+B^a_{q^a}\hat{M})^{k-j-1}B^a_{q^a}\hat{N}\tilde{a}_j \\ &+\Sigma_{j = 0}^{k-1} (A+B^a_{q^a}\hat{M})^{k-j-1}\zeta_j
\end{aligned}
\end{equation}
Therefore, the resulted $\{z_k\}_k$ satisfies that
\begin{equation}\label{eq:ExpandZT}
\begin{aligned}
z_k=& (E+F^a_{q^a}\hat{M})\xi_k + (E+F^a_{q^a}\hat{M})P_{\mathcal{V}^*}^\perp x_k \\
&+ F^a_{q^a}\hat{N} \tilde{a}_k + F^a_{q^a}\Delta a_k \\
=& \Sigma_{j = 0}^{k-1} (E+F^a_{q^a}\hat{M})(A+B^a_{q^a}\hat{M})^{k-j-1}B^a_{q^a}\hat{N}\tilde{a}_j \\
&+ \Sigma_{j=0}^{k-1} (E+F^a_{q^a}\hat{M})(A+B^a_{q^a}\hat{M})^{k-j-1}\zeta_j  \\
&+ (E+F^a_{q^a}\hat{M})P_{\mathcal{V}^*}^\perp x_k + F^a_{q^a}\hat{N}\tilde{a}_k + F^a_{q^a}\Delta a_k
\end{aligned}
\end{equation}
Recall the condition that for any $(M,N) \in \mathcal{F}(\mathcal{V})$, $L_{(M,N)}^{\Sigma^\Delta} = 0$, which further implies
\begin{equation} \label{eq:UsefulCond}
\begin{aligned}
F^a_{q^a}N &= 0 \\
(E+F^a_{q^a}M)(A+B^a_{q^a}M)^j B^a_{q^a}N &= 0 , \forall j \in \mathbb{N}
\end{aligned}
\end{equation}
for any $(M,N) \in \mathcal{F}(\mathcal{V})$ by Caylay-Hamilton theorem. Therefore, the first term and the forth term of $z_k$ in (\ref{eq:ExpandZT}) always vanish no matter what $(\hat M, \hat N)$ we choose. %To see this, suppose $(\bar M, \bar N) \in \mathcal{F}(\mathcal{V})$ is chosen. $bar M$ can be written as $\hat M + \hat N K$ for some $K$, and $\bar{N}$ can be written as $\hat{N} R$ for some matrix $R$. Thus, $E+F\bar{M} = E+F\hat{M} + F\hat{N}K = E+F\hat{M}$. $(E+F\bar{M})(A+B\bar M)^j B\bar N = $
%We will find the bound on each of the rest terms. %, and we will show that the bound is independent of the choice of $(\hat M, \hat N)$. 

The third term of (\ref{eq:ExpandZT}) is bounded according to (\ref{eq:dist(Vstar,xk)}):
\begin{equation}
\begin{aligned}
||(E+F^a_{q^a}\hat{M})P_{\mathcal{V}^*}^\perp x_k|| &\leq ||E+F^a_{q^a}\hat M||||P_{\mathcal{V}^*}^\perp x_k|| \\
&\leq ||E+F^a_{q^a}\hat M|| \epsilon_2.
\end{aligned}
\end{equation}
%Note that $||E+F\hat M||$ is independent of the choice of $(\hat M, \hat N)$. To see this, suppose $(\bar M, \bar N) \in \mathcal{F}(\mathcal{V})$ is chosen. $\bar M$ can be written as $\hat M + \hat N K$ for some $K$. Thus, $E+F\bar{M} = E+F\hat{M} + F\hat{N}K = E+F\hat{M}$ by (\ref{eq:UsefulCond}).

A bound on the fifth term of (\ref{eq:ExpandZT}) can be obtained by using the bound on $||B^a_{q^a}\Delta a_k||$ and $||D^a_{q^a}\Delta a_k||$ as well as the condition (\ref{eq:NecessaryCond1}) and Lemma \ref{lm:projection}.

Last, let us look at the second term of (\ref{eq:ExpandZT}). Since $\zeta_j$ is contained in $\mathcal{V}^*$, it can be written as $V^*\gamma_t$. As we can see $||\zeta_k||$ is bounded for all $k$, so is $||\gamma_k||$. Then, 
\begin{equation}
\begin{aligned}
&\Sigma_{j=0}^{k-1} (E+F^a_{q^a}\hat{M})(A+B^a_{q^a}\hat{M})^{k-j-1}\zeta_j \\
=&\Sigma_{j=0}^{k-1} (E+F^a_{q^a}\hat{M}) V^* (A+B^a_{q^a}\hat M)|_{V^*} {}^{k-j-1} \gamma_t
\end{aligned}
\end{equation}
Write the $(A+B^a_{q^a}\hat M)|_{V^*}$ in the Jordan form
\begin{equation}
(A+B^a_{q^a}\hat M)|_{V^*} = [G_1 \,\ G_2] \begin{bmatrix}
J_1 &0 \\
0 &J_2
\end{bmatrix} [G_1 \,\ G_2]^{-1}
\end{equation}
such that $J_1$ includes all the Jordan blocks associated with the stable eigenvalues (the eigenvalues whose magnitude are strictly less than one). According to our condition, $(E+F^a_{q^a}\hat{M})V^*G_2 = 0$. Therefore, the second term of (\ref{eq:ExpandZT}) can be further rewritten as
\begin{equation}
\Sigma_{j = 0}^{k-1} [(E+F^a_{q^a}\hat{M})V^* G_1 J_1^{t-j-1} \,\ \,\ 0] [G_1 \,\ G_2]^{-1} \gamma_j
\end{equation}
which has an upper bound for all $k$ because $J_1$ is the Jordan block of stable eigenvalues and $\gamma_j$ is bounded for all $j$.

To summarize, when none of conditions (i)-(iii) holds, $||y_k|| < \epsilon$ for all $k$ implies that there is a $\delta (\epsilon)$ such that $||z_k||< \delta$, which means the system is not vulnerable to attack mode $q^a$. Hence, we have concluded the necessity part of the poof. 

%%%%%%%%%%%%%%%%%%%%%%%%%%%%%%%%%%%%%%%%%%%%%%%%%%%%%%%%%%%%%%%%%%%%%%%%%%
\section*{Appendix D}
To show sufficiency, we just need Lemma \ref{lm:ParA+BM} and see that for any $(M,N) \in \mathcal{F}(\mathcal{V})$, $(E + F^a_{q^a}M)V^*= (E + F^a_{q^a} \hat M)V^*$ when conditions (i) and (ii) of Theorem~\ref{thm:Detectability} do not hold, which can be derived from Lemma \ref{lm:Friends}.

Now we show necessity. First, we should see that there should be uncontrollable (we mean not $((A+B^a_{q^a}\hat M)|_{V^*}, \hat B_N)$-controllable throughout this proof) and unstable eigenvalue. If it is not the case, either there are no unstable eigenvalues or all unstable eigenvalues are controllable. However, when condition (ii) of Theorem~\ref{thm:Detectability} does not hold, every generalized right eigenspace of a controllable eigenvalue of $(A+B^a_{q^a}M)|_{V^*} = (A+B^a_{q^a}\hat M)|_{V^*}+\hat B_N K$ is contained in the controllable subspace (no matter which $K$ is chosen), which is contained in the null space of $(E+F^a_{q^a}\hat M) V^*$. However, it is contradictory to condition (iii) of Theorem~\ref{thm:Detectability}. Hence, condition (iii) of Theorem~\ref{thm:Detectability} holds only if there is a $K$ such that the generalized right eigenspace of an uncontrollable unstable eigenvalue of $(A+B^a_{q^a}\hat M)|_{V^*}+\hat B_N K$ is not contained in $\text{ker}\{(E+F^a_{q^a}\hat M)V^*\}$.

%%%%%%%%%%%%%%%%%%%%%%%%%%%%%%%%%%%%%%%%%%%%%%%%%%%%%%%%%%%%%%%%%%%%%%%%%%
\section*{Appendix E}

First, we should note that by using the augmented system $\Sigma^\Delta_{pq}$ in (\ref{eq:AugSys}), attack modes $p$ and $q$ are indiscernible (NOT discernible) if and only if for any $\epsilon > 0$ and $\delta > 0$, there are $x^{pq}_0$ and $\{a^{pq}_k\}_k$ such that 
\begin{equation}\label{eq:indiscernibleAttack}
\begin{aligned}
	&||\mathbf{y}^{pq}_k(x^{pq}_0, \{a^{pq}_k\}_{k=0}^\infty)|| < \epsilon, \,\ \forall k \in \mathbb{N} \\
	&||\mathbf{z}^{pq}_{k_z}(x^{pq}_0, \{a^{pq}_k\}_{k=0}^\infty)|| \geq \delta, \,\ \text{for some } k_z.
\end{aligned}
\end{equation}

Suppose condition (i) does not hold. It is clear that by setting $x^{pq}_0 = 0$, there is an input sequence $\{a^{pq}_k\}_k$ such that it affects $\{z^{pq}_k\}_k$, but it results in identically zero $\{x^{pq}_k\}_k$ and $\{y^{pq}_k\}_k$. By scaling this $\{a^{pq}_k\}_k$, (\ref{eq:indiscernibleAttack}) can be satisfied. 

Suppose condition (ii) does not hold. Then, there are $(\hat M, \hat N) \in \mathcal{F}(\mathcal{V}_{pq})$, $x^{pq}_0 \in \mathcal{V}_{pq}$ and $\tilde{a}_0$ such that 
\begin{equation}
\begin{aligned}
	x_1 &= (A_{pq}+B^a_{pq}\hat M) x^{pq}_0 + B^a_{pq} \hat N \tilde{a} \in \mathcal{V}_{pq}, \\
	y_0 &= (C_{pq}+D^a_{pq}\hat M)x^{pq}_0 + D^a_{pq} N \tilde{a} = 0, \\
	z_0 &= (E_{pq}+F^a_{pq}\hat M)x^{pq}_0 + F^a_{pq} N \tilde{a} \neq 0.
\end{aligned}
\end{equation}
By the property of $\mathcal{V}_{pq}$ and $x_1 \in \mathcal{V}_{pq}$, there is an input sequence $\{a^{pq}_k\}_{k=1}^\infty$ which makes $\mathbf{y}^{pq}_k(x^{pq}_1, \{a^{pq}_k\}_{k=1}^\infty)$ identically zero. By augmenting $a^{pq}_0 = \hat M x^{pq}_0 + \hat N \tilde{a}$ with $\{a^{pq}_k\}_{k=1}^\infty$ and scaling this $\{a^{pq}_k\}_{k=0}^\infty$, (\ref{eq:indiscernibleAttack}) can be satisfied. Now, we have concluded the necessity. 

For sufficiency, we will show that under conditions (i) and (ii), $||\mathbf{y}^{pq}_k(x^{pq}_0, \{a^{pq}_k\}_{k=0}^\infty)|| < \epsilon$ for all $k$ can imply that there is a $\delta$ such that $||\mathbf{z}^{pq}_k(x^{pq}_0, \{a^{pq}_k\}_{k=0}^\infty)|| < \delta$ for all $k$. The majority of the steps are similar to the proof of Theorem~\ref{thm:Detectability}. By $||\mathbf{y}^{pq}_k(x^{pq}_0, \{a^{pq}_k\}_{k=0}^\infty)|| < \epsilon, \forall k$, there is an $\epsilon_1$ so that
\begin{equation}\label{eq:dist(Vpq,xk)}
||P_{\mathcal{V}_{pq}}^\perp x^{pq}_k|| \leq \epsilon_1, \forall k \in \mathbb{N}.
\end{equation}
Take an arbitrary $(\hat{M},\hat{N}) \in \mathcal{F}(\mathcal{V})$, given the resulted state trajectory $\{x^{pq}_k\}_k$, we can always rewrite $B^a_{pq} a^{pq}_k$ as
\begin{equation}\label{eq:rewriteBpqa}
B^a_{pq} a^{pq}_k = B^a_{pq} \hat M x^{pq}_k + B^a_{pq} \hat{N}\tilde{a}^{pq}_k + B^a_{pq} \Delta a^{pq}_k
\end{equation} 
where $\tilde{a}^{pq}_k$ is such that
\begin{equation}\label{eq:pqLSPCost}
B^a_{pq}\hat{N}\tilde{a}^{pq}_k = \arg\min||B^a_{pq} a^{pq}_k - B^a_{pq} \hat M x^{pq}_k - B^a_{pq} \hat N \tilde{a}^{pq}_k||.
\end{equation}
%subject to
%\begin{equation}\label{eq:pqLSPCons}
%{B^a_{pq}\hat{N}\tilde{a}^{pq}_k \in B^a_{pq}\text{ker}\{D^a_{pq}\}\cap \mathcal{V}}.
%\end{equation}
Apply the same argument that has been applied for (\ref{eq:midstep1})-(\ref{eq:DDaisSmall}), we can obtain bounds on $||B^a_{pq} \Delta a^{pq}_k||$ and $||D^a_{pq} \Delta a^{pq}_k||$. Then, we rewrite the resulted $\{z^{pq}_k\}_k$ as
\begin{equation}\label{eq:expandZpq}
\begin{aligned}
z^{pq}_k=& (E_{pq}+F^a_{pq}\hat{M})P_{\mathcal{V}_{pq}} x_k + (E_{pq}+F^a_{pq}\hat{M})P_{\mathcal{V}_{pq}}^\perp x_k \\
&+ F^a_{pq}\hat{N} \tilde{a}_k + F^a_{q^a}\Delta a_k. \\
%=& \Sigma_{j = 0}^{k-1} (E+F^a_{q^a}\hat{M})(A+B^a_{q^a}\hat{M})^{k-j-1}B^a_{q^a}\hat{N}\tilde{a}_j \\
%&+ \Sigma_{j=0}^{k-1} (E+F^a_{q^a}\hat{M})(A+B^a_{q^a}\hat{M})^{k-j-1}\zeta_j  \\
%&+ (E+F^a_{q^a}\hat{M})P_{\mathcal{V}^*}^\perp x_k + F^a_{q^a}\hat{N}\tilde{a}_k + F^a_{q^a}\Delta a_k
\end{aligned}
\end{equation}
By condition (i) and Lemma~\ref{lm:projection}, as well as the boundedness of $||B^a_{pq} \Delta a^{pq}_k||$ and $||D^a_{pq} \Delta a^{pq}_k||$, we know the last term of (\ref{eq:expandZpq}) can be bounded. By condition (ii), the first term and the third term vanish. Last, the second term is bounded by (\ref{eq:dist(Vpq,xk)}), and we can conclude that under conditions (i) and (ii), $||\mathbf{y}^{pq}_k(x^{pq}_0, \{a^{pq}_k\}_{k=0}^\infty)|| < \epsilon$ for all $k$ can imply that the boundedness of $||\mathbf{z}^{pq}_k(x^{pq}_0, \{a^{pq}_k\}_{k=0}^\infty)||$.

%%%%%%%%%%%%%%%%%%%%%%%%%%%%%%%%%%%%%%%%%%%%%%%%%%%%%%%%%%%%%%%%%%%%%%%%%%
\section*{Appendix F}
Statement (ii) implies statement (i), clearly. Thus, we only need to show the other direction. 
If 
\begin{equation}
||\mathbf{y}^{pq}_k(x^{pq}_{k_0}, \{a^{pq}_k\}_{k=k_0}^\infty)|| < \epsilon
\end{equation}
for all $k \in \{k_0, ..., k_0 + n + 1\}$, by the property of $\mathcal{V}_{pq}$, $||P_{\mathcal{V}_{pq}}^\perp x_{k_0}||$ and $||P_{\mathcal{V}_{pq}}^\perp x_{k_0 + 1}||$ are bounded by some $\epsilon_1$. By taking an arbitrary $(\hat{M},\hat{N}) \in \mathcal{F}(\mathcal{V}_{pq})$, we can rewrite $B^a_{pq} a^{pq}_{k_0}$ as
\begin{equation}\label{eq:rewriteBpqa0}
B^a_{pq} a^{pq}_{k_0} = B^a_{pq} \hat M x^{pq}_{k_0} + B^a_{pq} \hat{N}\tilde{a}^{pq}_{k_0} + B^a_{pq} \Delta a^{pq}_{k_0}
\end{equation} 
where $\tilde{a}^{pq}_{k_0}$ is such that
\begin{equation}
B^a_{pq}\hat{N}\tilde{a}^{pq}_{k_0} = \arg\min||B^a_{pq} a^{pq}_{k_0} - B^a_{pq} \hat M x^{pq}_{k_0} - B^a_{pq} \hat N \tilde{a}^{pq}_{k_0}||.
\end{equation}
Apply the similar argument that has been applied for (\ref{eq:midstep1})-(\ref{eq:DDaisSmall}), we can obtain bounds on $||B^a_{pq} \Delta a^{pq}_{k_0}||$ and $||D^a_{pq} \Delta a^{pq}_{k_0}||$. Then, we can observe that
\begin{equation}
\begin{aligned}
z^{pq}_{k_0}=& (E_{pq}+F^a_{pq}\hat{M})P_{\mathcal{V}_{pq}} x_{k_0} + (E_{pq}+F^a_{pq}\hat{M})P_{\mathcal{V}_{pq}}^\perp x_{k_0} \\
&+ F^a_{pq}\hat{N} \tilde{a}_{k_0} + F^a_{q^a}\Delta a_{k_0} \\
\end{aligned}
\end{equation}
is bounded by Theorem~\ref{thm:Discernibility} and the similar argument that has been applied for (\ref{eq:expandZpq}). 
\end{document}